\pgfplotsset{compat=1.18}
\newtheorem{lemma}{Lemma}[section]
\newtheorem{remark}{Remark}
\newtheorem{defin}{Definition}
\newcommand{\s}{{ \boldsymbol \sigma}}
\newcommand{\z}{{ \boldsymbol z}}
\newcommand{\SOMMA}[2]{\displaystyle\sum\limits_{#1}^{#2}}
\def\z{z^{\dag}}
\def\w{w^{\dag}}
\def\s{s^{\dag}}
\def\pz{p^{\z}}
\def\ps{p^{\s}}
\def\pw{p^{\w}}
\def\pdags{\bar{p}^{\s}}
\def\pdagw{\bar{p}^{\w}}
\def\qsigma{\bar{q}^{\sigma}}
\def\qphi{\bar{q}^{\phi}}
\def\qtau{\bar{q}^{\tau}}
\def\a{\tilde{a}}
\def\r{\rangle}
\def\l{\langle}
\def\P{\bar{P}}
\def\q{\bar{q}}
\def\b{\beta^{'}}
\def\P{\mathbb{P}}
\DeclareMathOperator{\sign}{sign}
\DeclareMathOperator{\erf}{erf}
\def\a{g_{\sigma\tau}}
\def\b{g_{\sigma\phi}}
\def\c{g_{\tau\phi}}
\title{\boldmath Supervised and Unsupervised protocols for hetero-associative neural networks}
\author[a,b,e]{Andrea Alessandrelli,}
\author[c,d,e]{Adriano Barra,}
\author[f]{Andrea Ladiana,}
\author[f]{Andrea Lepre,}
\author[g,h,e]{Federico Ricci-Tersenghi.}
\affiliation[a]{Dipartimento di Informatica, Università di Pisa, Pisa Italy.}
\affiliation[b]{Istituto Nazionale di Fisica Nucleare, Sezione di Lecce, Italy.}
\affiliation[c]{Dipartimento di Scienze di Base Applicate all'Ingegneria, Sapienza Universit\`a di Roma, Rome, Italy.}
\affiliation[d]{Istituto Nazionale di Fisica Nucleare, Sezione di Roma1, Italy}
\affiliation[e]{Istituto Nazionale d'Alta Matematica, GNFM, Roma, Italy.}
\affiliation[f]{Dipartimento di Matematica e Fisica, Università del Salento, Lecce, Italy.}
\affiliation[g]{Dipartimento di Fisica, Sapienza Universit\`a di Roma, Roma, Italy.}
\affiliation[h]{CNR-Nanotec, Rome unit, 00185 Roma, Italy.}
\abstract{This paper introduces a learning framework for Three-Directional Associative Memory (TAM)  models, extending the classical Hebbian paradigm to both supervised and unsupervised protocols within an hetero-associative setting. These neural networks consist of three interconnected layers of binary neurons interacting via generalized Hebbian synaptic couplings that allow learning, storage and retrieval of structured triplets of patterns. By relying upon glassy statistical mechanical techniques (mainly  replica theory and Guerra interpolation), we analyze the emergent computational properties of these networks, at work with random (Rademacher) datasets and at the replica-symmetric level of description: we obtain a set of self-consistency equations for the order parameters that quantify the critical dataset sizes (i.e. their thresholds for learning) and describe the retrieval performance of these networks, highlighting the differences between supervised and unsupervised protocols. Numerical simulations validate our theoretical findings and demonstrate the robustness of the captured picture about TAMs also at work with structured datasets. In particular, this study provides insights into the cooperative interplay of layers, beyond that of the neurons within the layers, with potential implications for optimal design of artificial neural network architectures.
}
\begin{document}
\maketitle
\flushbottom

\section{Introduction} \label{sec:intro}

\textcolor{black}{The attractor dynamics inherent in Hopfield models serves as a cornerstone for understanding how neural networks store and retrieve patterns of information. In classical Hopfield systems, if the load and the noise are not too high, each attractor represents a stable state toward which the network converges despite the presence of perturbations. This concept is not only central to artificial neural computation but also finds significant parallels in biological processes
\cite{piermarocchi2,piermarocchi1}.}
While Hopfield networks excel at recalling individual patterns through energy minimization \cite{Amit}, real-world cognition often requires integrating distinct data streams—a capability better captured by hetero-associative models \cite{kosko1988bidirectional}.

Hetero-associative neural networks, indeed, extend the principles of auto-associative memory models \cite{AGS2,AGS1,Amit,CKS}, such as the Hopfield network \cite{Hopfield}, by enabling associations between distinct patterns rather than reinforcing individual ones.  Traditional hetero-associative networks, namely the Bidirectional Associative Memories  (BAM) introduced by Kosko in the eighties, operate by establishing pairwise connections between two neuron layers, encoding input-output mappings via Hebbian-like synaptic updates \cite{BM1}: once one of its two layers is provided with a noisy pattern, not only that layer performs pattern recognition returning the denoised input but, also, the other layer retrieves the conjugate pattern of the presented one. While effective in simple pattern association tasks, these architectures lack the capacity to process higher-order relational structures, which are crucial in domains such as cognitive modeling, language processing and hierarchical memory retrieval \cite{MPV}. A natural extension, the Three-directional Associative Memory (TAM),  overcomes this limitation by introducing a tripartite network where three layers interact via generalized Hebbian couplings, enabling the storage and retrieval of triplets of patterns, that may give rise to complex structures\footnote{We stress that several composite signals are built by triplets of elementary inputs as e.g. in audio signal processing (where chords are built of by three notes in their simplest setting) or in video signal processing (where the primary colors are three).} \cite{albanese2022replica}. This tripartite architecture outperforms w.r.t.  bipartite systems in learning, storing and retrieving structured triplets of patterns, thus facilitating advanced tasks such as pattern disentanglement and frequency modulation \cite{TAMstoring,agliari2025networks}. Recent advancements in the statistical physics of neural networks further justify the need for extending associative memory models beyond bidirectional constraints, as such extensions improve robustness and versatility \cite{BarbierCamilli2,BarbierCamilli1,FachechiICLR25,alessandrelli2025beyond}.
\par\vspace{3mm}
From a neurobiological standpoint, the functional properties of the TAM model can be understood in light of two complementary mechanisms observed in the hippocampus: pattern completion and pattern separation \cite{kandel2000principles}. The CA3 subregion relies on strong recurrent connectivity to reactivate full stored patterns from minimal cues, mirroring the TAM’s capability to reconstruct entire configurations from incomplete inputs. Conversely, the dentate gyrus, via a process often described as “expansion recoding”, is believed to map partially overlapping or noisy inputs onto sufficiently distinct neural representations, thus reducing interference and improving discriminability—an ability that the TAM architecture leverages to disentangle similar patterns. By unifying these two processes, the TAM model aligns with core principles of hippocampal function, thereby enabling robust and biologically inspired pattern recognition and pattern disentanglement.
\par\vspace{3mm}
While in \cite{TAMstoring,agliari2025networks} we introduced the TAM and inspected its retrieval capabilities as pattern recognition and pattern disentanglement, the core contribution of the present study is the systematic examination of \emph{supervised and unsupervised hetero-associative Hebbian learning protocols}, i.e. the exploration of weight's  dynamics within the TAM networks and the consequent  collective acquired retrieval capabilities by their neurons. Supervised learning —characterized by explicit labeling of input patterns— allows the network to disambiguate overlapping information through externally provided guidance \cite{alemanno2023supervised,franz2001exact,agliariRedundantRepresentation,talagrand}. In contrast, unsupervised learning relies solely on statistical correlations among observed patterns, necessitating the emergence of self-organized feature representations \cite{agliariEmergence,FachechiCoolen,FachechiEarly}. Prior works on associative memories have demonstrated that learning efficacy and retrieval quality strongly depend on the interplay between dataset structure and synaptic plasticity \cite{LenkaNature,LenkaStructure,FedeMarino,barra2023thermodynamics,randomfeature,YasserBravi}. We extend these insights to TAM by deriving self-consistency equations for the order parameters of the network that capture the model’s macroscopic behavior under varying degrees of supervision as the control parameters are made to vary.

Methodologically, our approach integrates analytical techniques from statistical mechanics of spin glasses, including replica theory \cite{MPV,Amit,Fachechi1,barra2010replica,huang2017statistical} and Guerra’s interpolation method \cite{sumrule,Guerra1,Barra-Bipartiti,Fachechi1}, to achieve an exact expression for the free energy -and, thus, self-consistencies and computational phase transitions- of TAM networks at the fairly standard replica symmetric level of description \cite{Amit,CKS}. These theoretical results are complemented by Monte Carlo simulations, which validate the predicted performance across different training scenarios. Notably, our findings highlight the emergence of cooperative dynamics between layers, wherein dataset quality disparities influence retrieval capabilities in a non-trivial manner. Specifically, layers trained with low-entropy datasets benefit from interactions with layers receiving higher-quality inputs, leading to an overall enhancement of network-wide retrieval performance. This phenomenon, termed ``\emph{cooperativeness}", challenges conventional assumptions regarding the independence of memory layers in associative networks \cite{shwartz2017opening,alessandrelli2025beyond}.

Beyond theoretical implications, our work has practical relevance for artificial neural network architectures: the ability to model structured dependencies among multiple input sources is crucial for applications such as multimodal learning, hierarchical knowledge representation and neurobiologically inspired memory systems \cite{ramsauer2020hopfield,agliariRedundantRepresentation,krotov2020large}. By rigorously quantifying the conditions (e.g. noise level, maximal storage, dataset quantity and quality) under which TAM networks exhibit optimal learning —and thus retrieval— behavior, we provide a principled foundation for designing next-generation hetero-associative memories with enhanced generalization capabilities.
\par\vspace{3mm}
The remainder of this paper is organized as follows. In Section~\ref{sec:hebbLearning}, we formally introduce the TAM model and its Hebbian learning mechanisms, detailing the differences between supervised and unsupervised formulations. Section~\ref{sec:analyticFindings} presents the analytical results, including phase diagrams and critical dataset size estimates. In Section~\ref{sec:numFinding}, we describe numerical simulations that corroborate our theoretical predictions. Finally, Section~\ref{sec:cooperativeness} deepens the spontaneous cooperation that the various layers develop when provided with different amounts of information during training: in particular, at work with pattern recognition tasks, layers that experienced less information are helped by more expert layers (that have been provided with more information) in order to preserve overall robustness of the recall.

\section{Hebbian Learning for Three-Directional Associative Memories}
\label{sec:hebbLearning}
In the following, we provide a concise introduction to the Three-Directional Associative Memory, mirroring —and generalizing— the work \cite{TAMstoring}.
Consider three sets of binary neurons, denoted as \(\boldsymbol{\sigma}\equiv\{\sigma_{i}\}_{i=1,\ldots,N_1}\), \(\boldsymbol{\tau}\equiv\{\tau_{i}\}_{i=1,\ldots,N_2}\) and \(\boldsymbol{\phi}\equiv\{\phi_{i}\}_{i=1,\ldots,N_3}\), which interact in pairs through generalized Hebbian synaptic couplings ({\em vide infra}), allowing the storage of \(K\) triplets of patterns, i.e., \(\{\boldsymbol{\xi}^{\mu},\boldsymbol{\eta}^{\mu},\boldsymbol{\chi}^{\mu}\}_{ \mu=1,\ldots,K}\), where 
$$\boldsymbol \xi^{\mu} \in \{-1, +1\}^{N_1},\;\;\boldsymbol \eta^{\mu} \in \{-1, +1\}^{N_2},\;\;\boldsymbol \chi^{\mu} \in \{-1, +1\}^{N_3}$$ 
and their entries are Rademacher random variables drawn as
\begin{equation}
	\mathbb{P}(x_i^\mu)= \dfrac{1}{2}\delta(x_i^\mu-1)+\dfrac{1}{2}\delta(x_i^\mu+1)
\end{equation}
for $\mu = 1, \hdots, K$ and where $\bm x^\mu \in \{\bm\xi^\mu, \bm\eta^\mu,\bm\chi^\mu\}$.

The cost function of our model admits a Hamiltonian representation where the three layers of neurons mutually interact by the following prescription 
\begin{equation}\label{eq:hamTAM}
    \mathcal{H}_{\bm{N}}(\boldsymbol{\sigma},\boldsymbol{\tau},\boldsymbol {\phi}|\boldsymbol{\xi},\boldsymbol{\eta},\boldsymbol{\chi})=-\left(g_{\sigma\tau}\sum_{i ,j=1}^{N_1,N_2}J_{ij}^{\sigma\tau}\sigma_{i}\tau_{j}+g_{\sigma\phi}\sum_{i,j=1}^{N_1,N_3}J_{ij}^{ \sigma\phi}\sigma_{i}\phi_{j}+g_{\tau\phi}\sum_{i,j=1}^{N_2,N_3}J_{ij}^{\tau\phi}\tau_{i}\phi _{j}\right),
\end{equation}
    where \(\bm{g}=(g_{\sigma\tau},g_{\sigma\phi},g_{\tau\phi})\in\mathbb{R}^3\) modulates the relative strength of intra-layer interactions while the synaptic matrices are defined as follows:

    \begin{equation} \label{sinapsigmatau}
        J_{ij}^{\sigma\tau}=\frac{1}{\sqrt{N_1N_2}}\sum_{\mu=1}^{K}\xi_{i}^{\mu}\eta_{j}^{\mu};
    \end{equation}
    \begin{equation} \label{sinapsigmaphi}
        J_{ij}^{\sigma\phi}=\frac{1}{\sqrt{N_1N_3}}\sum_{\mu=1}^{K}\xi_{i}^{\mu}\chi_{j}^{\mu};
    \end{equation}
    \begin{equation} \label{sinapsitauphi}
        J_{ij}^{\tau\phi}=\frac{1}{\sqrt{N_2N_3}}\sum_{\mu=1}^{K}\eta_{i}^{\mu}\chi_{j}^{\mu}.
    \end{equation}
    These equations define the so-called  \emph{generalized}\footnote{The generalized hetero-associative Hebbian couplings defined in Eqs.~\eqref{sinapsigmatau}-\eqref{sinapsitauphi} also bear a notable resemblance to the cyclic coupling matrices \cite{personnaz1986collective,piermarocchi1}.} hetero-associative Hebbian couplings \cite{kosko1988bidirectional}. It is important to notice that the presence of the factors \(\bm{N}=(N_1,N_2,N_3)\) in the denominators ensures that the Hamiltonian remains linearly extensive in the number of neurons and that it scales correctly in the thermodynamic limit \(N_1, N_2, N_3\to\infty\), where our calculations will be carried out.  

\vspace{3mm}
\noindent
Conventional artificial learning methods—such as those employed in restricted Boltzmann machines \cite{Decelle21} (which utilize the {\em contrastive divergence} class of algorithms \cite{carreira2005contrastive,YasserBravi}), feed-forward networks \cite{hopfield1987learning} (which are trained via the {\em backpropagation} chain rule \cite{lecun1988theoretical}), and other similar techniques) generally necessitate the computation of the gradients of the loss functions with respect to the weights, as a result these approaches fundamentally rely on the evaluation of derivatives.
Conversely, \emph{Hebbian learning}—a simplified abstraction of biological learning—adopts an integral formulation: rather than depending on derivatives, it explicitly performs summation over training examples, a process inherently embedded in the corresponding cost functions. This summation-based approach leverages concentration of measure principles, adhering to the classical convergence properties established by the Central Limit Theorem (CLT).

\vspace{3mm}
\noindent
Building upon the framework established in \cite{agliariEmergence, alemanno2023supervised} for the standard Hopfield model, hereafter we introduce a theoretical formulation for Three-directional Hebbian learning, addressing both supervised and unsupervised settings (i.e., in the presence or absence of a teacher). Broadly speaking, this implies that the network is never directly exposed to the actual patterns (that we call {\em archetypes} for the sake of clearness) —which it never encounters— but is instead presented solely with noisy instances of these patterns. Consequently, the network must correctly infer the underlying archetypes hidden within these corrupted versions once {\em sufficient information} has been accumulated.

In this context the terms \(\{\boldsymbol{\xi}^{\mu},\boldsymbol{\eta}^{\mu},\boldsymbol{\chi}^{\mu}\}_{ \mu=1,\ldots,K}\) play the role of archetype and we introduce respectively $M_1, M_2$ and $M_3$ examples for each archetype of each dataset\footnote{
	Hereafter we will refer to $\bm\xi,\bm\eta,\bm\chi$ as dataset.} and we denote them as 
\begin{equation}
    \{\bm\Xi^{\mu,a_1}\}^{\mu=1,\hdots,K}_{a_1=1,\hdots,M_1},\;\; \{\bm\Theta^{\mu,a_2}\}^{\mu=1,\hdots,K}_{a_2=1,\hdots,M_2},\;\;\{\bm\Upsilon^{\mu,a_3}\}^{\mu=1,\hdots,K}_{a_3=1,\hdots,M_3};
\end{equation}
these are corrupted versions of the related archetypes, and their entries are drawn as
\begin{equation}
	\mathbb{P}(X_i^{\mu,a})= \dfrac{1+r}{2}\delta(X_i^{\mu,a}-x_i^\mu)+\dfrac{1-r}{2}\delta(X_i^{\mu,a}+x_i^\mu)
\end{equation}
where $\bm X^{\mu,a}\in \{\bm\Xi^{\mu,a_1}, \bm\Theta^{\mu,a_2}, \bm\Upsilon^{\mu,a_3}\}$, $\bm x^\mu \in \{\bm\xi^\mu, \bm\eta^\mu,\bm\chi^\mu\}$,  in such a way that $r \in [0, 1]$ assesses the training-set quality\footnote{We specify that each dataset could have its own quality, denoted by $r_k$ with $k=1,2,3$.}, that is, as $r \to 1$ the example matches perfectly the archetype, whereas for $r \to 0$ an example is, on average, orthogonal to the related
archetype. 

\noindent
The network is supplied with the three datasets $\bm\Xi$, $\bm\Theta$, and $\bm\Upsilon$ and our goal is to reconstruct the three unknown archetypes $\bm\xi$, $\bm\eta$, and $\bm\chi$.

\vspace{3mm}
\noindent
A direct implication of working with examples rather than patterns is that the Hebbian storage rule for reverberation \cite{centonze2024statistical} must be extended into a generalized Hebbian learning rule for reverberation, leading to two distinct scenarios: the supervised and the unsupervised settings.
\newline
In the supervised case, the presence of a teacher enables differentiation among various categories (i.e., different archetypes) before the gathered information is supplied to the network. Consequently, the natural choice for the synaptic matrix\footnote{For simplicity, we consider only the synaptic matrix between the layer $\bm\sigma$ and the layer $\bm\tau$ but the same generalization applies to all of them.} is thus
\begin{equation}
        J_{ij}^{\textit{(sup)}}  \sim \SOMMA{\mu=1}{K}\SOMMA{a=1}{M_1}\Xi_i^{\mu,a}\SOMMA{b=1}{M_2}\Theta_j^{\mu,b}. 
    \end{equation}
    
In the case of unsupervised Hebbian learning, the examples are directly summed up in the Hamiltonian of the model, resulting in the following form of the related synaptic matrix:

\begin{equation}
    J_{ij}^{\textit{(unsup)}} \sim \SOMMA{\mu=1}{K}\SOMMA{a=1}{M}\Xi_i^{\mu,a}\Theta_j^{\mu,a}.
\end{equation}

This expression implements the idea that a teacher — able to distinguish a priori the different categories (i.e., the different archetypes, in our jargon) to which the examples belong — is not available. Hence, the best we can do is to supply the network with all the available information directly, in the form of a linear superposition of all the examples.

In the following, we investigate the information-processing capabilities of TAM in both cases, tackling the problem from a statistical mechanics perspective.

\section{Theoretical Findings} \label{sec:analyticFindings}
In the following, we present the cost functions (i.e. the Hamiltonians) for both the supervised and unsupervised settings and analyze the two scenarios these cost functions give rise to.

\vspace{3mm}
\noindent
The Hamiltonian (or {\em cost function}) of the TAM model in the supervised setting is given by
\begin{equation}\label{eq:H-TAM-sup}
\begin{array}{lll}
     \mathcal{H}_{\bm N, \bm M, \bm r}^{(sup)}(\bm\sigma,\bm\tau,\bm\phi|\bm\Xi,\bm\Theta,\bm\Upsilon)=-\SOMMA{\mu=1}{K}&\left(\dfrac{\a\sqrt{(1+\rho_1)(1+\rho_2)}}{\sqrt{N_1 N_2}}\SOMMA{i,j=1}{N_1,N_2}\hat\xi_i^\mu\hat\eta_j^\mu\sigma_i\tau_j\right.
     \\\\ \noalign{\vspace{-10pt}}
     &\left. + \dfrac{\b\sqrt{(1+\rho_1)(1+\rho_3)}}{\sqrt{N_1 N_3}}\SOMMA{i,j=1}{N_1,N_3}\hat\xi_i^\mu\hat\chi_j^\mu\sigma_i\phi_j\right.
     \\\\ \noalign{\vspace{-10pt}}
     &\left. + \dfrac{\c\sqrt{(1+\rho_2)(1+\rho_3)}}{\sqrt{N_2 N_3}}\SOMMA{i,j=1}{N_2,N_3}\hat\eta_i^\mu\hat\chi_j^\mu\tau_i\phi_j\right),
\end{array}
\end{equation}
where $\bm M=(M_1,M_2,M_3)$, $\bm r=(r_1,r_2,r_3)$ and
we assume $N_1 \geq N_2\geq N_3$. 
\\
\noindent
Furthermore, for compactness and practicality, we have defined
\begin{equation}
\label{eq:rho_def}
\begin{array}{llll}
        \rho_k = \dfrac{1-r_k^2}{M_k r_k^2}\;\;\;\;\mathrm{with}\;k=1,2,3
\end{array}
\end{equation}
which represents the dataset entropy (see \cite{Aquaro-Pavlov}), and

\begin{equation}
\label{eq:medie}
\begin{array}{llll}
\hat{\xi}_i^\mu= \dfrac{1}{M_1 r_1 (1+\rho_1)}\SOMMA{a=1}{M_1}\Xi_i^{\mu,a}\,;
    \\\\ \noalign{\vspace{-10pt}}
         \hat{\eta}_i^\mu= \dfrac{1}{M_2 r_2 (1+\rho_2)}\SOMMA{a=1}{M_2}\Theta_i^{\mu,a}\,;
     \\\\ \noalign{\vspace{-10pt}}
         \hat{\chi}_i^\mu= \dfrac{1}{M_3 r_3 (1+\rho_3)}\SOMMA{a=1}{M_3}\Upsilon_i^{\mu,a}\,.
\end{array}
\end{equation}
Note that this network is made for the retrieval of triplets of patterns, one per layer, but —during training— each layer may experience a different amount of information, namely the one contained in the dataset it is exposed to (and there is one dataset per layer): by accounting for this potential disparity during training, ultimately the network gives rise to cooperativity among its layers —once it is used to perform pattern recognition— such that layers trained with huge information help layers trained with poor information in retrieving the corresponding pattern, we deepen this phenomenon in Sec. \ref{sec:cooperativeness} ( see also \cite{alessandrelli2025beyond}).
\newline
The expressions in Equation \(\eqref{eq:medie}\) represent the empirically computed averages of the examples \(\Xi_i^{\mu,a}\), \(\Theta_i^{\mu,a}\), and \(\Upsilon_i^{\mu,a}\), rescaled to properly account for the factors \(M_k\), \(r_k\), and \(\rho_k\) (with \(k = 1,2,3\)). 

\vspace{3mm}
\noindent
The Hamiltonian (or {\em cost function}) of the TAM model in the unsupervised setting reads as
\begin{equation}\label{eq:H-TAM-unsup}
	\begin{array}{lll}
		& \mathcal{H}_{\bm N, M, \bm r}^{(unsup)}(\bm\sigma,\bm\tau,\bm\phi|\bm\Xi,\bm\Theta,\bm\Upsilon)=
		\\\\ \noalign{\vspace{-10pt}}
		& -\SOMMA{\mu=1}{K} \left( \dfrac{\a}{Mr_1 r_2\sqrt{(1+\rho_1)(1+\rho_2)}\sqrt{N_1 N_2}}\SOMMA{a=1}{M}\SOMMA{i,j=1}{N_1,N_2}\Xi_i^{\mu,a}\Theta_j^{\mu,a}\sigma_i\tau_j+\right.
		\\\\ \noalign{\vspace{-10pt}}
		&\quad\quad\;\; +\dfrac{\b}{Mr_1 r_3\sqrt{(1+\rho_1)(1+\rho_3)}\sqrt{N_1 N_3}}\SOMMA{a=1}{M}\SOMMA{i,j=1}{N_1,N_3}\Xi_i^{\mu,a}\Upsilon_j^{\mu,a}\sigma_i\phi_j+
		\\\\ \noalign{\vspace{-10pt}}
		& \quad\quad\; \left. +\dfrac{\c}{Mr_2 r_3\sqrt{(1+\rho_2)(1+\rho_3)}\sqrt{N_2 N_3}}\SOMMA{a=1}{M}\SOMMA{i,j=1}{N_2,N_3}\Theta_i^{\mu,a}\Upsilon_j^{\mu,a}\tau_i\phi_j\right),
	\end{array}
\end{equation}
Where, for the sake of simplicity, we set\footnote{
	To ensure comparable informational content across the three datasets, we can fix the number of samples \( M \) for each set, with \( M_1 = M_2 = M_3 \). In this context, each dataset \( i \) has examples that come from a common archetype but vary in quality, represented by the parameter \( r_i \), where \( 0 \leq r \leq 1 \). The entropy \( \rho_i = \frac{1 - r_i^2}{M r_i^2} \) effectively captures the residual uncertainty or deviation from the archetype within each data set. Holding \( M \) constant across datasets stabilizes the entropy contribution due to sample count, meaning that any variability in informational content between datasets depends solely on \( r_i \). Consequently, adjusting the quality \( r \) is sufficient to model different levels of information alignment with the archetype, so standardizing the sample count allows direct comparisons of information content based on \( r \) alone.} $M=M_1=M_2=M_3$. 
\\\\
Moreover, we have defined
\begin{equation}\label{eq:rhoUnsup_def}
	\begin{array}{llll}
		{\rho_{i j}}=\dfrac{1-r_{i}^{2}r_{j}^{2}}{M r_{i}^{2}r_{j}^{2}}\;\;\;\;\mathrm{with}\;i,j=1,2,3.
	\end{array}
\end{equation}
which will be central in the analysis of the unsupervised model.

Note that the size of the three layers and, accordingly, the length of the related patterns can differ, that is $N_1 \geq N_2 \geq N_3$, nevertheless, their number must be the same, that is $K$ patterns per each layer. Furthermore, the ratio between the amount of patterns and their length must stay finite in the asymptotic (thermodynamic) limit, namely, we choose $K,N_1,N_2,N_3$ such that
\begin{equation}
\begin{array}{lllllll}
      \lim\limits_{N_1,N_3\to\infty}\sqrt{\dfrac{N_1}{N_3}}=\alpha\,,\; && \lim\limits_{N_2,N_1\to\infty}\sqrt{\dfrac{N_1}{N_2}}=\theta \,,
      \\\\ \lim\limits_{N_2,N_3\to\infty}\sqrt{\dfrac{N_2}{N_3}}=\dfrac{\alpha}{\theta} \,,\; &&\lim\limits_{N_1\to\infty}\dfrac{K}{N_1}=\gamma\,,
\end{array}
\end{equation}
with $\alpha,\theta\in [\ 1, + \infty[\ $ and $\gamma \in \mathbb{R}^{+}$, that defines the storage of the network. 
\\
Further, denoted by $\mathcal{P}_{t}(\bm\sigma,\bm\tau,\bm\phi|\bm\xi,\bm\eta,\bm\chi)$ the probability of finding the system in a configuration $(\bm\sigma,\bm\tau,\bm\phi)$ at time $t$, the following master equation
\begin{equation}\label{dinamical}
\mathcal{P}_{t+1}(\bm{s}\vert\bm{\zeta})=\sum_{\bm{s}'\in\{-1,+1\}^{N_1+N_2+N_3}}  W_{\beta}(\bm{s}'\to\bm{s}\vert\bm{\zeta})\,\mathcal{P}_t(\bm{s}'\vert\bm{\zeta}),
\end{equation}
where \(\bm{s} = (\bm{\sigma},\bm{\tau},\bm{\phi})\), \(\bm{s}' = (\bm{\sigma}',\bm{\tau}',\bm{\phi}')\) and \(\bm{\zeta} = (\bm{\xi},\bm{\eta},\bm{\chi})\),
governs a Markov process in the space of neural configurations, where $W_{\beta}$  represents the transition rate from a state $(\bm\sigma',\bm\tau',\bm\phi')$ to a state $(\bm\sigma,\bm\tau,\bm\phi)$ and it is chosen in such a way that the system is likely to lower the value of the cost function (respectively, \eqref{eq:H-TAM-sup} in the supervised setting, \eqref{eq:H-TAM-unsup} for the unsupervised one) along its evolution: this likelihood is tuned by the parameter $\beta \in \mathbb R^+$ such that for $\beta \to 0^+$ the dynamics is a pure  random walk in the neural configuration space (and any configuration is equally likely to occur), while for $\beta \to +\infty$ the dynamics steepest descends toward the minima of the Hamiltonian \cite{CKS}. Notably, the symmetry of the pairwise couplings in the TAM cost function ensures detailed balance, thus, the long-time limit of the stochastic process \eqref{dinamical} relaxes to the Boltzmann-Gibbs distribution 
\begin{equation}\label{BGmeasure}
\lim_{t \to \infty}\mathbb{P}_t(\bm\sigma,\bm\tau,\bm\phi|\bm\Xi,\bm\Theta,\bm\Upsilon) = \mathbb{P}(\bm\sigma,\bm\tau,\bm\phi|\bm\Xi,\bm\Theta,\bm\Upsilon) = \frac{e^{-\beta \mathcal{H}_{\bm N, \bm M, \bm r}(\bm\sigma,\bm\tau,\bm\phi|\bm\Xi,\bm\Theta,\bm\Upsilon)}}{\mathcal Z^{\bm g}_{\bm N, \bm M, \bm r}(\beta,\bm J)}
\end{equation} 
where the normalization factor $\mathcal Z^{\bm g}_{\bm N, \bm M, \bm r}(\beta, \bm J)$ is given by  
 \begin{equation}\label{partition-function}
    \mathcal Z_{\bm N,\bm M,\bm r}^{\bm{g}}(\beta, \bm J) = \sum_{(\bm \sigma,\bm \tau, \bm\phi) \in \{-1, +1\}^{N_1+N_2+N_3} } e^{-\beta \mathcal{H}_{\bm N, \bm M, \bm r}(\bm\sigma,\bm\tau,\bm\phi|\bm\Xi,\bm\Theta,\bm\Upsilon)},
\end{equation}   
and is also referred to as {\em partition function}.
\\ Pivotal for a statistical mechanical analysis is the study of  the expectation of its intensive logarithm in the thermodynamic limit, namely the asymptotic quenched free energy,
defined as
\begin{equation}\label{eq:Free-Definition}
\mathcal A_{\alpha,\theta,\gamma}^{\bm{g}}(\beta) = \lim_{\bm N \to \infty}\mathbb{E}\frac{1}{L}\ln \mathcal Z^{\bm g}_{\bm N, \bm M, \bm r}(\beta, \bm J), 
\end{equation}
where $\mathbb{E}$ averages over the pattern distributions and 
\begin{equation*}
    \dfrac{1}{L} = \dfrac{1}{3}\biggl(\dfrac{1}{\sqrt{N_1N_2}} + \dfrac{1}{\sqrt{N_1N_3}} + \dfrac{1}{\sqrt{N_2N_3}}\biggr).
\end{equation*}

\noindent
Our goal is to derive an expression for $\mathcal A_{\alpha,\theta,\gamma}^{\bm{g}}(\beta)$, depending on a suitable set of macroscopic observables —i.e. the order parameters— able to quantify the emerging behavior of the system as the control parameters are made to vary. As standard for these Hopfield-like networks, we need two sets of observables that assess, respectively, the quality of retrieval and the extent of frustration leading to glassy phenomena.

Specifically, we define the archetype (ground truth) Mattis magnetizations
\begin{eqnarray} 
         m_{\xi^\mu}^\sigma = \dfrac{1}{N_1}\SOMMA{i=1}{N_1}\xi_i^\mu\sigma_i,\label{eq:ma}
         \\  
         m_{\eta^\mu}^\tau = \dfrac{1}{N_2}\SOMMA{i=1}{N_2}\eta_i^\mu\tau_i,\label{eq:ma_t}
         \\
         m_{\chi^\mu}^\phi = \dfrac{1}{N_3}\SOMMA{i=1}{N_3}\chi_i^\mu\phi_i,\label{eq:ma_p}
\end{eqnarray}
the examples Mattis magnetizations 
\begin{eqnarray} 
         n_{\Xi^{\mu,a}}^\sigma = \dfrac{1}{r_1(1+\rho_1)N_1}\SOMMA{i=1}{N_1}\Xi_i^{\mu,a}\sigma_i,\label{eq:N_singolo_a_s}
         \\  
         n_{\Theta^{\mu,a}}^\tau = \dfrac{1}{r_2(1+\rho_2)N_2}\SOMMA{i=1}{N_2}\Theta_i^{\mu,a}\tau_i,\label{eq:N_singolo_a_t}
         \\
         n_{\Upsilon^{\mu,a}}^\phi = \dfrac{1}{r_3(1+\rho_3)N_3}\SOMMA{i=1}{N_3}\Upsilon_i^{\mu,a}\phi_i,\label{eq:N_singolo_a_p}
\end{eqnarray}
and the replica overlaps for each layer
\begin{eqnarray}
         q^{\sigma}_{ab} = \dfrac{1}{N_1}\SOMMA{i=1}{N_1}\sigma_i^{(a)}\sigma_i^{(b)},
         \\ 
         q^\tau_{ab} = \dfrac{1}{N_2}\SOMMA{i=1}{N_2}\tau_i^{(a)}\tau_i^{(b)}, 
         \\
         \label{eq:qc}
         q^\phi_{ab} = \dfrac{1}{N_3}\SOMMA{i=1}{N_3}\phi_i^{(a)}\phi_i^{(b)},
\end{eqnarray}       
where $a$ and $b$ label two different replicas. 
We highlight that the Mattis magnetization $m_{x^\mu}^y$ quantifies the alignment of the layer configuration $y$ with the
related archetype $x^\mu$; $n_{x^{\mu,a}}^y$ compares the alignment of the layer configuration $y$ with one of the related archetype-examples $x^{\mu,a}$ and $q_{ab}^y$ is the standard two-replica overlap between the replicas $y^{(a)}$  and $y^{(b)}$.
For practical computations, it is also useful to define
\begin{eqnarray} 
         n_{\xi^\mu}^\sigma &=& \dfrac{1}{N_1}\SOMMA{i=1}{N_1}\hat{\xi}_i^\mu\sigma_i,\label{eq:na}
         \\  
         n_{\eta^\mu}^\tau &=& \dfrac{1}{N_2}\SOMMA{i=1}{N_2}\hat{\eta}_i^\mu\tau_i,\label{eq:na_t}
         \\
         n_{\chi^\mu}^\phi &=& \dfrac{1}{N_3}\SOMMA{i=1}{N_3}\hat{\chi}_i^\mu\phi_i,\label{eq:na_p}
\end{eqnarray}
these additional order parameters quantify the alignment of the network configuration of one layer with the average of all the corresponding examples labeled with the same $\mu$ (i.e., all examples pertaining to the $\xi^\mu$-th archetype).
\\In the following two Subsections, we investigate supervised and unsupervised learning protocols (one per Subsection) achieved by the TAM network. 

\noindent
As is standard in the bulk of statistical mechanics works on neural networks \cite{CKS}, we develop the theory at the replica-symmetric (RS) level of description: roughly speaking, we assume that all the order parameters concentrate on their means in the infinite volume limit, that is 
\begin{equation*}
    \lim_{\bm N \to \infty} \mathbb{P}(x) = \delta(x - \bar{x} ).
\end{equation*}

\subsection{Supervised setting}

\begin{figure}[t]
	\centering
	\includegraphics[width=1\textwidth]{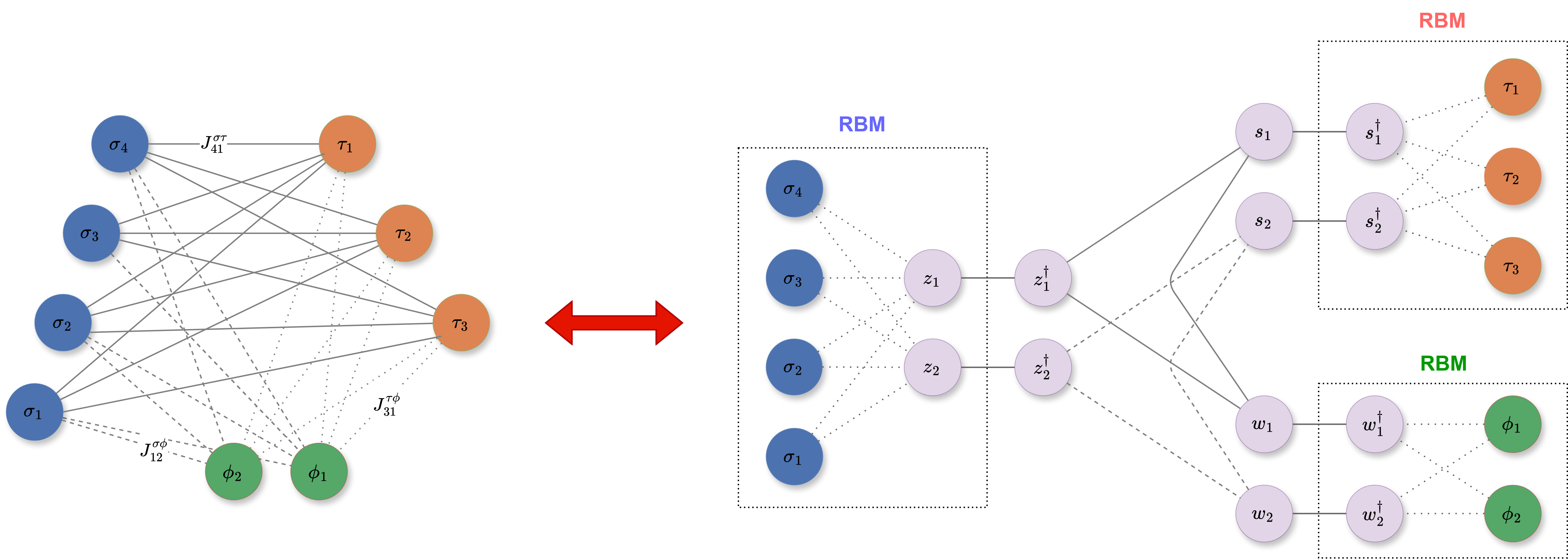}
	\caption{
    (left) Representation of the TAM neural network as given by Eq. \(\eqref{eq:hamTAM}\). In this depiction, each layer is composed of binary (Ising) units interacting pairwise in a generalized Hebbian fashion (see Eqs. \(\eqref{sinapsigmatau}\)–\(\eqref{sinapsitauphi}\) and the cost function \(\eqref{eq:hamTAM}\)). In the illustrated example, the layer \(\bm\sigma\) comprises \(N_1=4\) neurons, the layer \(\bm\tau\) \(N_2=3\) neurons, and the layer \(\bm\phi\) \(N_3=2\) neurons.  
    (right) Integral representation of the TAM neural network \cite{TAMstoring} as provided by Eq. \(\eqref{eq:rappresentazioneintegrale_sup}\). In this formulation, the three visible layers (depicted on the left) are no longer directly interconnected; instead, each is coupled with a corresponding hidden layer that governs further interactions. Notably, while the visible layers consist of standard binary neurons, the hidden layers are composed of highly selective “grandmother” units \cite{grandmother,agliariEmergence} that activate solely when the pattern they encode for is either presented or reconstructed on the associated visible layer, thereby facilitating the retrieval of pattern triples.}
\end{figure}
\vspace{3mm}
\noindent
Before proceeding with the calculation of an explicit expression of the quenched statistical pressure, we are going to use the following Gaussian approximations:
    \begin{equation}
    \label{GaussApprox}
    \begin{array}{lll}
         \sqrt{(1+\rho_1)}\hat{\xi}_i^\mu\sim J_i^\mu
         \\\\ 
         \sqrt{(1+\rho_2)}\hat{\eta}_i^\mu\sim Y_i^\mu
         \\\\ 
         \sqrt{(1+\rho_3)}\hat{\chi}_i^\mu \sim V_i^\mu
    \end{array}
\end{equation}
where $J_i^\mu, Y_i^\mu, V_i^\mu \sim\mathcal{N}(0,1)$.
\\ In fact, considering only the case of $\hat{\bm\xi}$ (the calculations for the other datasets are analogous), using the explicit expression of $\hat{\xi}_i^\mu$ \eqref{eq:rho_def}
\begin{equation}
\sqrt{(1+\rho_1)}\hat{\xi}_i^\mu = \dfrac{1}{M_1r_1\sqrt{(1+\rho_1)}}\SOMMA{a=1}{M_1}\Xi_i^{\mu,a}
\end{equation}
and applying the Central Limit Theorem (CLT) to the sum over the examples, in the limit of a large number of examples (i.e., $M_1 \gg 1$), we obtain 
\begin{equation}
    \sqrt{(1+\rho_1)}\hat{\xi}_i^\mu \sim \mu_1+J_i^\mu \sqrt{\mu_2-\mu_1^2}
\end{equation}
where $J_i^\mu\sim\mathcal{N}(0,1)$ and 
\begin{equation}
    \begin{array}{lll}
         \mu_1=\mathbb{E}_{\bm\xi}\mathbb{E}_{(\bm\Xi|\bm\xi)}\left[\sqrt{(1+\rho_1)}\hat{\xi}_i^\mu\right]=\mathbb{E}_{\bm\xi}\mathbb{E}_{(\bm\Xi|\bm\xi)}\left[\dfrac{1}{M_1r_1\sqrt{(1+\rho_1)}}\SOMMA{a=1}{M_1}\Xi_i^{\mu,a}\right]=0,
         \\\\ \noalign{\vspace{-10pt}}
         \mu_2=\mathbb{E}_{\bm\xi}\mathbb{E}_{(\bm\Xi|\bm\xi)}\left[(1+\rho_1)(\hat{\xi}_i^\mu)^2\right]=\mathbb{E}_{\bm\xi}\mathbb{E}_{(\bm\Xi|\bm\xi)}\left[\dfrac{1}{M_1^2r_1^2(1+\rho_1)}\SOMMA{a,b=1}{M_1,M_1}\Xi_i^{\mu,a}\Xi_i^{\mu,b}\right]=1.
    \end{array}
\end{equation}


\noindent
Next, exploiting the fact that the Hamiltonian is a quadratic form in the magnetizations and the Boltzmann-Gibbs measure is (proportional to) the exponential of the Hamiltonian, the partition function \eqref{partition-function} admits an  integral representation that reads as\footnote{We add the terms 
\begin{equation*}
    J_1\SOMMA{i=1}{N_1}\xi_i^1\sigma_i+J_2\SOMMA{i=1}{N_2}\eta_i^1\tau_i+J_3\SOMMA{i=1}{N_3}\chi_i^1\phi_i
\end{equation*}
in the exponent of the Boltzmann factor to ``generate'' the expectation of the Mattis magnetizations
$m^\sigma_{\xi^\mu}, m^\tau_{\eta^\mu}, m^\phi_{\chi^\mu}$. To do so, we evaluate the derivative w.r.t. $\bm J$ of the quenched statistical pressure and then set $\bm J = \bm 0$.
We will use the same technique in the unsupervised context.}  

\vspace{-\baselineskip}
\enlargethispage{\baselineskip}
\begin{equation}
\label{eq:rappresentazioneintegrale_sup}
\begin{aligned}
\mathcal{Z}^{\bm g}_{\bm N, \bm M, \bm r}(\beta, \bm J) = 
& \sum_{\{\sigma\},\{\tau\},\{\phi\}} \exp\Bigg[\beta\Big(
    \a \sqrt{N_1N_2(1+\rho_1)(1+\rho_2)} n^\sigma_{\xi^1} n^\tau_{\eta^1} \\
& + \b \sqrt{N_1N_3(1+\rho_1)(1+\rho_3)} n^\sigma_{\xi^1} n^\phi_{\chi^1} \\
& + \c \sqrt{N_2N_3(1+\rho_2)(1+\rho_3)} n^\tau_{\eta^1} n^\phi_{\chi^1} \Big) \\
& + J_1 \sum_{i=1}^{N_1} \xi_i^1 \sigma_i 
+ J_2 \sum_{i=1}^{N_2} \eta_i^1 \tau_i 
+ J_3 \sum_{i=1}^{N_3} \chi_i^1 \phi_i \Bigg] \\
& \times \int \mathcal{D}(z \z\, s \s\, w \w)\,
\exp\Bigg[\sqrt{\frac{\beta}{2N_1}} \sum_{\mu>1}^{K} \sum_{i=1}^{N_1} J_i^\mu \sigma_i z_\mu \\
& + \sqrt{\frac{\beta}{2N_2}} \sum_{\mu>1}^{K} \sum_{j=1}^{N_2} Y_j^\mu \tau_j \s_\mu
+ \sqrt{\frac{\beta}{2N_3}} \sum_{\mu>1}^{K} \sum_{k=1}^{N_3} V_k^\mu \phi_k \w_\mu \Bigg]
\end{aligned}
\end{equation}

where\footnote{Note that we used the relation:
\begin{equation} \label{relationTam}
\exp\left[\beta\sum_{\mu>1}^{K}A_{\mu}B_{\mu}\right]=\prod_{\mu>1}^{K}\int \left(\frac{dx_{\mu}dx_{\mu}^{\dagger}}{(\sqrt{2\pi})^{2}}\right)\exp\left[- \frac{1}{2}x_{\mu}x_{\mu}^{\dagger}+\sqrt{\frac{\beta}{2}}A_{\mu}x_{\mu}+ \sqrt{\frac{\beta}{2}}B_{\mu}x_{\mu}^{\dagger}\right]\,.
\end{equation}}

{
\begin{equation}
\label{transformInt}
    \begin{array}{lllll}
        \displaystyle\int\mathcal{D}(z\z s\s w\w)=\prod\limits_{\mu>1}^K&\displaystyle\int \left(\dfrac{dz_\mu d\z_\mu \, ds_\mu d\s_\mu \, dw_\mu d\w_\mu}{(\sqrt{2\pi})^6}\right)\exp\left[-\dfrac{1}{2}\left(z_\mu \z_\mu +s_\mu \s_\mu+w_\mu \w_\mu \right.\right.
        \\ \noalign{\vspace{-10pt}}
        \\
        &\left.\left.- \a\z_\mu s_\mu- \b\z_\mu w_\mu- \c s_\mu w_\mu\right)\right]
    \end{array}
\end{equation}
}
\noindent
and we have used the Gaussian approximations \eqref{GaussApprox}.
    With the introduction of analog neurons in \eqref{eq:rappresentazioneintegrale_sup}, we can define the following overlaps \(\{p^{\sigma}_{ab},p^{\tau}_{ab},p^{\phi}_{ab}\}\)  and \(\{p^{\sigma^{\dagger}}_{ab},p^{\tau^{\dagger}}_{ab},p^{\phi^{\dagger}}_{ab}\}\).
 
\begin{equation}
\begin{array}{ll}
p^{z}_{ab}=\dfrac{1}{K-1}\displaystyle\sum_{\mu>1}^{K}z_{\mu}^{(a)}z_{\mu}^{(b)} \,,\quad &
p^{z^{\dagger}}_{ab}=\dfrac{1}{K-1}\displaystyle\sum_{\mu>1}^{K}{z_{\mu}^{\dagger}}^{(a)}{z_{\mu}^{\dagger}}^{(b)} \,,
\\[2ex]
p^{s}_{ab}=\dfrac{1}{K-1}\displaystyle\sum_{\mu>1}^{K}s_{\mu}^{(a)}s_{\mu}^{(b)} \,,\quad &
p^{s^{\dagger}}_{ab}=\dfrac{1}{K-1}\displaystyle\sum_{\mu>1}^{K}{s_{\mu}^{\dagger}}^{(a)}{s_{\mu}^{\dagger}}^{(b)} \,,
\\[2ex]
p^{w}_{ab}=\dfrac{1}{K-1}\displaystyle\sum_{\mu>1}^{K}w_{\mu}^{(a)}w_{\mu}^{(b)} \,,\quad &
p^{w^{\dagger}}_{ab}=\dfrac{1}{K-1}\displaystyle\sum_{\mu>1}^{K}{w_{\mu}^{\dagger}}^{(a)}{w_{\mu}^{\dagger}}^{(b)}\,.
\end{array}
\end{equation}

As detailed in App.~\ref{appsec:proofGuerra}, the linearized expression of the partition function allows us to apply Guerra's interpolation method and reach an explicit expression for the quenched free energy. 
Once achieved such an expression in terms of control and order parameters, we can extremize it w.r.t. the order parameters, obtaining a set of self-consistent equations, whose solution provides the behavior of the order parameters as a function of the control parameters. 
\begin{figure}[!t]
    \centering
    \includegraphics[width=16cm]{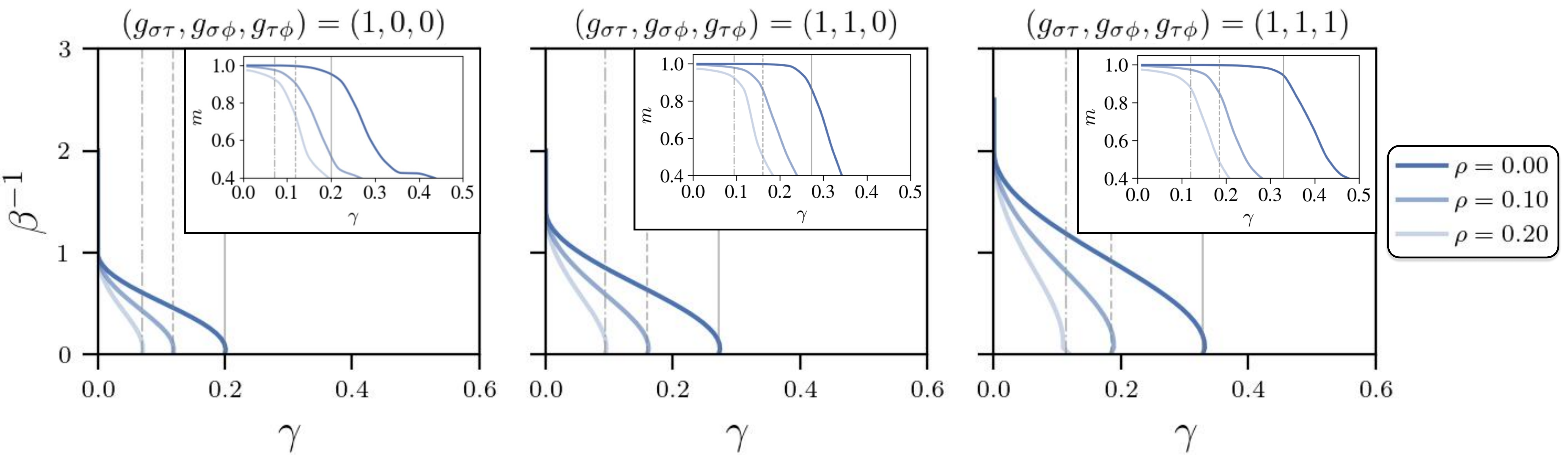}
    \caption{
    Phase diagrams of the TAM network in the supervised setting in the noise versus storage plane at $\alpha = \theta = 1$, obtained by solving numerically equations \eqref{seldefinitive} and \eqref{seldefinitive1} (only for the $\sigma$ layer, since the network is symmetric). The analysis includes different inter-layer interaction forces and various values of the entropy of the $\rho$ dataset, as indicated in the titles and legends. Each blue solid line represents the phase transition for the entire network, dividing the working region (bottom left)—where archetypes are learned and thus can be recovered and generalized—from the blackout region (top right), where spin glass effects prevail, for a specific value of the entropy of the dataset, i.e., \(\rho_1=\rho_2=\rho_3=\rho\). The retrieval region is determined by the conditions $|m^\sigma_{\xi_1^1}|, |m^\tau_{\xi_1^1}|, |m^\phi_{\xi_1^1}| \sim 1$: these constraints are all simultaneously satisfied in the region below the solid line, while above it all magnetization vanishes. The influence of $\rho$ is clearly visible: as $\rho$ increases, the recovery region gets progressively narrower in all diagrams. For $\rho = 0$, we recover the results of the standard BAM case of Kosko \cite{kosko1988bidirectional} (first panel) and the new ones related to TAM \cite{TAMstoring} (second and third panels). For simplicity and symmetry considerations, only the $\bm \sigma$ layer is shown.
    In the insets of each plot: MC simulation at zero-fast noise ($\beta^{-1} = 0$) with a symmetric network ($N_1 = N_2 = N_3 = 400$), showing the evolution of the Mattis magnetizations $m_{\xi_1}$ across the layers as a function of network load ($\gamma$) for different $\rho$. The simulations agree with theoretical predictions, correctly depicting the maximum load beyond which the network stops functioning.
    }
    \label{VariABC}
\end{figure}
Fixing the value of the control parameters and focusing on the retrieval of the triplet of patterns with label \(\mu = 1\), without loss of generality, we obtain the self-consistency equations for the order parameters by evaluating the saddle points of \eqref{eq:fRS}. However, for practical and numerical reasons, we focus on the set of self-consistent equations in the large $\bm M$ limit, as shown in Appendix~\ref{appsubsec:largeSampleSup}, where we exploit the generating function to obtain the self-consistency equations for the three archetypes-Mattis magnetizations, that read
{\small 
\begin{equation} \hspace{-0.8cm}
\label{seldefinitive}
\begin{array}{lllll}
    \bar{m}^\sigma_{\xi^1} = \mathbb{E}\left\{\tanh\left[\dfrac{\beta}{1+\rho_1}\xi^1\biggl(\tilde A_2  \bar{n}^\tau_{\eta^1}+ \tilde A_3 \bar{n}^\phi_{\chi^1}\biggr)+ z\sqrt{\dfrac{\beta^2\rho_1}{(1+\rho_1)^2}\biggl(\tilde A_2  \bar{n}^\tau_{\eta^1}+ \tilde A_3 \bar{n}^\phi_{\chi^1}\biggr)^2 +\dfrac{\beta\gamma \bar{p}^z}{2}}\right]\right\},
    \\\\ \noalign{\vspace{-10pt}}
    \bar{m}^\tau_{\eta^1} = \mathbb{E}\left\{\tanh\left[\dfrac{\beta}{1+\rho_2}\eta^1\theta\biggl(\tilde B_2 \bar{n}^\sigma_{\xi^1}+ \tilde B_3 \bar{n}^\phi_{\chi^1}\biggr)+ z\theta\sqrt{\dfrac{\beta^2\rho_2}{(1+\rho_2)^2}\biggl(\tilde B_2 \bar{n}^\sigma_{\xi^1}+ \tilde B_3 \bar{n}^\phi_{\chi^1}\biggr)^2 +\dfrac{\beta\gamma \bar{p}^{\s}}{2}}\right]\right\},
    \\\\ \noalign{\vspace{-10pt}}
    \bar{m}^\phi_{\chi^1} = \mathbb{E}\left\{\tanh\left[\dfrac{\beta}{1+\rho_3}\chi^1\alpha\biggl(\tilde C_2 \bar{n}^\sigma_{\xi^1}+ \tilde C_3\bar{n}^\tau_{\eta^1}\biggr) + z\alpha\sqrt{\dfrac{\beta^2\rho_3}{(1+\rho_3)^2}\biggl(\tilde C_2 \bar{n}^\sigma_{\xi^1}+ \tilde C_3\bar{n}^\tau_{\eta^1}\biggr)^2 +\dfrac{\beta\gamma \bar{p}^{\w}}{2}}\right]\right\},
    \\\\
    \end{array}
\end{equation}
}
and
{\small
\begin{equation} \hspace{-0.8cm}
\label{seldefinitive1}
\begin{array}{lllll}
    \qsigma = \mathbb{E}\left\{\tanh^2\left[\dfrac{\beta}{1+\rho_1}\xi^1\biggl(\tilde A_2  \bar{n}^\tau_{\eta^1}+ \tilde A_3 \bar{n}^\phi_{\chi^1}\biggr)+ z\sqrt{\dfrac{\beta^2\rho_1}{(1+\rho_1)^2}\biggl(\tilde A_2  \bar{n}^\tau_{\eta^1}+ \tilde A_3 \bar{n}^\phi_{\chi^1}\biggr)^2 +\dfrac{\beta\gamma \bar{p}^z}{2}}\right]\right\},
    \\\\ \noalign{\vspace{-10pt}}
    \qtau = \mathbb{E}\left\{\tanh^2\left[\dfrac{\beta}{1+\rho_2}\eta^1\theta\biggl(\tilde B_2 \bar{n}^\sigma_{\xi^1}+ \tilde B_3 \bar{n}^\phi_{\chi^1}\biggr)+ z\theta\sqrt{\dfrac{\beta^2\rho_2}{(1+\rho_2)^2}\biggl(\tilde B_2 \bar{n}^\sigma_{\xi^1}+ \tilde B_3 \bar{n}^\phi_{\chi^1}\biggr)^2 +\dfrac{\beta\gamma \bar{p}^{\s}}{2}}\right]\right\},
    \\\\ \noalign{\vspace{-10pt}}
    \qphi =\mathbb{E}\left\{\tanh^2\left[\dfrac{\beta}{1+\rho_3}\chi^1\alpha\biggl(\tilde C_2 \bar{n}^\sigma_{\xi^1}+ \tilde C_3\bar{n}^\tau_{\eta^1}\biggr)+ z\alpha\sqrt{\dfrac{\beta^2\rho_3}{(1+\rho_3)^2}\biggl(\tilde C_2 \bar{n}^\sigma_{\xi^1}+ \tilde C_3\bar{n}^\tau_{\eta^1}\biggr)^2 +\dfrac{\beta\gamma \bar{p}^{\w}}{2}}\right]\right\}.
\end{array}
\end{equation}
}

\noindent
where the $n^{\sigma}_{\xi^1}, n^{\tau}_{\eta^1}$, and $n^{\phi}_{\chi^1}$ variables are ruled by \eqref{eq:n_large_M} and $\bar{p}^z$, $\bar{p}^{s^\dag},$ and $\bar{p}^{w^\dag}$ obey \eqref{eq:p_equations}.
Furthermore, we set:
\begin{equation}
    \begin{array}{lllll}
         \tilde A_2=\dfrac{\a}{\theta} \sqrt{(1 + \rho_1) (1 + \rho_2)} ,&\tilde A_3=\dfrac{\b}{\alpha} \sqrt{(1 + \rho_1) (1 + \rho_3)} 
         \\\\ \noalign{\vspace{-10pt}}
         \tilde B_2=\a \sqrt{(1 + \rho_1) (1 + \rho_2)}  ,&\tilde B_3=\dfrac{ \c}{\alpha} \sqrt{(1 + \rho_2) (1 + \rho_3)}  
         \\\\ \noalign{\vspace{-10pt}}
         \tilde C_2=\b \sqrt{(1 + \rho_1) (1 + \rho_3)}  ,&\tilde C_3=\dfrac{\c}{\theta} \sqrt{(1 + \rho_2) (1 + \rho_3)}
    \end{array}
\end{equation}

\begin{algorithm}[tb]
\caption{Numerical solution of the self consistency equations. For the unsupervised case we fixed the threshold for $\bm M_{start}$ at 0.1.\label{alg:SC}}
\KwIn{Starting load of the network $\gamma_{max}$, activation values of the layers $\bm G=(g_{\sigma\tau}, g_{\sigma\phi}, g_{\tau\phi})$, size values of the layers $\bm \Gamma=(\alpha, \theta)$, vector of temperature values $T_{\text{vett}}$, datasets entropies $\bm \rho$}
\KwOut{Vector of maximum load values of the network for each temperature values}

\For{$t$ in $T_{vett}$}{
    $\Delta \gamma = \gamma_{max}/2$\;
    \While{$\Delta\gamma > 10^{-4}$}{
        $\bm M=\bm 1, \bm Q=\bm 1$\;
        \For{$iter$ in \( (1, \ldots, 1000) \)}{
        Compute the r.h.s. of the self consistency equations: 
        \\
        $\bm M_{new}= \mathbb{E}_x\tanh{f^{\bm G,\bm \Gamma, \bm \rho}_{t,\gamma}(x,\bm M,\bm Q)}$ \\ $\bm Q_{new}= \mathbb{E}_x\tanh^2{f^{\bm G,\bm \Gamma, \bm \rho}_{t,\gamma}(x, \bm M,\bm Q)}$\;
        Evaluate $\delta = \sqrt{|\bm{M}_{new}-\bm{M}|^2+|\bm{Q}_{new}-\bm{Q}|^2}$\;
            \If{$\delta< 10^{-10}$}{
        \textbf{break}}
        \Else{
        Compute the fixed point equations for the order parameters
        $\bm M = \frac{\bm M +\bm M_{new}}{2}$, $\bm Q = \frac{\bm Q +\bm Q_{new}}{2}$}
        }
        \If{$\bm M_{start}<10^{-5}$}{
        $\gamma_{max}=\gamma_{max}-\Delta\gamma$}
        \Else{$\gamma_{max}=\gamma_{max}+\Delta\gamma$}
        
        $\Delta\gamma = \Delta\gamma/2$\;
    }
}
\end{algorithm}

Equations \eqref{seldefinitive} and \eqref{seldefinitive1} have been numerically investigated, and the results are presented in Fig.~\ref{VariABC}. In this figure, we display the retrieval region in the $(\gamma, \beta^{-1})$ plane. Specifically, we illustrate the phase diagram corresponding to the first layer, allowing the parameters \( g_{\sigma\tau}, \ g_{\sigma\phi}, \ g_{\tau\phi} \) and \( \rho=\rho_1=\rho_2=\rho_3 \) to vary. In the particular case where \(\b=\c=0\) and \(\rho=0\), the resulting phase diagram reduces to that presented in \cite{centonze2024statistical} for the BAM model, showing only the boundary separating the retrieval region from the spin glass phase.

\vspace{3mm}

Once the retrieval region is identified, we define \(\gamma_{\textit{max}}\)\footnote{The following statements also apply in the unsupervised setting, with appropriate modifications to the value of \(\gamma_{\textit{max}}\).} as the maximum load that the network can sustain while remaining within this regime. In this context, we investigate how the total number of synapses in such a multilayer neural network varies with its degree of asymmetry, with the objective of determining an optimal architecture in terms of storage capacity.

\begin{figure}[t]
    \centering
    \begin{minipage}{0.31\textwidth}
        \centering
        \includegraphics[width=\textwidth]{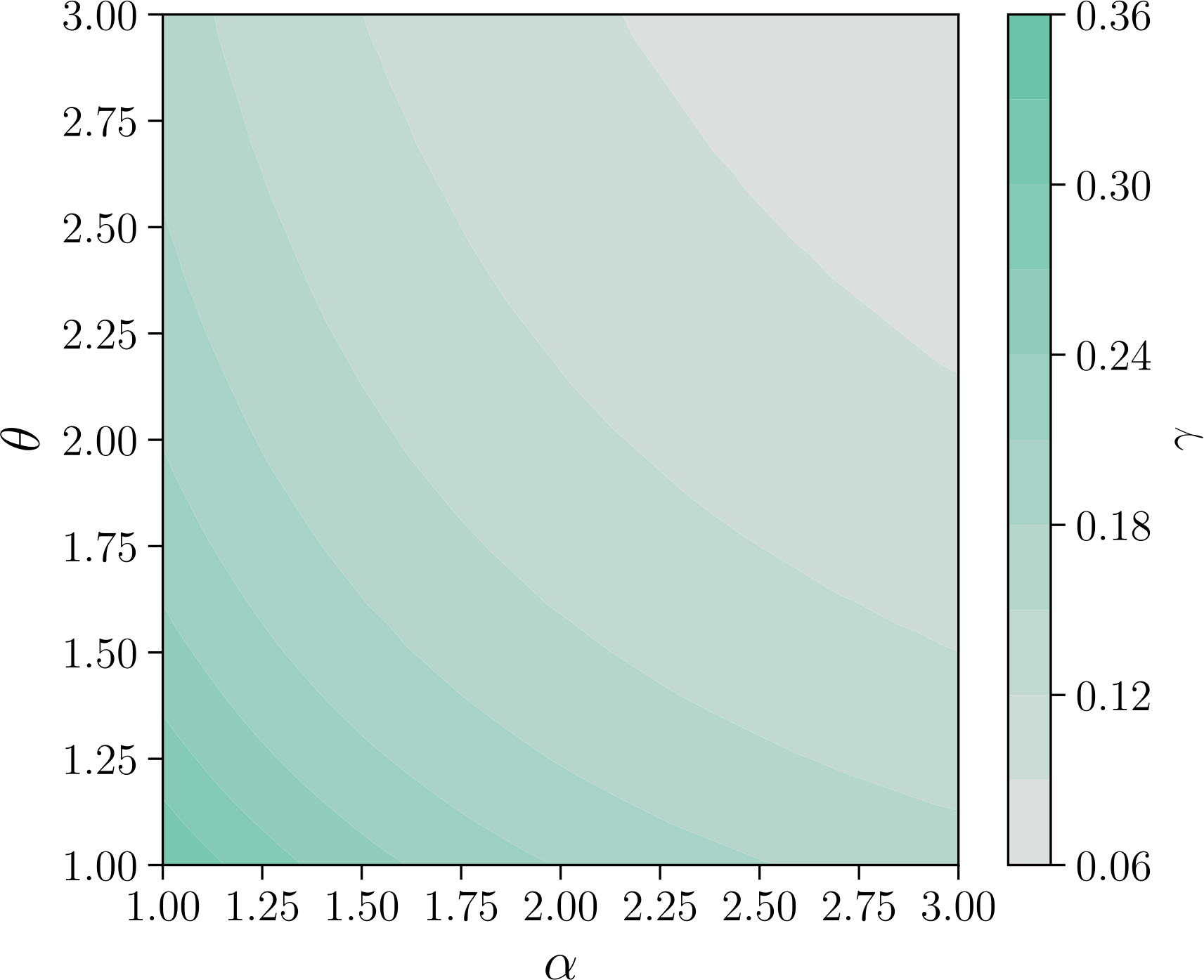}
    \end{minipage}
    \hfill
    \begin{minipage}{0.32\textwidth}
        \centering
        \includegraphics[width=\textwidth]{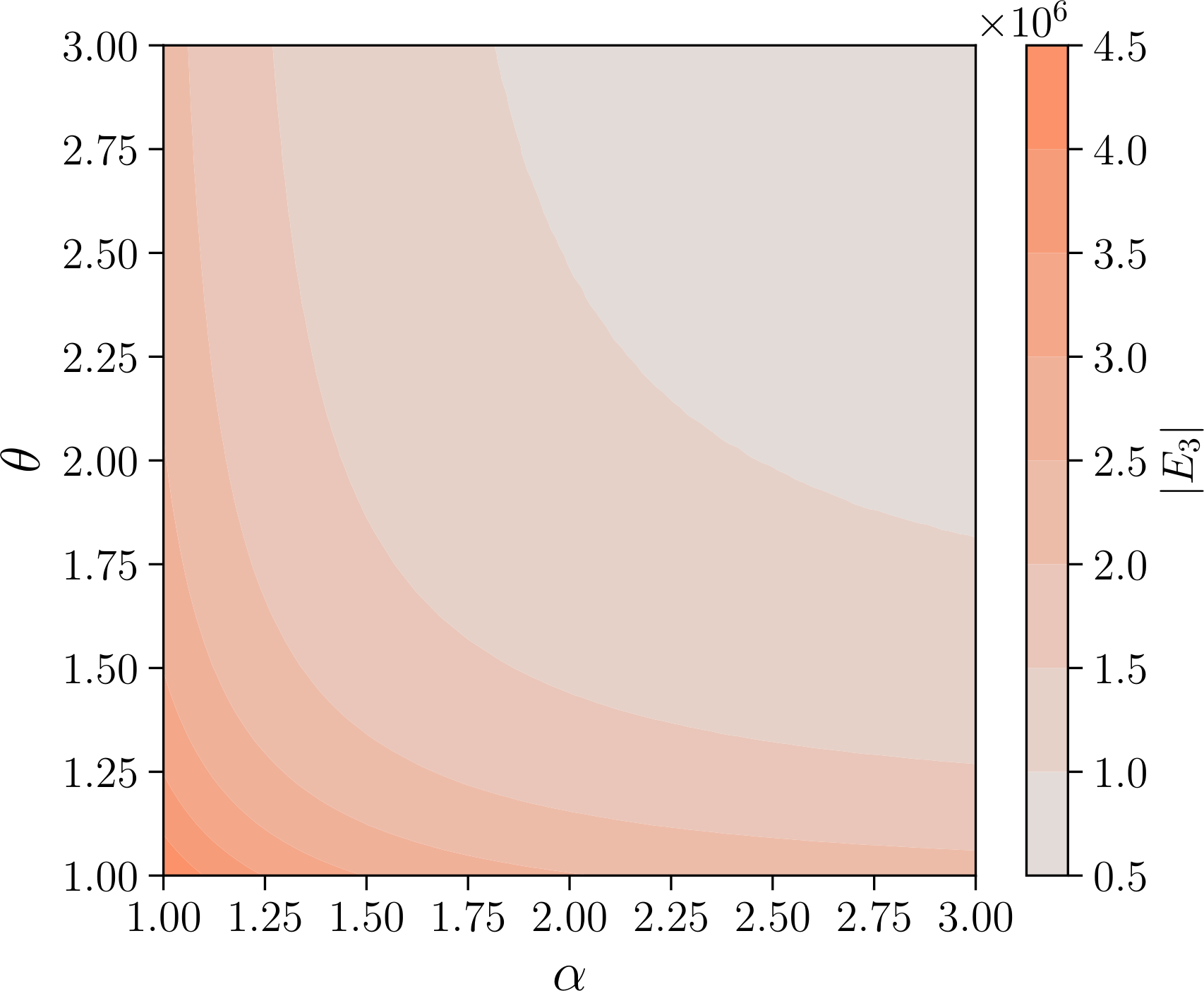}
    \end{minipage}
    \hfill
    \begin{minipage}{0.32\textwidth}
        \centering
        \includegraphics[width=\textwidth]{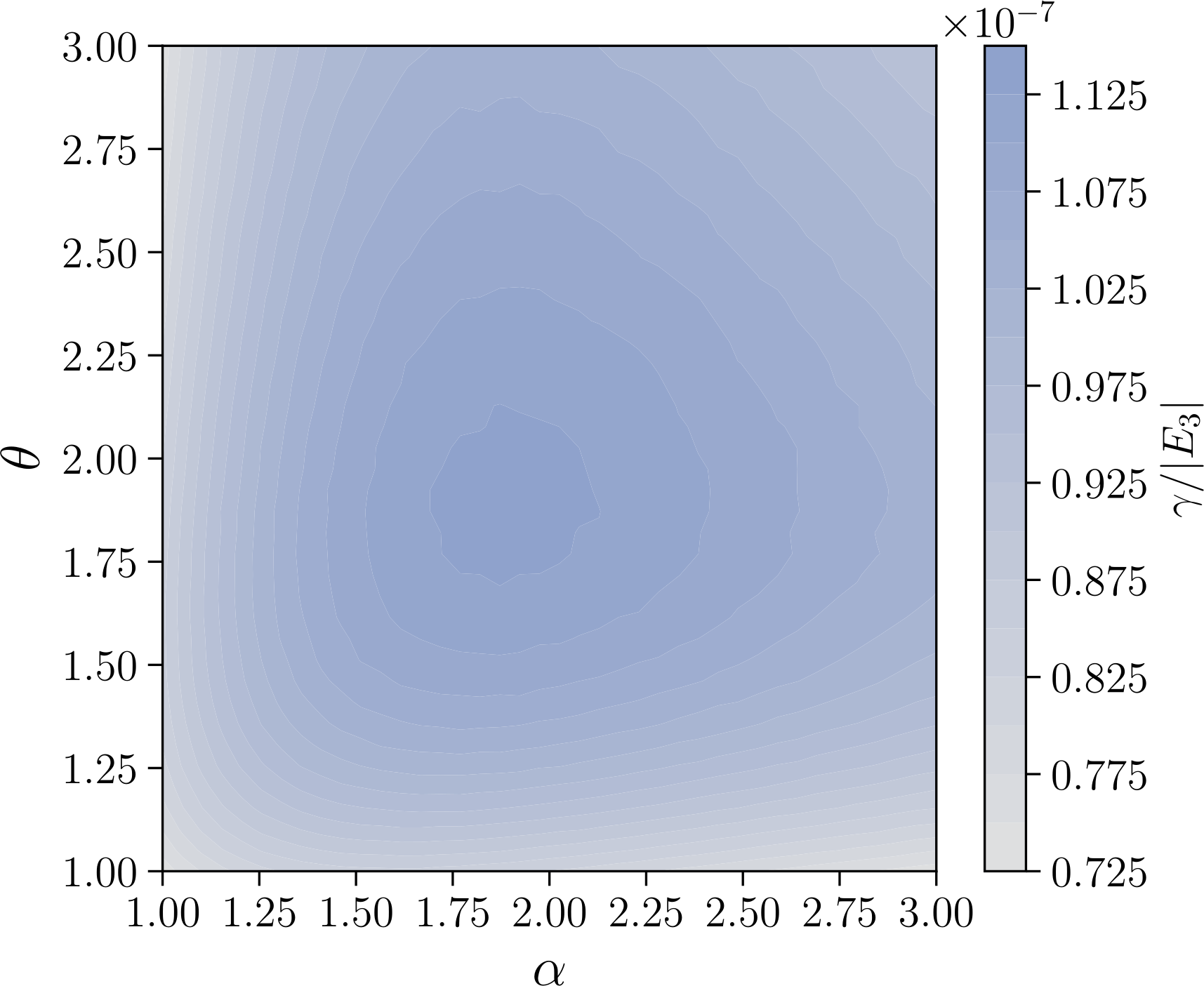}
    \end{minipage}
    \caption{Capacity, connectivity, and efficiency in TAM networks as functions of asymmetry.  
This figure explores how the asymmetry parameters \(\alpha\) and \(\theta\) shape the performance of TAM networks. The left panel shows that the maximum load \(\gamma_{max}\) is attained in the symmetric case \((\alpha, \theta) = (1,1)\), confirming that balanced architectures favor memory capacity. The central panel reports the number of synapses \(|E_3|\), which also peaks under symmetry due to maximal inter-layer connectivity. Interestingly, the right panel, showing the ratio \(\gamma_{max}/|E_3|\) as a measure of synaptic efficiency, reveals that the most efficient configuration occurs not in the fully symmetric regime, but at a slightly asymmetric point around \(\alpha = \theta \approx 1.92\). This indicates that a mild asymmetry can provide the best trade-off between capacity and synaptic cost.}
    \label{fig:TAM_Gamma_Synapses}
\end{figure}

For a \(k\)-partite network with partitions of cardinalities \(N_1, \dots, N_k\), the total number of synapses is given by
\begin{equation}
   |E_{k}| \;=\; \binom{\sum_{i=1}^k N_i}{2} \;-\; \sum_{i=1}^k \binom{N_i}{2}. 
\end{equation}
In our setting, we focus on the case \(k = 3\). The configuration illustrated in Fig.~\ref{fig:TAM_Gamma_Synapses} (left) demonstrates the computation of \(\gamma_{\textit{max}}\) for various choices of the asymmetry parameters \(\alpha\) and \(\theta\). Notably, the highest values of \(\gamma_{\textit{max}}\) occur when the network adopts a symmetric architecture—specifically, when \(\alpha = \theta = 1\) (i.e., \(N_1 = N_2 = N_3\)). Under these conditions, the number of synapses is maximized, thereby enhancing the network's load capacity.

Furthermore, the plot in Fig.~\ref{fig:TAM_Gamma_Synapses} (center), derived from the aforementioned expression, confirms that a more balanced network—one with partitions of comparable sizes—exhibits a larger number of synapses and, consequently, improved storage capabilities\footnote{Through Jensen's inequality, it is immediate to show that the number of synapses is maximized when the partitions of the network have the same cardinality.}.

\medskip
In addition to maximizing storage capacity, the efficiency of the TAM model is moreover highlighted by quantifying the synaptic savings relative to the classical Hopfield model. This is accomplished by evaluating the ratio ${|E_k|}/{\binom{\sum_{i=1}^k N_i}{2}},$
where the denominator represents the total number of synapses in a fully connected Hopfield network. Remarkably, in the symmetric regime, the TAM model utilizes only approximately \(55.6\%\) of the synapses required by the unipartite model, thereby demonstrating a significant saving in storing resources.

\subsection{Unsupervised setting}
For the unsupervised setting, using the same arguments to get \eqref{GaussApprox}, it is possible to prove the following Gaussian approximations:

    \begin{align}
        \dfrac{1}{M r_1r_2}\SOMMA{a=1}{M}\Xi_i^{\mu,a}\Theta_j^{\mu,a}\sim \sqrt{1+\rho_{12}}J_{ij}^\mu \;, \label{eq:approxGaussJ}\\ 
        \dfrac{1}{M r_1r_3}\SOMMA{a=1}{M}\Xi_i^{\mu,a}\Upsilon_k^{\mu,a}\sim \sqrt{1+\rho_{13}}Y_{ik}^\mu \;, \label{eq:approxGaussY}\\
        \dfrac{1}{M r_2r_3}\SOMMA{a=1}{M}\Theta_j^{\mu,a}\Upsilon_k^{\mu,a}\sim \sqrt{1+\rho_{23}}V_{jk}^\mu \;\label{eq:approxGaussV},
    \end{align}

\noindent
where $J_{ij}^\mu, Y_{ik}^\mu, V_{jk}^\mu \sim\mathcal{N}(0,1)$.

This allows us to rewrite the partition function as follows:

\begin{equation}\small
\label{eq:Z_supervised_aligned}
\begin{aligned}
\mathcal{Z}_{\bm N, M, \bm r}^{\bm g}(\beta, \bm J) = &
\sum_{\{\sigma\}, \{\tau\}, \{\phi\}} \exp\Bigg[\beta\Bigg(
    \a \sqrt{N_1 N_2 (1+\rho_1)(1+\rho_2)} \frac{1}{M} \sum_{a=1}^{M} n_{\Xi^{1,a}}^\sigma n_{\Theta^{1,a}}^\tau \\
& + \b \sqrt{N_1 N_3 (1+\rho_1)(1+\rho_3)} \frac{1}{M} \sum_{a=1}^{M} n_{\Xi^{1,a}}^\sigma n_{\Upsilon^{1,a}}^\phi \\
& + \c \sqrt{N_2 N_3 (1+\rho_2)(1+\rho_3)} \frac{1}{M} \sum_{a=1}^{M} n_{\Theta^{1,a}}^\tau n_{\Upsilon^{1,a}}^\phi
\Bigg)\Bigg] \times \\
& \times \exp\Bigg[
    \a\beta \sqrt{\frac{1+\rho_{12}}{N_1 N_2 (1+\rho_1)(1+\rho_2)}} \sum_{\mu>1} \sum_{i,j=1}^{N_1,N_2} J_{ij}^{\mu} \sigma_i \tau_j \\
& \;\qquad + \b\beta \sqrt{\frac{1+\rho_{13}}{N_1 N_3 (1+\rho_1)(1+\rho_3)}} \sum_{\mu>1} \sum_{i,j=1}^{N_1,N_3} Y_{ij}^{\mu} \sigma_i \phi_j \\
& \;\qquad + \c\beta \sqrt{\frac{1+\rho_{23}}{N_2 N_3 (1+\rho_2)(1+\rho_3)}} \sum_{\mu>1} \sum_{i,j=1}^{N_2,N_3} V_{ij}^{\mu} \tau_i \phi_j \\
& \;\qquad + J_1 \sum_{i=1}^{N_1} \xi_i^1 \sigma_i
      + J_2 \sum_{i=1}^{N_2} \eta_i^1 \tau_i
      + J_3 \sum_{i=1}^{N_3} \chi_i^1 \phi_i
\Bigg]
\end{aligned}
\end{equation}

\begin{remark}
In the unsupervised case, in the thermodynamic limit we can use the assumption, valid for $M$ sufficiently large

\begin{equation}\label{eq:UnsupStrongApprox}
\begin{split}
	\mathbb{P}( n_{\Xi^{1,a}}^\sigma) \xrightarrow[]{\bm N\to\infty} \delta ( n_{\Xi^{1,a}}^\sigma - \bar n_{\xi^{1}}^\sigma)\,, \\\\ \noalign{\vspace{-10pt}}
\mathbb{P}( n_{\Theta^{1,a}}^\tau) \xrightarrow[]{\bm N\to\infty} \delta ( n_{\Theta^{1,a}}^\tau - \bar n_{\eta^{1}}^\tau)\,, \\\\ \noalign{\vspace{-10pt}}
\mathbb{P}( n_{\Upsilon^{1,a}}^\phi) \xrightarrow[]{\bm N\to\infty} \delta ( n_{\Upsilon^{1,a}}^\phi - \bar n_{\chi^{1}}^\phi)\,.	
\end{split}
\end{equation}

Despite being rather strong, this assumption is useful to make the numerical simulation of self-consistency equations tractable. Otherwise, it would be necessary to simulate $3\cdot M + 6$ coupled self-consistency equations.
\end{remark}

Similarly to the previous Section, in the large $M$ limit, the following self-consistency equations hold:

\begin{multline}\small\label{eq:selfMagnSigma}
		\bar{m}^\sigma_{\xi^1} = \mathbb{E}\Bigg\{\xi^1\tanh\Bigg[\dfrac{\beta}{1+\rho_1}\xi^1\biggl(\tilde A_2  \bar{n}^\tau_{\eta^1}+ \tilde A_3 \bar{n}^\phi_{\chi^1}\biggr)+
		\\
		\qquad\qquad+z\beta\sqrt{\dfrac{\rho_1}{(1+\rho_1)^2}\biggl(\tilde A_2  \bar{n}^\tau_{\eta^1}+ \tilde A_3 \bar{n}^\phi_{\chi^1}\biggr)^2 +\a^2 D_1^2\qtau + \b^2 D_2^2\qphi}\Bigg]\Bigg\},  
\end{multline}
\begin{multline}\small\label{eq:selfMagnTau}
	\bar{m}^\tau_{\eta^1} = \mathbb{E}\Bigg\{\eta^1\tanh\Bigg[\dfrac{\beta}{1+\rho_2}\eta^1\theta\biggl(\tilde B_2 \bar n_{\xi^{1}}^\sigma+ \tilde B_3 \bar{n}^\phi_{\chi^1}\biggr) +
		\\ 
	\qquad\qquad+ z\beta\theta\sqrt{\dfrac{\rho_2}{(1+\rho_2)^2}\biggl(\tilde B_2 \bar n_{\xi^{1}}^\sigma+ \tilde B_3 \bar{n}^\phi_{\chi^1}\biggr)^2 + \a^2 D_1^2\qsigma + \c^2 D_3^2\qphi}\Bigg]\Bigg\},
\end{multline}
\begin{multline}\small\label{eq:selfMagnPhi}
	\bar{m}^\phi_{\chi^1} = \mathbb{E}\Bigg\{\chi^1\tanh\Bigg[\dfrac{\beta}{1+\rho_3}\chi^1\alpha\biggl(\tilde C_2 \bar n_{\xi^{1}}^\sigma+ \tilde C_3\bar{n}^\tau_{\eta^1}\biggr)+
	\\ 
	\qquad\qquad+ z\beta\alpha\sqrt{\dfrac{\rho_3}{(1+\rho_3)^2}\biggl(\tilde C_2 \bar n_{\xi^{1}}^\sigma+ \tilde C_3\bar{n}^\tau_{\eta^1}\biggr)^2 +\b^2 D_2^2\qsigma+\c^2 D_3^2\qtau}\Bigg]\Bigg\},
\end{multline}
\begin{multline}\small\label{eq:selfQSigma}
	\qsigma = \mathbb{E}\Bigg\{\tanh^2\Bigg[\dfrac{\beta}{1+\rho_1}\xi^1\biggl(\tilde A_2  \bar{n}^\tau_{\eta^1}+ \tilde A_3 \bar{n}^\phi_{\chi^1}\biggr) +
	\\
	\quad+	 z\beta\sqrt{\dfrac{\rho_1}{(1+\rho_1)^2}\biggl(\tilde A_2  \bar{n}^\tau_{\eta^1}+ \tilde A_3 \bar{n}^\phi_{\chi^1}\biggr)^2 +\a^2 D_2^2 + \b^2 D_4^2}\Bigg]\Bigg\},
\end{multline}
\begin{multline}\small\label{eq:selfQTau}
	\qtau = \mathbb{E}\Bigg\{\tanh^2\Bigg[\dfrac{\beta}{1+\rho_2}\eta^1\theta\biggl(\tilde B_2 \bar n_{\xi^{1}}^\sigma+ \tilde B_3 \bar{n}^\phi_{\chi^1}\biggr) +
	\\
	\quad+	 z\beta\theta\sqrt{\dfrac{\rho_2}{(1+\rho_2)^2}\biggl(\tilde B_2 \bar n_{\xi^{1}}^\sigma+ \tilde B_3 \bar{n}^\phi_{\chi^1}\biggr)^2 +\a^2 D_1^2 + \c^2 D_6^2}\Bigg]\Bigg\},
\end{multline}
\begin{multline}\small\label{eq:selfQPhi}
	\qphi =\mathbb{E}\Bigg\{\tanh^2\Bigg[\dfrac{\beta}{1+\rho_3}\chi^1\alpha\biggl(\tilde C_2 \bar n_{\xi^{1}}^\sigma+ \tilde C_3\bar{n}^\tau_{\eta^1}\biggr)+
	\\
	\quad+	 z\beta\alpha\sqrt{\dfrac{\rho_3}{(1+\rho_3)^2}\biggl(\tilde C_2 \bar n_{\xi^{1}}^\sigma+ \tilde C_3\bar{n}^\tau_{\eta^1}\biggr)^2 +\b^2 D_3^2+\c^2 D_5^2}\Bigg]\Bigg\},
\end{multline}
\noindent
where we define
\begin{equation*}
	\begin{array}{lllll}
		\tilde A_2=\dfrac{\a}{\theta} \sqrt{(1 + \rho_1) (1 + \rho_2)} ,&\tilde A_3=\dfrac{\b}{\alpha} \sqrt{(1 + \rho_1) (1 + \rho_3)} ,
		\\\\ \noalign{\vspace{-10pt}}
		\tilde B_2=\a \sqrt{(1 + \rho_1) (1 + \rho_2)}  ,&\tilde B_3=\dfrac{ \c}{\alpha} \sqrt{(1 + \rho_2) (1 + \rho_3)}  ,
		\\\\ \noalign{\vspace{-10pt}}
		\tilde C_2= \b \sqrt{(1 + \rho_1) (1 + \rho_3)}  ,&\tilde C_3=\dfrac{\c}{\theta} \sqrt{(1 + \rho_2) (1 + \rho_3)} ,
	\end{array}
\end{equation*}
\begin{equation*}
	\begin{array}{lllll}
		D_1=\sqrt{\gamma\dfrac{1+\rho_{12}}{(1+\rho_1)(1+\rho_2)}},\\\\ \noalign{\vspace{-10pt}}
		D_2=\sqrt{\gamma\dfrac{1+\rho_{13}}{(1+\rho_1)(1+\rho_3)}},\\\\ \noalign{\vspace{-10pt}}
		D_3=\sqrt{\gamma\dfrac{1+\rho_{23}}{(1+\rho_2)  (1+\rho_3)}}.
		
	\end{array}
\end{equation*}

\noindent
and the $\bar{n}$ variables are ruled by \eqref{eq:n_large_M}. These results are summarized in the phase diagrams shown  in Fig.~\ref{fig:TamUnsupSelfConPlanes}.

\begin{figure}[t]
    \centering
    \begin{minipage}{0.49\textwidth}
        \centering
        \includegraphics[width=\textwidth]{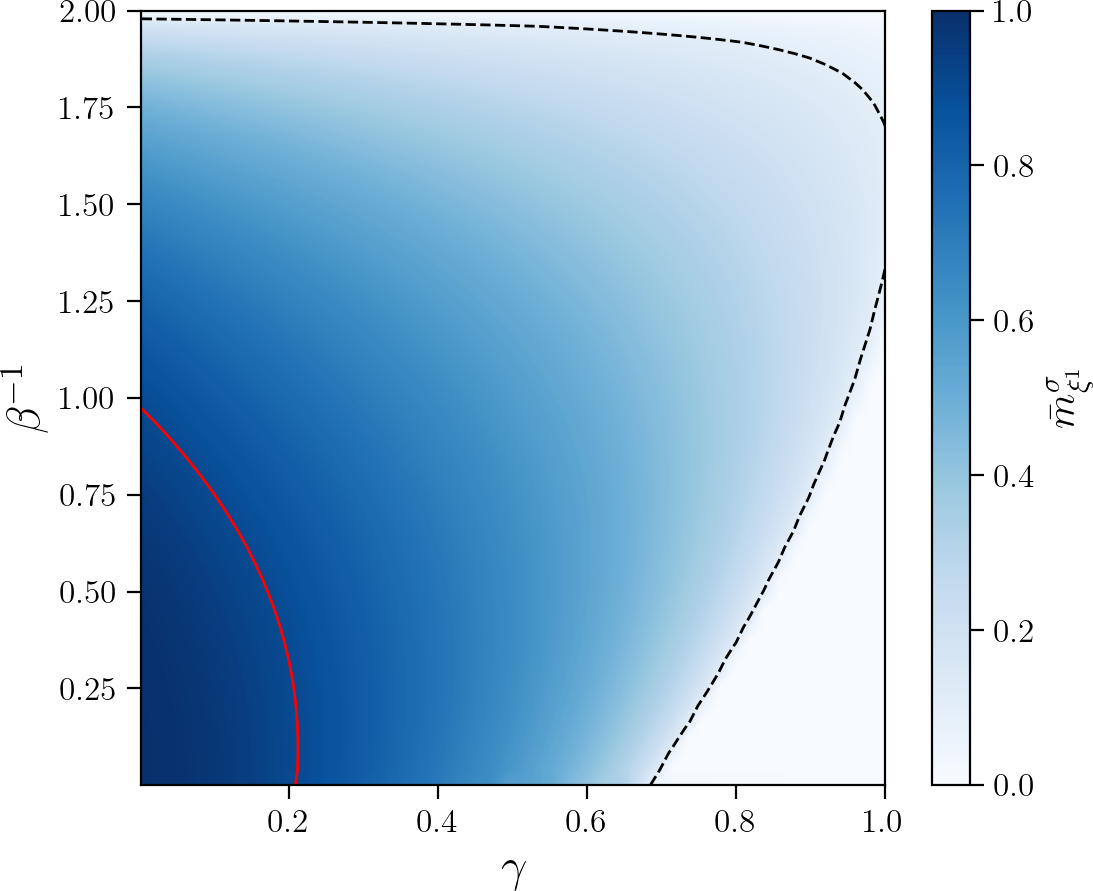}
    \end{minipage}
    \hfill
    \begin{minipage}{0.49\textwidth}
        \centering
        \includegraphics[width=\textwidth]{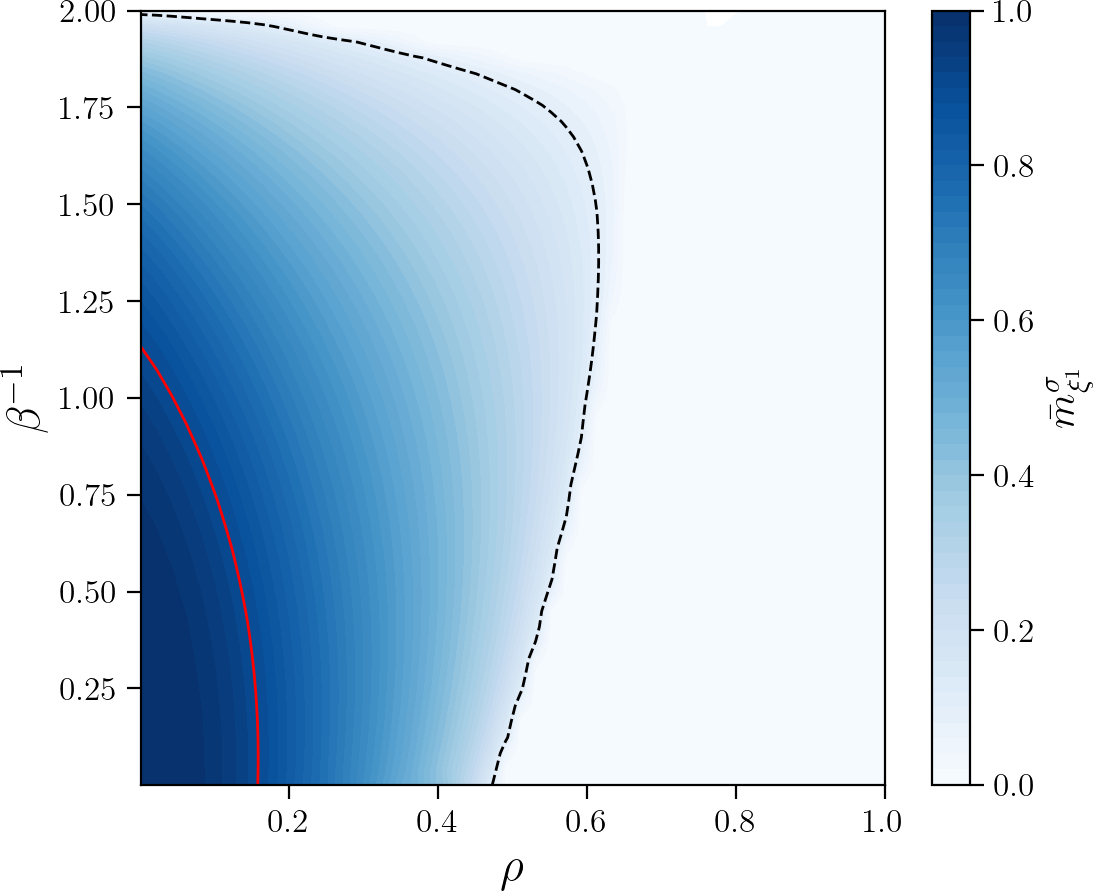}
    \end{minipage}
    \caption{Numerical solution of the self-consistency Eqs.~\eqref{eq:selfMagnSigma}-\eqref{eq:selfQPhi} for $\left(\alpha,\theta\right)=(1,1)$ in different control parameter planes. (left) The solution is represented in the $(\gamma,\beta^{-1})$ plane with datasets entropies $r_1=r_2=r_3=0.3$ and $M=100$, and $(\a,\b,\c)=(1,1,1)$. (right) The solution is shown in the $(\rho,\beta^{-1})$ plane with $\gamma=0.1$. In both cases, we depict only the $\bm\sigma$ layer due to network symmetry reasons. The black dashed line indicates the boundary between two regimes: one where the sigma layer magnetization is greater than $0.1$ and the other where it is less than $0.1$. The red line marks where $\bm\sigma$-layer magnetization exceeds $0.9$ and falls below $0.9$. Unlike the supervised case, where a clear phase transition can be observed, in the unsupervised setting — due to the approximations involved in solving the model (see App. \ref{appsec:proofGuerra}) — such interpretability is not preserved.}
    \label{fig:TamUnsupSelfConPlanes}
\end{figure}

\noindent
In both Fig.~\ref{fig:TamUnsupSelfConPlanes} (left) and Fig.~\ref{fig:TamUnsupSelfConPlanes} (right), two distinct retrieval regimes can be identified: a stable one, present at low temperatures and low values of \(\gamma\) (or, correspondingly, \(\rho\)), where the magnetization is high. In the diagram, this stable region is delineated as the area in which the magnetization exceeds a value of 0.9 by the red line. Conversely, there exists a metastable regime in which the magnetization is non-zero yet remains relatively small; this is represented on the diagram by the dashed black line. These diagrams differ from the phase diagrams typically obtained in storing processes (or those obtainable in supervised learning) due to the substantial approximations applied in Eqs.~\eqref{eq:approxGaussJ}-\eqref{eq:approxGaussV} and in Eq.~\eqref{eq:UnsupStrongApprox}. 

\subsection{Signal-to-noise} \label{subsec:montecarlo}
The fundamental concept behind this technique is quite straightforward \cite{Amit}: assuming that $(\bm\sigma, \bm\tau, \bm\phi)$ represents a stable configuration corresponding to a stored triplet of patterns, the stability requirement for this configuration is simply expressed as $h_i^\sigma\sigma_i \geq 0$ for all $i \in (1, \ldots, N_1)$ for one layer, $h_j^\tau\tau_j \geq 0$ for all $j \in (1, \ldots, N_2)$ for the other layer, and $h_k^\phi\phi_k \geq 0$ for all $k \in (1, \ldots, N_3)$ for the last one. Here, $h$ represents the post-synaptic fields acting on these neurons. Note that this approach does not provide information on the depth of the minima (nor those related to the archetypes nor the spurious ones), hence it provides a  first —zero fast noise— picture of the configuration landscape, later on to be deepened by Monte Carlo runs exploring the effect of the thermal noise.

First, we derive the conditions under which a given neuron $\sigma_i$ (and similarly for $\tau_j$ and $\phi_k$), influenced by the internal field $h_i^\sigma$, aligns with the stored pattern $\xi_i^1$ (and similarly for $h_j^\tau, h_k^\phi, \eta_j^1, \chi_k^1$) that we intend the network to retrieve\footnote{In the following, with no loss of generality, we focus on the choice $N=N_1=N_2=N_3$ for the sake of simplicity.}. We define the model's dynamics in the $\beta \rightarrow \infty$ limit as described by the following update equations: the objective is to assess the state of neuron $\bm\sigma$ at time $t = 1$, starting from the initial configuration $\bm\sigma^{(t=0)} = \bm\xi^1$, $\bm\tau^{(t=0)} = \bm\eta^1$, and $\bm\phi^{(t=0)} = \bm\chi^1$.

\begin{align}\label{eq:updateRule}
	\sigma_i^{(t+1)} &=\mathrm{sign}(h_{i}^{\sigma,(t)}\sigma_{i}^{(t)}), \nonumber \\
	\tau_j^{(t+1)}&=\mathrm{sign}({{{h}}}_{j}^{\tau,(t)}\tau_{j}^{(t)}), \\
	\phi_k^{(t+1)} &=\mathrm{sign}({h}_{k}^{\phi,(t)}\phi_{k}^{(t)}), \nonumber
\end{align}
where, for the supervised case
{\small
\begin{align}
h^{\sigma}_{i}(\bm\eta,\bm\chi) &= \SOMMA{\mu=1}{K}\left(\dfrac{\a\sqrt{(1+\rho_1)(1+\rho_2)}}{\sqrt{N_1 N_2}}\SOMMA{j=1}{N_2}\hat\eta_j^\mu\tau_j +\dfrac{\b\sqrt{(1+\rho_1)(1+\rho_3)}}{\sqrt{N_1 N_3}}\SOMMA{k=1}{N_3}\hat\chi_k^\mu\phi_k\right)\hat\xi_i^\mu  \,,
\end{align}
\begin{align}
h^{\tau}_{j}(\bm\xi,\bm\chi) &= \SOMMA{\mu=1}{K}\left(\dfrac{\a\sqrt{(1+\rho_1)(1+\rho_2)}}{\sqrt{N_1 N_2}}\SOMMA{i=1}{N_1}\hat\xi_i^\mu\sigma_i + \dfrac{\c\sqrt{(1+\rho_2)(1+\rho_3)}}{\sqrt{N_2 N_3}}\SOMMA{k=1}{N_3}\hat\chi_k^\mu\phi_k\right)\hat\eta_j^\mu \,,
\end{align}
\begin{align}
h^{\phi}_k(\bm\xi,\bm\eta) &= \SOMMA{\mu=1}{K}\left(\dfrac{\b\sqrt{(1+\rho_1)(1+\rho_3)}}{\sqrt{N_1 N_3}}\SOMMA{i=1}{N_1}\hat\xi_i^\mu\sigma_i+\dfrac{\c\sqrt{(1+\rho_2)(1+\rho_3)}}{\sqrt{N_2 N_3}}\SOMMA{j=1}{N_2}\hat\eta_j^\mu\tau_j\right)\hat\chi_k^\mu \,,
\end{align} 
}
whereas, in the unsupervised setting, we have the following post-synaptic fields

	\begin{align}
		h^{\sigma}_{i}(\bm \eta, \bm \chi) &= \SOMMA{\mu=1}{K}\Bigg(\dfrac{\a}{r_1r_2\sqrt{N_1 N_2(1+\rho_1)(1+\rho_2)}\,M}\SOMMA{j, a=1}{N_2,M}\Xi_i^{\mu,a}\Theta_j^{\mu,a}\tau_j+ \nonumber
		\\
		& \qquad\,\,\,+\dfrac{\b}{r_1r_3\sqrt{N_1 N_3(1+\rho_1)(1+\rho_3)}\,M}\SOMMA{k,a=1}{N_3,M}\Xi_i^{\mu,a}\Upsilon_k^{\mu,a}\phi_k\Bigg),
      \end{align}
      \begin{align}
    h^{\tau}_{j}(\bm \xi, \bm \chi) &= \SOMMA{\mu=1}{K}\Bigg(\dfrac{\a}{r_1r_2\sqrt{N_1 N_2(1+\rho_1)(1+\rho_2)}\,M}\SOMMA{i, a=1}{N_1,M}\Xi_i^{\mu,a}\Theta_j^{\mu,a}\sigma_i +\nonumber
		\\
		& \qquad\,\,\,+ \dfrac{\c}{r_2r_3\sqrt{N_2 N_3(1+\rho_2)(1+\rho_3)}\,M}\SOMMA{k,a=1}{N_3,M}\Theta_j^{\mu,a}\Upsilon_k^{\mu,a}\phi_k\Bigg),
	\end{align}
        \begin{align}
		h^{\phi}_k (\bm \xi, \bm \eta) &= \SOMMA{\mu=1}{K}\Bigg(\dfrac{\b}{r_1r_3\sqrt{N_1 N_3(1+\rho_1)(1+\rho_3)}\,M}\SOMMA{i,a=1}{N_1,M}\Xi_i^{\mu,a}\Upsilon_k^{\mu,a}\sigma_i+\nonumber
		\\
		& \qquad\,\,\,+\dfrac{\c}{r_2r_3\sqrt{N_2 N_3(1+\rho_2)(1+\rho_3)}\,M}\SOMMA{j,a=1}{N_2,M}\Theta_j^{\mu,a}\Upsilon_k^{\mu,a}\tau_j\Bigg).
     \end{align}

\begin{algorithm}\label{MCMC}
\caption{MCMC Sequential Glauber dynamics}
\KwIn{Parameters $\bm G = (\a,\b,\c)$; layer dimensions $\bm N=(N_1,N_2,N_3)$; number of patterns $K$; number of examples $\bm M = (M_1,M_2,M_3)$; dataset qualities $\bm r=(r_1,r_2,r_3)$; number of steps $N_s$; inverse temperature $\beta$; input state $\bm S^{(0)}=(\sigma^{(0)},\tau^{(0)},\phi^{(0)})$. }
\KwOut{Time series of magnetizations 
\[
\mathcal{M} = \Big\{ \big( m_{\xi^1}^{\sigma,\, (t=j)},m_{\eta^1}^{\tau,\, (t=j)},m_{\chi^1}^{\phi,\, (t=j)} \big) \Big\}_{j=0}^{N_s}\,.
\]}

\BlankLine
\For{$i \gets 1$ \KwTo $3$}{
Generate base patterns, examples, and test according to

  \(P_i \gets\) random binary matrix in \(\{-1,+1\}^{K \times N_i}\)\;
  \(E_i \gets \mathtt{generateExamples}(P_i, M_i, r_i)\)\;
}
\BlankLine

Record initial magnetizations:
\[
m_{\xi^1}^{\sigma,\, (t=0)} \gets \mathtt{magn}(\sigma^{(0)},P_1),\quad m_{\eta^1}^{\tau,\, (t=0)} \gets \mathtt{magn}(\tau^{(0)},P_2),\quad m_{\chi^1}^{\phi,\, (t=0)} \gets \mathtt{magn}(\phi^{(0)},P_3),
\]
and set \(\mathcal{M}[0] \gets \big( m_{\xi^1}^{\sigma,\, (t=0)},m_{\eta^1}^{\tau,\, (t=0)},m_{\chi^1}^{\phi,\, (t=0)} \big)\).

\BlankLine
\For{\(j \gets 1\) \KwTo \(N_s\)}{
    \For{\(i \gets 1\) \KwTo $3$}{
        Generate noise vector \(u_i \sim \mathcal{U}(-1,1)^{N_i}\)\;
    }
    Update states simultaneously:
    \begin{align*}
        \sigma^{(j)} &\gets \operatorname{sign}\Big(\tanh\big(h^\sigma(\bm S^{(j-1)},\bm E,\bm G, \bm N, \bm M, \bm \rho)\cdot\beta\big)+u_1\Big) \\
        \tau^{(j)} &\gets \operatorname{sign}\Big(\tanh\big(h^\tau(\bm S^{(j-1)},\bm E,\bm G, \bm N, \bm M, \bm \rho)\cdot\beta\big)+u_2\Big) \\
        \phi^{(j)} &\gets \operatorname{sign}\Big(\tanh\big(h^\phi(\bm S^{(j-1)},\bm E,\bm G, \bm N, \bm M, \bm \rho)\cdot\beta\big)+u_3\Big).
    \end{align*}
    Compute magnetizations: \\
    $\qquad m_{\xi^1}^{\sigma,\, (t=j)} \gets \mathtt{magn}(\sigma^{(j)},P_1),$ \\
    $\qquad m_{\eta^1}^{\tau,\, (t=j)}
     \gets \mathtt{magn}(\tau^{(j)},P_2),$ \\
    $\qquad m_{\chi^1}^{\phi,\, (t=j)}
     \gets \mathtt{magn}(\phi^{(j)},P_3); $
    
    Update \(\mathcal{M}[j] \gets \big( m_{\xi^1}^{\sigma,\, (t=j)},m_{\eta^1}^{\tau,\, (t=j)},m_{\chi^1}^{\phi,\, (t=j)} \big)\).
}
\BlankLine
\Return \(\mathcal{M}\).
\end{algorithm}

\vspace{3mm}
\noindent

\vspace{3mm}
\noindent
After deriving the state of the network at time $t=1$, $\bm \sigma^{(t=1)}$,$\bm\tau^{(t=1)}$, $\bm\phi^{(t=1)}$ the overlap between this state and the patterns ${\bm\xi}^1,{\bm\eta}^1,{\bm\chi}^1$ is given by:
\begin{equation}
\label{setofmagn}
    m_{x^1}^y = \dfrac{1}{N}\SOMMA{i=1}{N}x_i^1 y_i^{(t=1)},
\end{equation}
where $\bm x^1 \in \{\bm\xi^1,\bm\eta^1, \bm\chi^1\}$ and $\bm y \in \{\bm\sigma, \bm\tau, \bm\phi\}$.
\\
\noindent
Now, we explicitly calculate the matrix $3\times 3$ whose entries are
\begin{equation}\label{eq:TAMUnsupMagnMatrix}
	 \begin{bmatrix}
		m_{\xi^1}^\sigma  & m_{\eta^1}^\sigma  & m_{\chi^1}^\sigma \\
		m_{\xi^1}^\tau & m_{\eta^1}^\tau & m_{\chi^1}^\tau \\
		m_{\xi^1}^\phi & m_{\eta^1}^\phi & m_{\chi^1}^\phi \\
	\end{bmatrix}
\end{equation}
and determine whether it becomes diagonal, i.e., if each layer calls the corresponding pattern, with the network engaged in the task of \textit{generalized pattern recognition}, i.e., when the Cauchy datum is equal to $\bm\sigma^{(t=0)} = \bm\xi^1$, $\bm\tau^{(t=0)} = \bm\eta^1$, and $\bm\phi^{(t=0)} = \bm\chi^1$.

\vspace{3mm}

\noindent
We can rewrite the update rule for a given configuration (see Eqs.~\ref{eq:updateRule}) as follows
\begin{equation*}
	\sigma^{(t+1)}_{i} = \sign\left[h^{\sigma,(t)}_i(\sigma^{(t)}_i)\sigma_i^{(t)}\right]\sigma^{(t)}_i \,.
\end{equation*}
Now:
\begin{equation*}
	m_{\xi^1}^{\sigma}(t+1)=\dfrac{1}{N}\sum_{i=1}^{N}\xi^1_i\sigma^{(t+1)}_i = \dfrac{1}{N}\sum_{i=1}^{N}\xi^1_i\sigma^{(t)}_i\sign\left[h^{\sigma,(t)}_i(\sigma^{(t)}_i)\sigma_i^{(t)}\right] .
\end{equation*}
Assuming that at time $t$, the system is in the configuration
\begin{equation*}
	{\bm\sigma}^{(t)} = \bm\xi^1 
\end{equation*}
\begin{figure}[t]
    \centering
    \includegraphics[height=5.0cm]{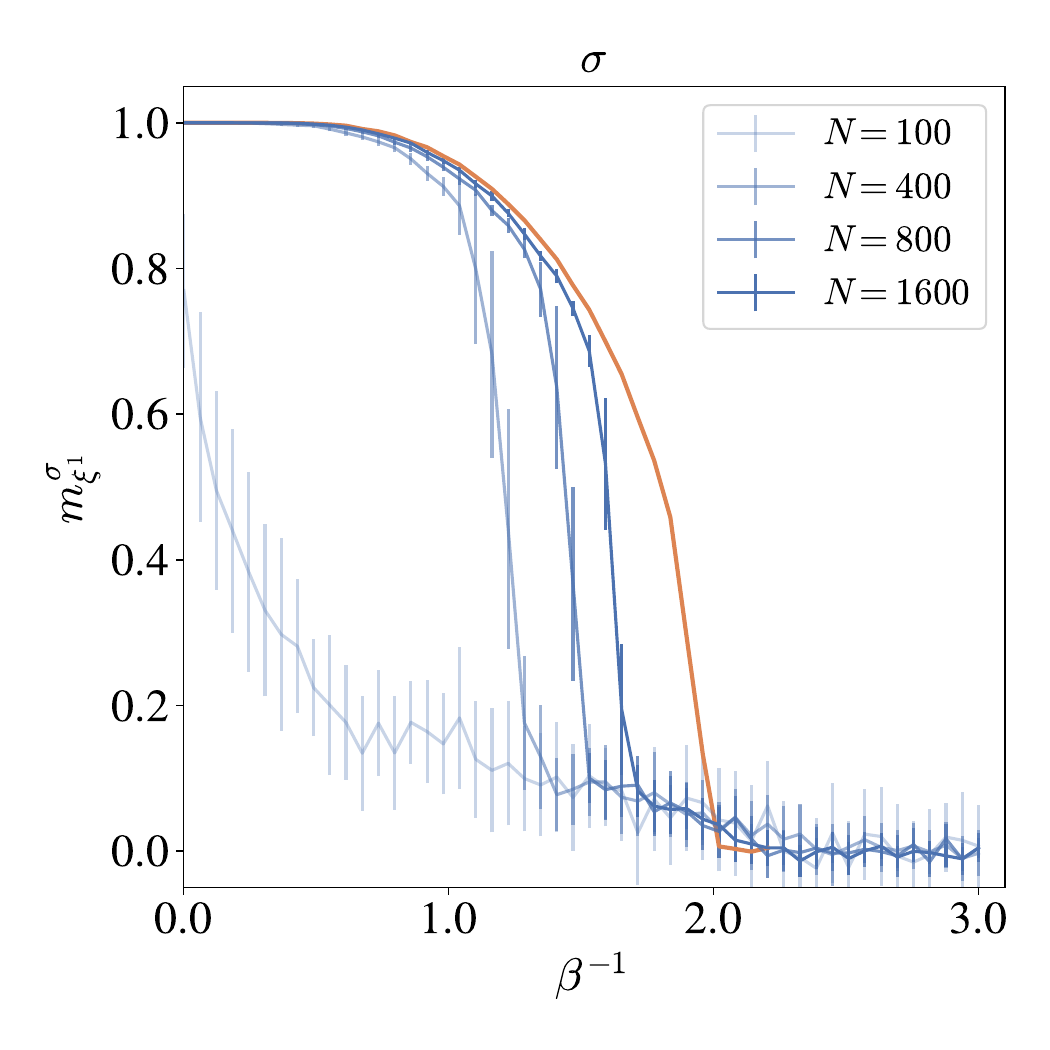}
    \label{0.9}
    \includegraphics[height=5.1cm]{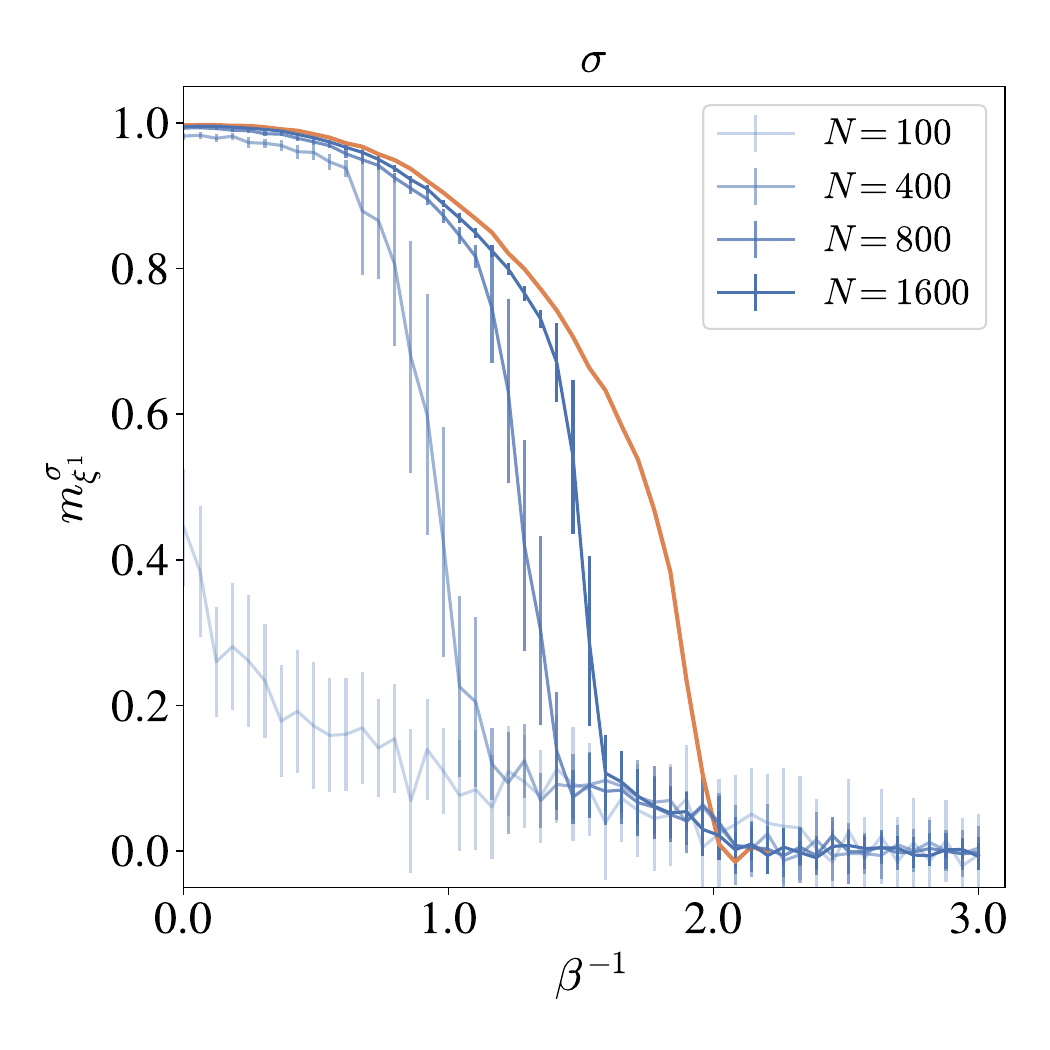}
    \label{0.6}
    \includegraphics[height=5.1cm]{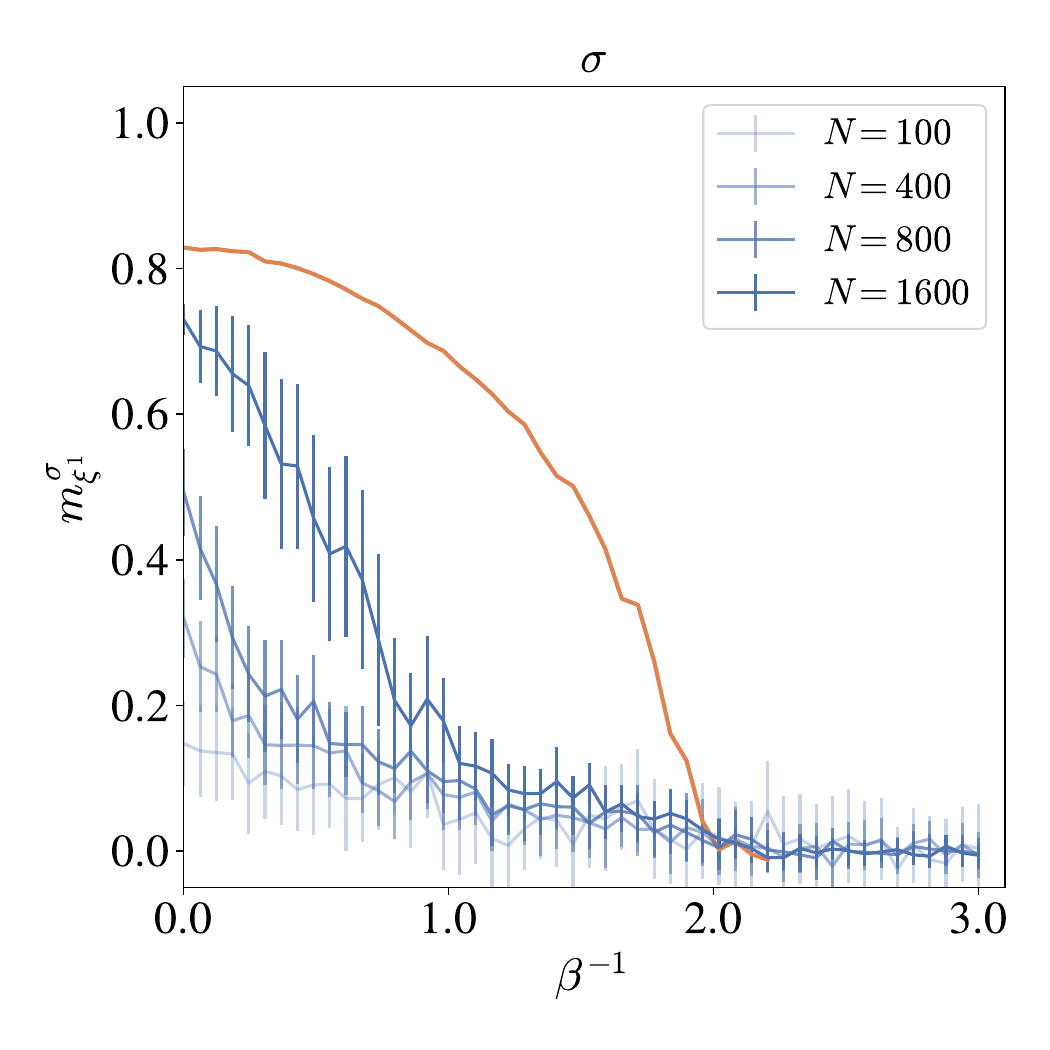}
    \label{0.3}
    \caption{
    Monte Carlo simulation of a symmetric $(\alpha =1, \ \theta=1)$ TAM network. The number of archetypes is fixed at $K= 40$ for each layer, with $M_1=M_2=M_3 =20$. Each curve in each plot is displayed in different shades of blue, corresponding to varying numbers of neurons (i.e., $N=N_1=N_2=N_3 \in \{100,400,800,1600\}$), while in orange, the same curves are shown in the limit $\gamma\to 0$. Each plot represents a distinct set of MC simulations that differ only in the quality of the dataset, denoted by $r$: specifically, from left to right, we set $\bm r=(0.9, 0.6, 0.3)$. The threshold value (critical temperature), beyond which magnetization drops sharply as the temperature increases, can be determined from the phase diagram in the thermodynamic limit $N_1 \to \infty$. Due to the symmetry of the network, we present the results for only one layer, $\bm\sigma$. The larger the bars, the greater the errors in the results, evaluated as the standard deviation over 100 independent runs.}
    \label{MagnVsTemp}
\end{figure}
then we rewrite the previous equation as:
\begin{equation*}
	m_{\xi^1}^{\sigma}(t+1) = \dfrac{1}{N}\sum_{i=1}^{N}\sign\left[h^{\sigma,(t)}_i(\xi^1)\xi^1\right].
\end{equation*}
We computed first and second moments of the field (see App.~\ref{appsec:evaluation} for the explicit calculations), so that we can use the Gaussian approximation:
\begin{equation*}
	m_{\xi^1}^{\sigma}(t+1) = \dfrac{1}{N}\sum_{i=1}^{N}\sign\left[\mu_1 + z_i\sqrt{\mu_2- \mu^2_1}\right]
\end{equation*}
where $z_i \sim\mathcal{N}(0,1) $.
For $N \gg 1$, by the law of large numbers, the empirical mean tends to the theoretical mean (i.e. $ N^{-1}\sum_{i=1}^{N}\to \mathbb{E}_z$), thus we obtain:
\begin{equation}
	\mathbb{E}_z\biggl(\sign\left[ \mu_1 + z\sqrt{\mu_2- \mu^2_1}\right]\biggr) .
\end{equation}
Now, noting that
\begin{equation}
	\dfrac{1}{\sqrt{2\pi}} \int e^{- \dfrac{z^2}{2}} \sign\left[\mu_1 + z\sqrt{\mu_2- \mu^2_1}\right] \, dz = \erf\left[\dfrac{1}{\sqrt{2}}\dfrac{\mu_1}{\sqrt{\mu_2-\mu_1^2}}\right]
\end{equation}
we obtain the thresholds in the supervised setting by substituting the expressions for \(\mu_1\) and \(\mu_2\) (\eqref{eq:m1sup} and \eqref{eq:m2sup}, respectively):
\begin{figure}[t]
    \centering
    \begin{subfigure}{0.49\textwidth}
        \centering
        \includegraphics[width=1\linewidth]{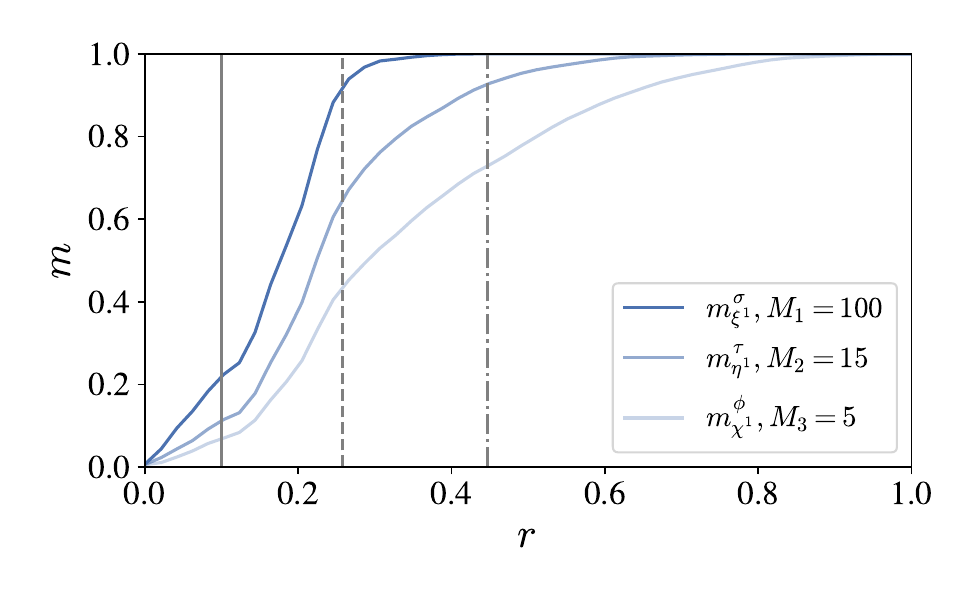}
    \end{subfigure}
    \hfill
    \begin{subfigure}{0.48\textwidth}
        \centering
        \includegraphics[width=1\linewidth]{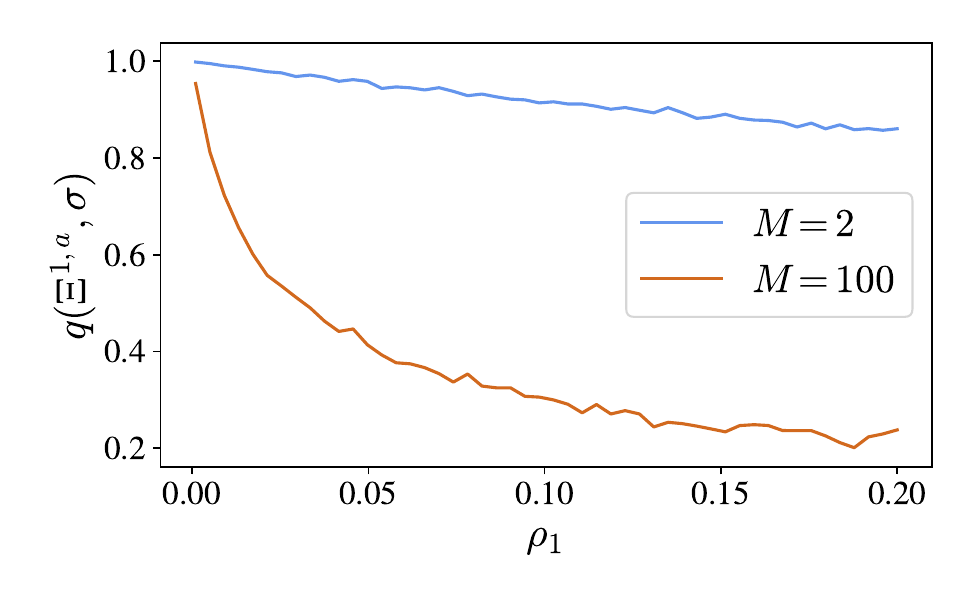}
    \end{subfigure}
    \caption{
    (Left) Monte Carlo simulations were conducted on a network comprising \(N_1=N_2=N_3=400\) neurons, employing \(M_1=100\), \(M_2=15\), and \(M_3=5\) examples per layer, under a storage load of \(\gamma=0.025\). Pattern retrieval is successfully achieved even with low-quality datasets as the number of examples increases, with the parameters set to \(\beta=10\) and the symmetry values \(\alpha=\theta=1\). The vertical lines indicate the theoretical predictions derived from statistical mechanics; specifically, from left to right, the first corresponds to \(M=100\), the second to \(M=15\), and the third to \(M=5\). 
    (Right) The performance in a generalized pattern reconstruction setting was evaluated using a TAM network with \(N=300\) neurons per layer.  
    Two datasets were compared: one with a small dataset of \(M=2\) examples (blue line) and one with a larger dataset of \(M=100\) examples (orange line).  
    Only the \(\bm{\sigma}\) layer is reported, since the \(\bm{\tau}\) and \(\bm{\phi}\) layers, initialized with a Rademacher distribution, exhibit zero overlap.  
The evaluation focused on the overlap between the network's final state and the examples after Monte Carlo simulations stabilized.  
As dataset entropy increases, the network with \(M=2\) retrieves the individual examples rather than the archetype.  
Conversely, the network with \(M=100\) is capable of retrieving the archetype; in other words, the network successfully generalizes the examples, with the archetypes corresponding to local minima \cite{agliariEmergence}.
}
    \label{fig:combined_variR}
\end{figure}

\begin{equation}
\label{magnsigntonoiseSUPsigma}
   m_{\xi^1}^\sigma = \mathrm{erf}\left[\dfrac{1}{\sqrt{2}}\dfrac{1}{\sqrt{\rho_1  +\gamma\dfrac{( \a^2 +\b^2)}{\mu_{1,\sigma}^2}}}\right],
\end{equation}
\begin{equation}
\label{magnsigntonoiseSUPtau}
  m_{\eta^1}^\tau = \mathrm{erf}\left[\dfrac{1}{\sqrt{2}}\dfrac{1}{\sqrt{\rho_2  +\gamma\theta^2\dfrac{( \a^2 +\c^2)}{\mu^2_{1,\tau}}}}\right],
  \end{equation}
  \begin{equation}
  \label{magnsigntonoiseSUPphi}
  m_{\chi^1}^\phi = \mathrm{erf}\left[\dfrac{1}{\sqrt{2}}\dfrac{1}{\sqrt{\rho_3  +\gamma\alpha^2\dfrac{( \b^2 +\c^2)}{\mu^2_{1,\phi}}}}\right].
\end{equation}

\noindent
Similarly, by substituting Eqs.~\eqref{eq:mom1Campo} and \eqref{eq:mom2Campo}, which contain the moments of the post-synaptic field for the unsupervised setting, we obtain the following thresholds:
\begin{equation}\label{magnsigntonoiseUNSUPsigma}
    m_{\xi^1}^\sigma =\mathrm{erf}\left[\dfrac{1}{\sqrt{2}}\dfrac{1}{\sqrt{\rho_1  +\dfrac{\gamma}{(1+\rho_1)\mu_{1,\sigma}^2}\left( \a^2\dfrac{1+\rho_{12}}{1+\rho_2} + \b^2\dfrac{1+\rho_{13}}{1+\rho_3}\right)}}\right],
\end{equation}
\begin{equation}\label{magnsigntonoiseUNSUPtau}
m_{\eta^1}^\tau = \mathrm{erf}\left[\dfrac{1}{\sqrt{2}}\dfrac{1}{\sqrt{\rho_2  +\dfrac{\gamma\theta^2}{(1+\rho_2)\mu_{1,\tau}^2}\left( \a^2\dfrac{1+\rho_{12}}{1+\rho_1} + \c^2\dfrac{1+\rho_{23}}{1+\rho_3}\right)}}\right], 
\end{equation}
\begin{equation}\label{magnsigntonoiseUNSUPphi}
m_{\chi^1}^\phi = \mathrm{erf}\left[\dfrac{1}{\sqrt{2}}\dfrac{1}{\sqrt{\rho_3  +\dfrac{\gamma\alpha^2}{(1+\rho_3)\mu_{1,\phi}^2}\left( \b^2\dfrac{1+\rho_{13}}{1+\rho_1} + \c^2\dfrac{1+\rho_{23}}{1+\rho_2}\right)}}\right]. 
\end{equation}

\noindent
In both scenarios, the remaining magnetizations in Eq.~\eqref{setofmagn} are all zero, so we conclude that the task of \textit{generalized pattern recognition} was successfully performed in a single step, as the network recovered the pattern triplet $(\bm\xi^1,\bm\eta^1,\bm\chi^1)$ with the corresponding non-zero magnetizations.



\subsection{Critical loads and thresholds for learning} \label{subsec:criticalLoad}

An inspection of  the explicit expressions  of the magnetization of the three layers (derived in the previous subsection, see Eqs.~\eqref{magnsigntonoiseSUPsigma}–\eqref{magnsigntonoiseSUPphi} and \eqref{magnsigntonoiseUNSUPsigma}–\eqref{magnsigntonoiseUNSUPphi} for the supervised and the unsupervised scenario, respectively) allows to assess the necessary amount of examples that have to be provided to the network in order to grant it the correct retrieval of the various archetypes hidden in these noisy representations. Indeed, when the network retrieves one of the archetypes, we require that the one-step magnetization of the pertaining archetype gets larger than $\erf(\Sigma)$, where $\Sigma \in \mathbb{R}^+$ defines tolerance level, 
thus obtaining the criteria for supervised and unsupervised settings (Eqs.~\eqref{eq:sogliasup} and \eqref{eq:sogliaunsup}, respectively, where we report only those related to layer $\bm\sigma$ for brevity):

\begin{equation}
\label{eq:sogliasup}
    \dfrac{1}{\sqrt{2}}\dfrac{1}{\sqrt{\rho_1  +\gamma\dfrac{( \a^2 +\b^2)}{\mu_{1,\sigma}^2}}} > \Sigma \,,
\end{equation}

\begin{equation}
\label{eq:sogliaunsup}
    \dfrac{1}{\sqrt{2}}\dfrac{1}{\sqrt{\rho_1  +\dfrac{\gamma}{(1+\rho_1)\mu_{1,\sigma}^2}\left( \a^2\dfrac{1+\rho_{12}}{1+\rho_2} + \b^2\dfrac{1+\rho_{13}}{1+\rho_3}\right)}} > \Sigma \,.
\end{equation}

\noindent
We set this tolerance level to $\Sigma = 1/\sqrt{2}$, which corresponds to the fairly standard condition
\begin{equation*}
    \mathbb{E}\left[h^{\xi^1}_{i}(\bm\eta^1,\bm\chi^1,t=0) \xi^1\right] > \sqrt{\text{Var}\left[h^{\xi^1}_{i}(\bm\eta^1,\bm\chi^1,t=0) \xi^1\right]},
\end{equation*}
and determines a lower bound for $M$ that guarantees the signal (l.h.s.) to be prevailing over the noise (r.h.s.).
\newline
Therefore, the stability condition in the supervised case  identifies the threshold for learning as the minimal amount of examples that have to be provided to the network such that, in future expositions to noisy examples,  its layers $\bm\sigma$, $\bm\tau$, and $\bm\phi$  retrieve their respective patterns $\bm\xi^1$, $\bm\eta^1$, and $\bm\chi^1$, namely the archetypes rather than the examples. These conditions are expressed as:
\begin{equation*}
\begin{array}{lll}
\rho_1  +\gamma\dfrac{( \a^2 +\b^2)}{\mu_{1,\sigma}^2} < 1,
\quad
\rho_2  +\gamma\theta^2\dfrac{( \a^2 +\c^2)}{\mu^2_{1,\tau}} < 1,
\quad
\rho_3  +\gamma\alpha^2\dfrac{( \b^2 +\c^2)}{\mu^2_{1,\phi}} < 1.
\end{array}
\end{equation*}
Note that the above knowledge  provides a priori, before starting to train the network, a quantitative condition necessary to guarantee the success of the training process and, thus, archetype storage and successive retrieval, across all the three layers.

In the simplest scenario of uniform dataset per layer —namely, where the three datasets provided to the layers consist of an equal number of examples generated with an identical noise level (thereby sharing the same entropy)— and assuming a network with symmetric layer sizes (hence setting the control parameters to values $\alpha=\theta=1$, $\a=\b=\c=1$, $r_1=r_2=r_3=r$, and $M_1=M_2=M_3=M$), the previously calculated learning thresholds simplify to
\begin{equation}\label{eq:thrSimpleSup}
\rho + \gamma \dfrac{(1 + \rho)^2}{2}<1,
\end{equation}
where $\rho = \frac{1-r^2}{M r^2}$ denotes the entropy of each dataset. 
In the low load regime, with $\gamma=0$, expression \eqref{eq:thrSimpleSup} reduces to
\begin{equation}
M_{\times} (r) =\dfrac{1-r^2}{ r^2},
\end{equation}
where $M_\times$ represents the minimum number of examples that satisfy inequality \eqref{eq:thrSimpleSup}. Notably, this reveals the power-law scaling $M_\times(r)\propto r^{-2}$ in the threshold for learning. This scaling behavior aligns with the findings of \cite{meir2020power}, who demonstrated that such power-law relationships indicate that the test error converges as a power-law with increasing dataset size, providing a theoretical benchmark for evaluating the training complexity.

\noindent
For the case where $0\neq\gamma \leq 1$\footnote{The assumption $\gamma\leq 1$ is well justified. In fact, for any network configuration—whether in a supervised or unsupervised setting—the maximum load-bearing capacity $\gamma_{\mathrm{max}}$, remains significantly below 1 (see Fig.~\ref{fig:TAM_Gamma_Synapses} (left)).}, we obtain:
\begin{equation*}
M_{\times}(r) = \dfrac{(1-r^2)(1 + \gamma)}{r^2(\gamma - 2)} + \sqrt{\dfrac{(1-r^2)^2(1 + 4\gamma)}{r^4(\gamma - 2)^2}}
\end{equation*}
which similarly preserves the power scaling $M_\times(r)\propto r^{-2}$ in the high-load regime. 
The persistence of this scaling relationship across different load conditions underscores the robustness of the power-law relation, further supporting the theoretical framework established by \cite{meir2020power}.

In the unsupervised setting, applying the same simplifying assumptions as in the supervised case yields the following learning threshold for all three layers:
\begin{equation}\label{thrSimple}
\rho + \frac{\gamma}{2}(1+\bar\rho)<1,
\end{equation}
where $\bar\rho = \frac{1-r^4}{M r^4}$. Following a similar analysis as in the supervised setting, the low-load regime gives:
\begin{equation*}
M_\times(r)= \frac{1-r^2}{r^2},
\end{equation*}
which maintains the power-law scaling $M \propto r^{-2}$. In the high-load regime (i.e., $0\neq\gamma \leq 1$), we find:
\begin{equation*}
M_\times(r,\gamma) = \frac{2r^4 - 2r^2 - \gamma + r^4\gamma}{r^4 \left(\gamma-2\right)},
\end{equation*}
which exhibits the power scaling $M \propto r^{-4}$.

\medskip
\noindent
These scaling behaviors are particularly significant when considered alongside the complementary research of \cite{uzan2019biological}, which provides evidence that similar scaling laws emerge in biological learning scenarios. Their work suggests that such power-law scaling not only characterizes artificial neural networks but also reflects fundamental properties of biological learning systems. Thus, the power-law scaling identified in our threshold analysis connects to a broader theoretical framework that links critical phenomena in statistical mechanics with learning dynamics across both artificial and biological neural networks.

\vspace{3mm}
\noindent To conclude this section, we present\footnote{We will now only present the plots for the supervised setting; for the unsupervised case we obtain similar results.} Monte Carlo simulations that involve the temporal evolution of the three fields $h^{\sigma}_{i}(\bm\eta,\bm\chi)$, $h^{\tau}_{j}(\bm\xi,\bm\chi)$ and $h^{\phi}_{i}(\bm\xi,\bm\eta)$. In particular, in Fig.~\ref{MagnVsTemp} we plot the values of the magnetization of the layer $\bm \sigma$ as a function of the temperature ($\beta^{-1}$) for different values of the dataset quality: the algorithm is reported in the pseudo-code {\em Algorithm 2}. In Fig.~\ref{fig:combined_variR} (left) we test the network’s ability to reconstruct patterns under different values of $M_1,M_2,M_3$, while in Fig.~\ref{fig:combined_variR} (right) we plot the overlap between the examples and the neurons states $\bm{\sigma}$ in function of the dataset entropy.

\section{Applications}
\label{sec:numFinding}
In this section, we complement our theoretical analysis with numerical investigations of the performance of TAM model under both supervised and unsupervised learning paradigms. Through the implementation of Monte Carlo simulations and signal-to-noise analyses, we validate the analytical predictions derived in the previous section and explore the dynamic behavior of the network across different control parameters' regimes.

Our numerical approach focuses on assessing retrieval accuracy, critical load capacity and computational phase transitions by systematically varying  dataset quantity and quality as well as synaptic coupling strengths. Additionally, we examine the influence of thermal noise and asymmetry in network architecture on the model’s capacity to learn, store and retrieve patterns. The results obtained provide empirical support for our theoretical findings, shedding light on the fundamental differences between supervised and unsupervised learning in TAM networks.

By leveraging numerical simulations, we further investigate the emergence of cooperative behavior in the network, analyzing how information is distributed and reinforced across different layers. These insights are essential for understanding the scalability and robustness of hetero-associative memory models in practical applications.

\subsection{Pattern processing}\label{subsec:applications}

\medskip
Using Monte Carlo simulations, we present the network with various challenges; the first one is the \emph{generalized pattern reconstruction}. 
Unlike networks equipped with only one visible layer, here the presence of three visible layers allows for the simultaneous retrieval of three patterns, one per layer. 
Therefore, we initialize the first layer with a specific example, denoted as $\boldsymbol{\tilde\Xi}^{1}$, taken from a test dataset on which the network has not been trained, having the same distribution as the training dataset.
The remaining two layers are initialized randomly. In Fig.~\ref{fig:MC} (right) we can see that the network can successfully retrieve the triple pattern $({\bm\xi}^1, {\bm\eta}^1, {\bm\chi}^1)$. 
Furthermore, this result is in perfect agreement with the signal-to-noise analysis presented in Subsec.~\ref{subsec:montecarlo}.

\begin{figure}
    \centering
    \fbox{\includegraphics[width=0.44\linewidth]{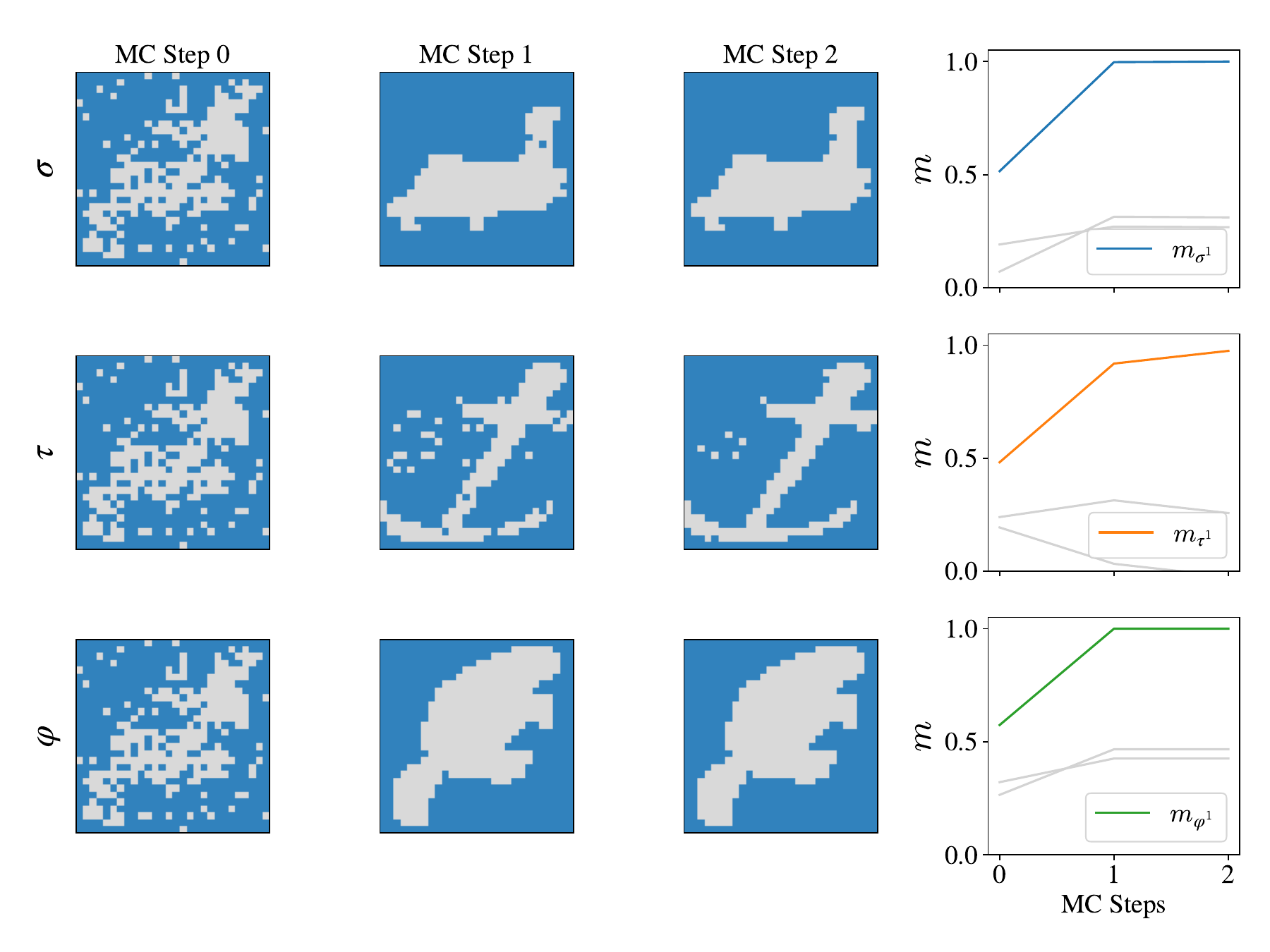}}
    \hfill
    \fbox{\includegraphics[width=0.44\linewidth]{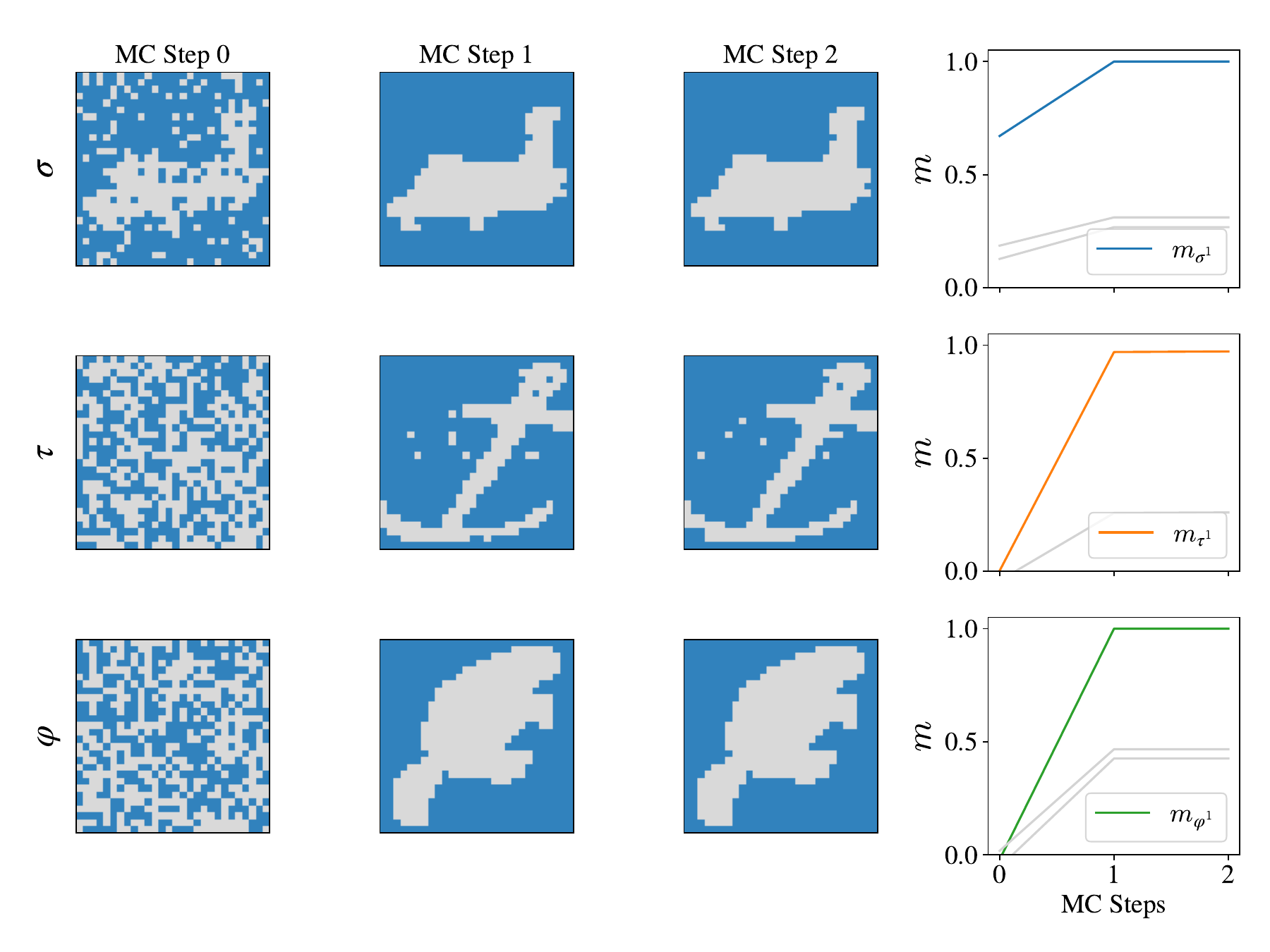}}
    \hfill
    \fbox{\includegraphics[width=0.0365\linewidth]{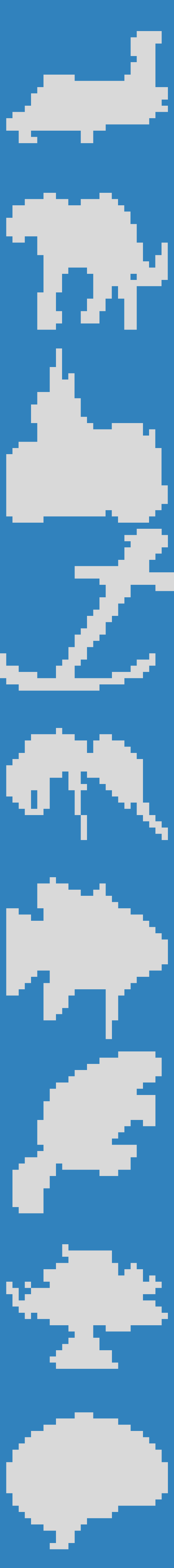}}
    \caption{Monte Carlo simulations illustrating two complementary tasks performed by the symmetric TAM. (left box) Disentanglement of spurious states: a mixture of examples (corresponding to the archetypes “airplane”, “anchor”, and “beaver”) extracted from a test dataset is presented simultaneously to the three visible layers; the evolution of the neuronal configurations (\(\sigma\), \(\tau\), \(\phi\)) and the corresponding Mattis magnetizations (distinguished by color) is tracked during the simulation with \(\beta=5\). 
    (right box) Generalized pattern reconstruction: the first layer is initialized with a specific test example \(\boldsymbol{\tilde\Xi}^{1}\) while the remaining two layers are randomly initialized, and the network is challenged to retrieve the complete triple pattern \(({\bm\xi}^1, {\bm\eta}^1, {\bm\chi}^1)\). In both tasks, simulations are conducted on images of size \(28\times 28\) pixels, with parameters \(N=784\), \(K=3\), \(\bm r=(0.7,0.7,0.7)\), and \(\bm M=(100,100,100)\). Additionally, the nine archetypes (three per layer) are also displayed on the far right of the diagram.}
    \label{fig:MC}
\end{figure}
\par\vspace{3mm}
\noindent As already seen in \cite{TAMstoring}, another interesting task, which turns out to be a special case of generalized pattern recognition, is the \emph{disentanglement} of spurious states.
In such a task, the three layers are given a mixture of examples extracted from a test dataset and the effective recall of the relevant archetype is checked for each of them.
From the plot in Fig.~\ref{fig:MC} (left), we observe that for suitable values of $\beta$, the network performs this task flawlessly.
In addition, we would like to point out that by taking an example that the network has never viewed, we have shown its actual generalization capabilities.

\medskip
Overall, these numerical investigations confirm that the TAM model not only supports a hippocampus-inspired division of labor—pattern separation in one layer and pattern completion in another—but also achieves a powerful synergy among the three interconnected layers. This synergy ensures accurate retrieval (generalized pattern reconstruction) and effective disentanglement of overlapping or corrupted patterns, thus providing a biologically plausible yet computationally efficient framework for multi-layer associative memories.

 \begin{figure}[t]     \centering
     \includegraphics[width=1.0\linewidth]{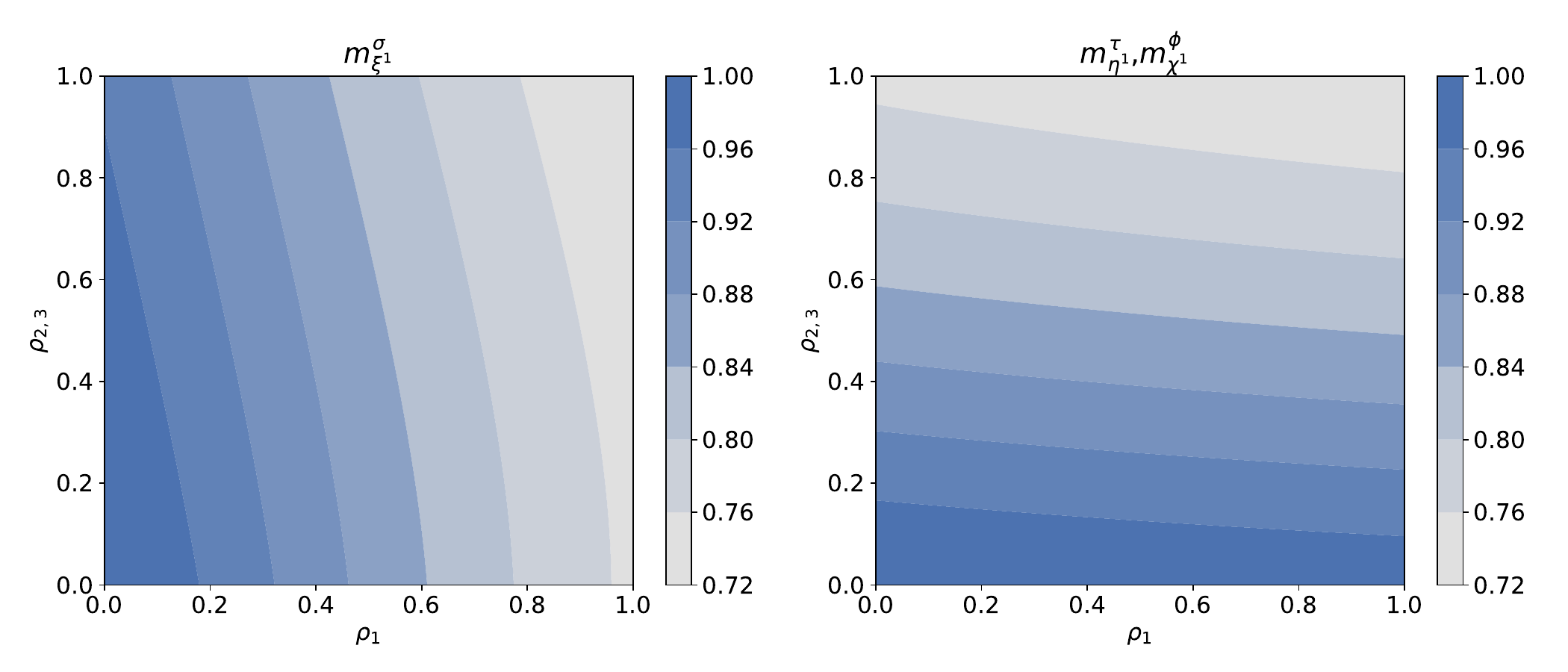}
     \caption{Contour plot of magnetization values as a function of entropy values, $\rho_1,\rho_2,\rho_3 \in [0,1]$, in the supervised setting. We also set $\gamma=0.25$. For simplicity we have set $\rho_2=\rho_3$ and
     consequently, the magnetization
     values corresponding to layer $\bm{\tau}$ are identical to those of the layer $\bm{\phi}$ (left). We observe that the increase in entropy $\rho_{2/3}$ marginally influences the decrease in the value of $m_{\xi^1}^{\sigma}$ whereas the decrease is more substantial as $\rho_1$ increases (right). An analogous situation occurs for magnetizations $m_{\eta^1}^{\tau}$ and $m_{\chi^1}^{\phi}$: here we observe that the dominant entropy is $\rho_{2/3}$.}
     \label{Magnerho}
 \end{figure}

\subsection{Cooperative behaviour in TAM model} \label{sec:cooperativeness}

The analysis of our TAM network reveals that the retrieval performance of each layer is not dictated solely by its local noise level but rather emerges from the intricate interplay among all layers. Under simplified approximations—such as the one-step magnetization approach based on the expressions in \eqref{magnsigntonoiseSUPsigma} for the symmetric case with \((\a,\b,\c)=(1,1,1)\) and fixed $\gamma$—each layer appears to operate in isolation, with its recovery performance predominantly controlled by its own entropy parameter. Indeed, as Fig.~\ref{Magnerho} illustrates, the magnetization values of each layer are only weakly influenced by the entropy levels of the other layers, reinforcing the self-centered nature of this local approximation in which inter-layer effects are neglected. However, when the full dynamics of the network are taken into account, a more realistic picture emerges from Monte Carlo simulations. In these simulations, the system evolves according to the Boltzmann–Gibbs measure, 
\begin{figure}[t]
     \centering
     \includegraphics[width=1.0\linewidth]{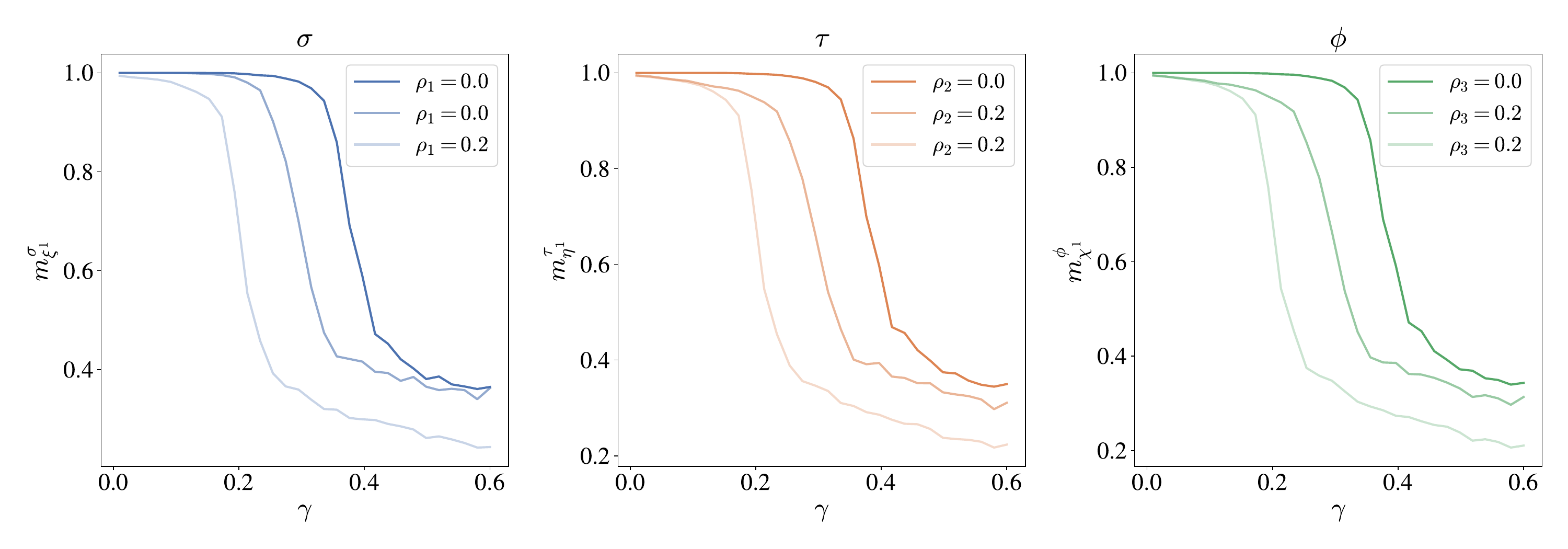}
     \caption{
Monte Carlo simulations in the supervised setting for a symmetric network with \(N_1 = N_2 = N_3 = 1000\), \(M_1 = M_2 = M_3 = 20\), and \(\beta = 1000\). 
The plots display the magnetization as a function of the loading capacity \(\gamma\) across the three layers of the network, under different entropy regimes. 
Curves with more saturated colors correspond to the low-entropy regime, whereas lighter curves represent the high-entropy regime. 
In the intermediate-entropy case, we set \(\rho_1 = 0.0\) for the first layer and \(\rho_2 = \rho_3 = 0.2\) for the second and third layers. 
The simulations reveal the emergence of a cooperative behavior among the layers, which becomes even more evident in the subsequent plots showing the phase diagram of the model.
}
     \label{montecarlolayersiaiutano}
 \end{figure}
\begin{equation}
    \mathbb{P}(\sigma,\tau,\phi\,|\,\Xi,\Theta,\Upsilon) \propto \exp\!\Big[-\beta\, \mathcal{H}(\sigma,\tau,\phi\,|\,\Xi,\Theta,\Upsilon)\Big],
\end{equation}

\noindent
where the Hamiltonian \(\mathcal{H}\) incorporates both intra- and inter-layer couplings. As shown in Fig.~\ref{montecarlolayersiaiutano}, the evolution of magnetizations as a function of the load \(\gamma\) is depicted for three configurations: a low-entropy regime (blue, with \(\rho=0.0\) in all layers), a high-entropy regime (red, with \(\rho=0.2\) uniformly), and an intermediate configuration (orange, where \(\rho_1=0.0\) and \(\rho_2=\rho_3=0.2\)). The results already suggest an emerging cooperative behavior among the layers, a phenomenon that will be further explored through the analysis of phase diagrams.

\begin{figure}[t]
    \centering
    \begin{subfigure}{0.49\textwidth}
        \centering
        \includegraphics[width=\linewidth]{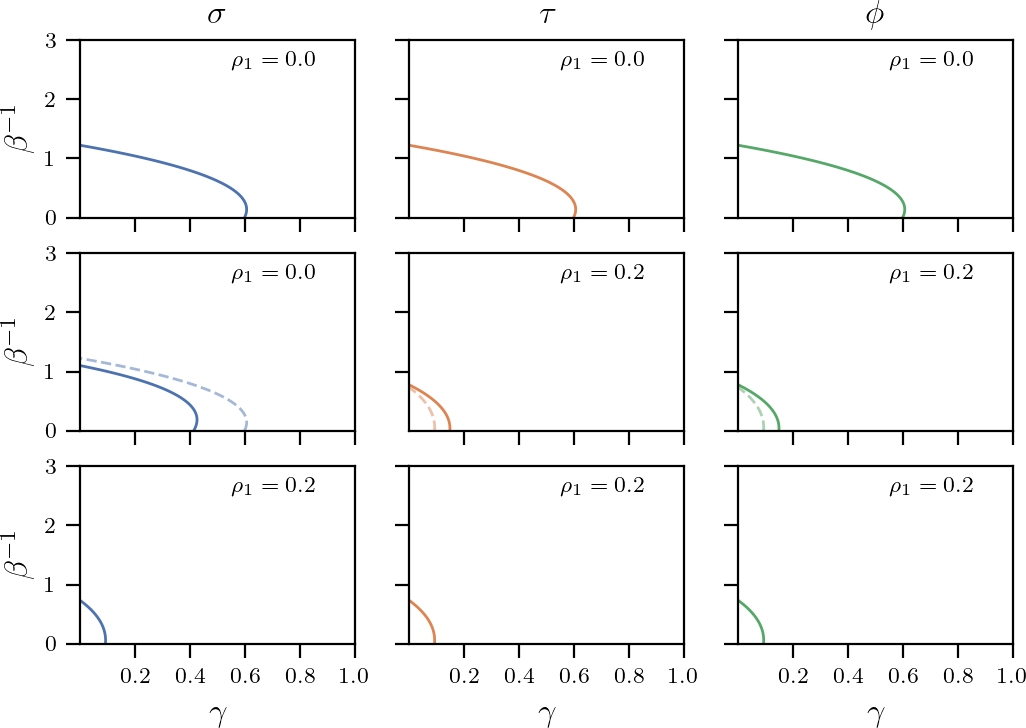}
    \end{subfigure}
    \hfill
    \begin{subfigure}{0.49\textwidth}
        \centering
        \includegraphics[width=\linewidth]{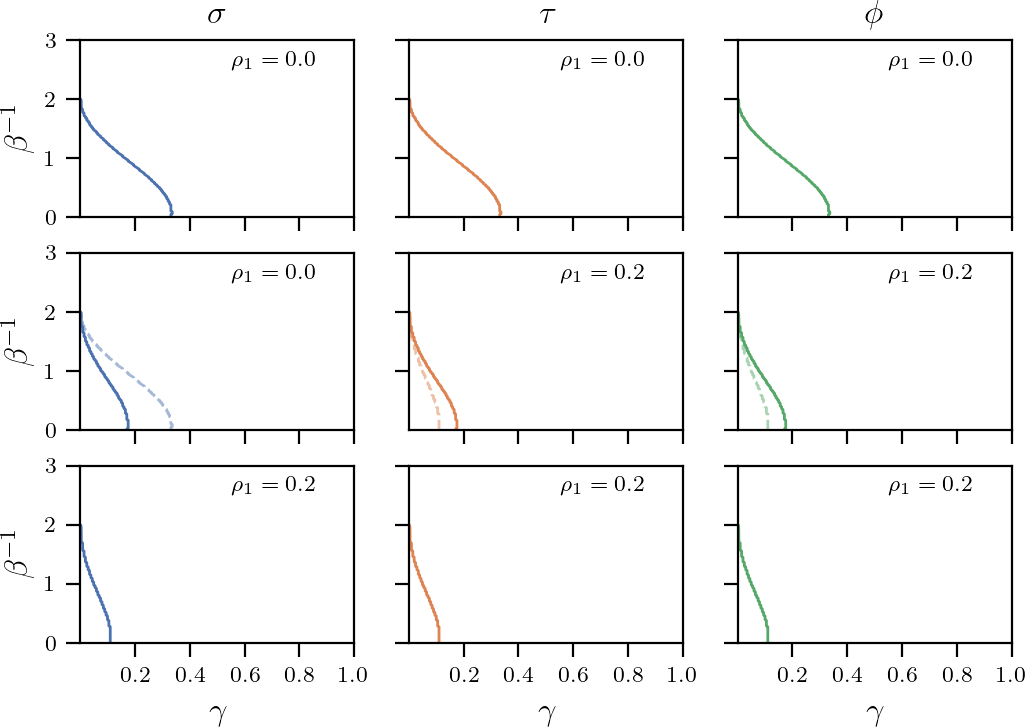}
    \end{subfigure}
    
    \caption{Phase diagrams comparing unsupervised (left) and supervised (right) learning settings across different noise configurations in the $(\gamma,\beta^{-1})$ plane. In both panels, the entropy parameters $(\rho_1,\rho_2,\rho_3)$ are fixed to the same values: top row $(0.0,0.0,0.0)$; middle row $(0.0,0.2,0.2)$; and bottom row $(0.2,0.2,0.2)$. The dashed lines delineate how the retrieval regions \emph{would appear} in the absence of the cooperative phenomenon among layers, highlighting the benefits of inter-layer cooperation. The diagrams illustrate how noise levels across the three layers ($\bm\sigma$, $\bm\tau$, and $\bm\phi$) influence retrieval performance and highlight the cooperative interplay among layers. For all cases, $\alpha=\theta=1$ and $(g_\sigma,g_\tau,g_\phi)=(1,1,1)$. 
    The retrieval regions represented in the unsupervised diagrams are obtained for magnetization values greater than $0.9$. The two models were solved using different techniques and approximations, which naturally lead to different outcomes.}
    \label{fig:combined-phase-diagrams}
\end{figure}

In fact, a complete understanding of the cooperative behavior is achieved through an analytical treatment based on the self-consistency equations, from which phase diagrams are derived to map the retrieval region in the \((\gamma,\beta^{-1})\) plane for each layer. As depicted in Fig.~\ref{fig:combined-phase-diagrams}, the ideal retrieval scenario is observed when there is no noise (\(\rho_1=\rho_2=\rho_3=0.0\)). However, when noise is introduced heterogeneously—for instance, with \(\rho_1=0.0\) and \(\rho_2=\rho_3=0.2\)—a redistribution effect occurs: the recovery region of the more structured layer contracts while those of the noisier layers expand, thus balancing the overall performance. In contrast, uniformly high noise (\(\rho_1=\rho_2=\rho_3=0.2\)) suppresses retrieval in all layers. These observations confirm that the cooperative mechanism—where mutual influences between layers enhance retrieval—is an intrinsic property of the TAM model \cite{alessandrelli2025beyond}. In summary, the progression from the local one-step magnetization approximation, through Monte Carlo simulations, to a full phase diagram analysis demonstrates that \emph{cooperativeness} in the TAM network is an emergent property resulting from complex inter-layer interactions, providing a robust theoretical foundation for the design of next-generation associative memory networks with enhanced robustness and balanced retrieval capabilities.

\begin{figure}
    \centering
    \begin{subfigure}{0.44\textwidth}
    \centering
    \includegraphics[width=\linewidth]{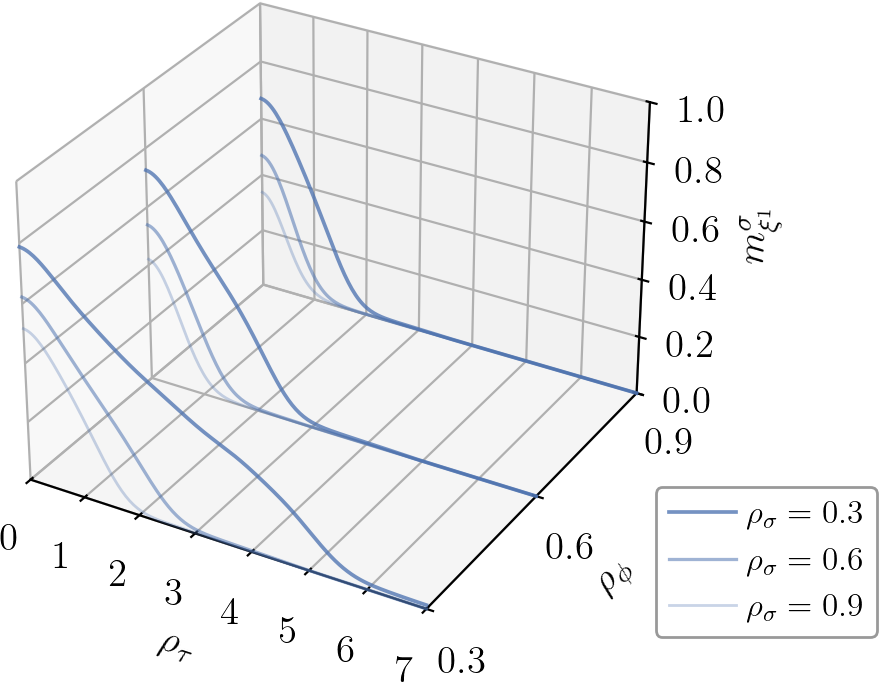}
    \end{subfigure}
    \hfill
    \begin{subfigure}{0.44\textwidth}
    \centering
    \includegraphics[width=\linewidth]{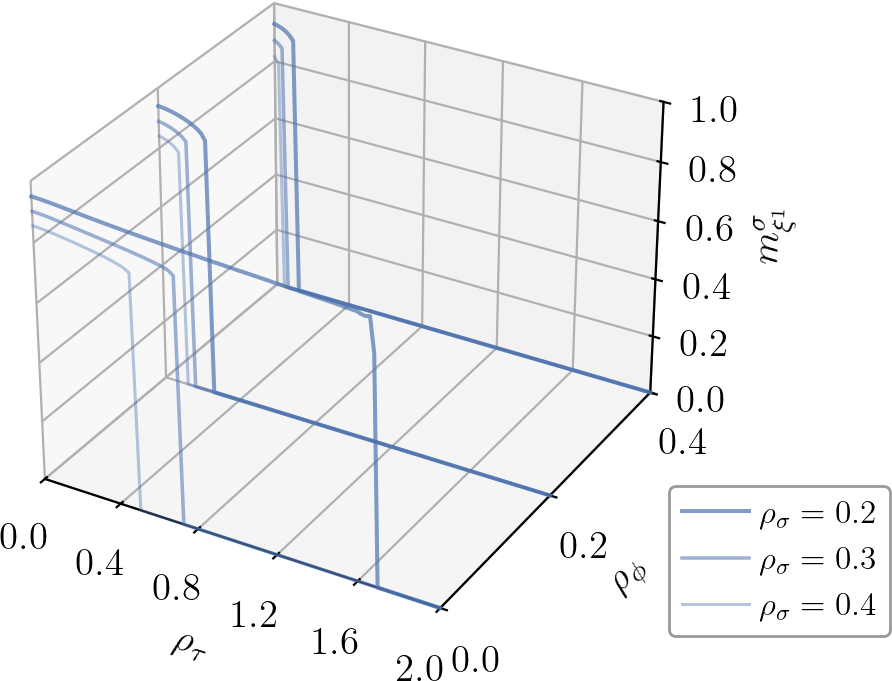}
        
    \end{subfigure}
    \caption{Numerical solutions of Eqs.~\ref{eq:selfMagnSigma} (Unsupervised scenario, on the left) and \ref{seldefinitive} (Supervised scenario, on the right) are presented. Both plots are configured as follows: the vertical axis (z-axis) represents the magnetization $m^{\sigma}_{\xi^1}$, while the horizontal axes correspond to $\rho_\phi$ (x-axis) and $\rho_\tau$ (y-axis). The parameters are fixed to $\beta = 1000$, $\gamma = 0.1$, and $\alpha = \theta = \a = \b = \c = 1$ in both settings.
    Three configurations are considered: for the Supervised case, we use $\rho_\sigma = 0.2$, $0.3$, and $0.4$; for the Unsupervised case, $\rho_\sigma = 0.3$, $0.6$, and $0.9$. The color gradient encodes the value of $\rho_\sigma$: darker blue lines correspond to lower values, while lighter blue lines indicate higher values. We report three slices of the resulting surface at selected values of $\rho_\phi$: for the Supervised case, $0.0$, $0.2$, and $0.4$, while for the Unsupervised case, the slices coincide with the corresponding values of $\rho_\sigma$.
    In both diagrams, the cooperative behavior of the network is clearly visible. Moreover, the entropy parameters $\rho_\tau$ and $\rho_\phi$ significantly influence the performance trend of the $\sigma$-layer.
    The key difference between the two plots lies in the emergence of a clear phase transition in the Supervised setting, which is absent in the Unsupervised scenario. This discrepancy is likely due to the strong approximations employed in the analytical treatment of the Unsupervised model, which may suppress such critical phenomena.}
\end{figure}


\section{Conclusions and Future Outlooks} \label{sec:conclusion}
\par
In this work, we have developed a comprehensive learning framework for Three-Directional Associative Memory (TAM) models by extending the classical Hebbian storage paradigm toward both supervised and unsupervised learning protocols. Our analysis —based on Guerra’s interpolation method— provides explicit self-consistency equations for the order parameters that quantify how dataset quality and quantity, synaptic density and storage capacity interact to shape the global performance of the network. The latter is a tripartite architecture where each layer communicates with all the others and it is exposed to noisy examples of an extensive number of distinct archetypes that it must infer, store and eventually retrieve. We have established a rigorous statistical-mechanical description of the network’s learning thresholds and retrieval performances that allow a careful calibration of the network’s control parameters (such as the relative layer sizes, the quality of the datasets, and the strength of the inter-layer couplings) in order to let it operate in the optimal regime. These results are in line with previous literature addressing related models, where the storage capacity and retrieval performance exhibit a power-law scaling with respect to the noise parameters \cite{agliariEmergence,alemanno2023supervised,centonze2024statistical}.
\par
Further, we highlighted a new emergent behavior, namely the   spontaneous cooperation  among layers when inputted with  datasets containing different amounts of information. Indeed, rather than operating in isolation, taking advantage of mutual interactions, the layers exhibit a form of reciprocal reinforcement whereby the retrieval capability of one layer (that, e.g., experienced more examples or less noisy ones) can compensate for the relative weakness of another layer (that, e.g., has been exposed to fewer or more corrupted examples). This \emph{cooperative effect} manifests itself in the phase diagrams, where the retrieval region is found to be determined not solely by the entropy of the local dataset but by the collective interplay of dataset entropies across the layers: the ultimate result of this cooperative effect shines in the amplitude of the retrieval region of the various layers that turns out to be the same, regardless of the information experienced by a given layer. 
On the computational side, MCMC simulations corroborate the analytical predictions, demonstrating robust retrieval performance even when the network is exposed to highly corrupted input patterns. The simulations further confirm the existence of computational phase transitions that demarcate regions of efficient learning —resulting in correct retrieval of archetypes during further exposition of examples to the network— from those where retrieval fails, due to a  training procedure that did not suffice or to an overload of patterns to be kept stored in memory. 
\par\vspace{3mm}
Looking ahead, several possible directions for future research stem from our study. First, while our analysis has been carried out under the replica-symmetric assumption, it remains an open question how the inclusion of  replica symmetry breaking  might further refine the information processing capabilities of the network, especially in the vicinity of the critical thresholds. 
Moreover, an in-depth analysis of the way  that layers interact to cooperate is still to be achieved both in terms of structural noise (e.g. by systemically providing the network with the standard datasets) and variations in dataset entropy (e.g. by letting the network deal with very heterogeneous -in their information content- datasets) as well as in terms of fine-tuned architecture (e.g. nor the role of the asymmetry among the layers neither the density of the synapses within the layers have been investigated to check their role in this cooperativity): the resulting computational analysis could offer guidelines for the design of optimal networks to handle very different datasets.
\par
Second, rather naturally within the statistical mechanical route, a dual representation of the TAM model emerges in terms of three Restricted Boltzmann Machines (RBMs) coupled via their inner layers that operate in the one-hot coding setting (i.e., they implement the grandmother cell scheme).  The correspondence between the hidden layers in the RBM formulation and the “\emph{grandmother}” cells observed in biological networks \cite{grandmother} provides a compelling narrative that aligns with both connectionist and localist theories of information processing and brings bio-inspired neural networks closer to statistical mechanics learning \footnote{For further reading on this duality and its implications, works by Krotov and Hopfield \cite{krotov2016dense,krotov2020large} and recent surveys by Ramsauer \cite{ramsauer2020hopfield} offer critical perspectives on energy-based models and their modern applications.}.
Another promising direction involves the application of the TAM framework to dynamic and temporally evolving datasets. In many real-world scenarios, the input data are not static but rather evolve according to complex temporal patterns \cite{piermarocchi1}. Our preliminary investigations into frequency modulation and temporal sequence retrieval indicate that TAM networks can successfully \emph{disentangle} mixtures of patterns occurring at different frequencies. This capability is of paramount importance in fields such as audio processing, video analysis, and even neuroscience, where the timing and sequencing of inputs carry critical information. 
\par
In addition, the interplay between supervised and unsupervised learning protocols in TAM networks warrants further exploration. Our results indicate that even in the absence of explicit supervisory signals, the network can infer underlying archetypes from noisy data through intrinsic statistical correlations. However, the supervised setting provides a clear advantage by guiding the network toward the correct partitioning of the data. Future work could aim to develop \emph{hybrid learning protocols} that dynamically adjust the balance between supervised and unsupervised learning based on the observed quality of the input data; furthermore, the exploration of semi-supervised schemes (where solely a fraction of data is labeled within the datasets) would be an interesting in-depth, eventually simple to obtain. 
\par
Finally, the broader implications of our findings for both artificial intelligence and neuroscience are significant. The emergence of cooperative dynamics in associative networks challenges traditional views on the independence of memory layers and suggests that the collective interplay of multiple subsystems can lead to enhanced cognitive capabilities. For the machine learning community, this insight offers a new paradigm for the design of deep neural networks that harness the power of distributed processing to achieve superior generalization. For neuroscientists, our results provide a quantitative framework for understanding how biological neural circuits might exploit cooperative interactions to maintain robust and efficient memory retrieval, even in the face of noise and uncertainty. As recommended by Agliari et al. \cite{agliariRedundantRepresentation,agliari2021transport} and further discussed in recent reviews \cite{albanese2022replica}, integrating these insights could foster interdisciplinary research that bridges the gap between artificial and biological systems.

\subsection*{Acknowledgments}
This research was supported by PNRR MUR project no. PE0000013-FAIR and by the “National Centre for HPC, Big Data and Quantum Computing - HPC”, Project CN\_00000013, CUP B83C22002940006, NRP Mission 4 Component 2 Investment 1.5,  Funded by the European Union - NextGenerationEU.
\newline
AA and AB acknowledge GNFM-INdAM (Gruppo Nazionale per la Fisica Matematica, Istituto Nazionale d’Alta Matematica), AA further acknowledges UniSalento for financial support via PhD-AI and AB further acknowledges the PRIN-2022 Project Statistical Mechanics of Learning Machines: from algorithmic and information-theoretical limits to new biologically inspired paradigms.
\newline
AB and FRT acknowledge Sapienza Università di Roma for support via the internal grant {\em Statistical learning theory for generalized Hopfield models}, prot. num. RM12419112BF7119.
\newline
AL and AL acknowledge Università degli Studi di Bari Aldo Moro for support via post-Lauream research fellowships under the ADAPTIVE‑AI project (CUP F83C24001470001, D.R. n. 123/16-01-2024) and via the Future Artificial Intelligence Research (FAIR) project (project code PE00000013, CUP H97G22000210007, Spoke 6 “Symbiotic AI”), funded by the European Union---NextGenerationEU.

\printbibliography

\appendix

\newpage
\section{Proofs in RS Assumption}
\subsection{Replica symmetric Guerra Interpolation} \label{appsec:proofGuerra}

In this section we exploit Guerra's interpolation \cite{Guerra2,Fachechi1} to express the quenched free energy explicitly as a function of the control and order parameters of the theory. The plan is to use one-parameter interpolation, $t \in [0,1]$ and, as standard in high-storage investigations \cite{Amit,CKS}, we assume that a finite number of patterns (actually, just one per layer without loss of generality) are retrieved, say the triplet $(\xi^1,\eta^1,\chi^1)$, and these patterns act as the {\em signal}, while all remaining ones (i.e., those with labels $\nu \neq 1$) act as {\em quenched noise} against the retrieving process.
\newline
As we deal with averages of observables, it is useful to define them as follows.
\begin{defin}
Given a function $F(\bm{\sigma},\bm{\tau},\bm{\phi})$, depending on the neuronal configuration $(\bm{\sigma},\bm{\tau},\bm{\phi})$, 
the Boltzmann average, namely the average over the distribution \eqref{BGmeasure}, is denoted by $\omega( F(\bm{\sigma},\bm{\tau},\bm{\phi}))$ and defined as
\begin{equation}
 \omega(F(\bm{\sigma},\bm{\tau},\bm{\phi}))= \frac{\sum_{\{\sigma \}}^{2^{N_1}} \sum_{\{\tau \}}^{2^{N_2}} \sum_{\{\phi \}}^{2^{N_3}} F(\bm{\sigma},\bm{\tau},\bm{\phi}) e^{-\beta \mathcal{H}_{\bm N}(\boldsymbol{s}|\boldsymbol{x})}}{\sum_{\{\sigma \}}^{2^{N_1}} \sum_{\{\tau \}}^{2^{N_2}} \sum_{\{\phi \}}^{2^{N_3}} e^{-\beta \mathcal{H}_{\bm N}(\boldsymbol{s}|\boldsymbol{x})}}.
\end{equation}
\end{defin}
\begin{defin}
Given a function of the examples $g(\bm{\Xi}, \bm{\Theta}, \bm{\Upsilon})$ depending on the realization of the $K\times \bm M$ triplets of examples, we introduce the quenched average, namely the average over the Rademacher distributions related to their generation, which is denoted as $\mathbb E [ g(\bm{\Xi}, \bm{\Theta}, \bm{\Upsilon})]$ or $\langle g(\bm{\Xi}, \bm{\Theta}, \bm{\Upsilon}) \rangle$ according to the context, and it is defined as
\begin{align}
\label{eq:mapsexpectation}
    \mathbb{E} [ g(\bm{\Xi}, \bm{\Theta}, \bm{\Upsilon}) ] \equiv \mathbb{E}_{\bm{\Xi}}\mathbb{E}_{\bm{\Theta}}\mathbb{E}_{\bm{\Upsilon}}  \P(\bm{\xi}, \bm{\eta}, \bm{\chi}) g(\bm{\Xi}, \bm{\Theta}, \bm{\Upsilon}).
\end{align}
\end{defin}

\noindent
This definition of the quenched average follows from the assumption that the patterns are statistically independent; therefore, the expectation factorizes over neurons, patterns, and examples.
\begin{defin}
We use the brackets $\langle \cdot \rangle$ to denote the average over both the Boltzmann-Gibbs distribution and the realization of the patterns, that is
\begin{equation}
    \langle F((\bm{\sigma},\bm{\tau},\bm{\phi})|(\bm{\Xi}, \bm{\Theta}, \bm{\Upsilon})) \rangle = \mathbb E [\omega(F((\bm{\sigma},\bm{\tau},\bm{\phi})|(\bm{\Xi}, \bm{\Theta}, \bm{\Upsilon})))].
\end{equation}

\end{defin}

\noindent
We use the usual replica-symmetric assumption, i.e., if $x$ is an order parameter, it does not fluctuate in the thermodynamic limit, that is:
\begin{equation}
    \lim_{\bm N \to \infty} \mathbb{P}(\langle x \rangle) = \delta(\langle x \rangle - \bar{x} ),
\end{equation}
where $\bar{x}$ is the corresponding equilibrium value.
\subsubsection*{Proof for supervised} \label{appsubsec:proofGuerraSup}
In this appendix, using Guerra's interpolation technique, we derive an explicit expression for the quenched free energy, in the Supervised case.  \\ Introducing an interpolating parameter $t \in [0,1]$ and by virtue of the fundamental theorem of calculus we can write the statistical pressure in the following way: 

\begin{eqnarray}
    \mathcal A^{\bm{g}}_{\alpha, \theta, \gamma}(\beta)&\doteq&\mathcal A(t=1) = \mathcal A(t=0) + \int_0^1 ds \left. \bigg[\partial_t \mathcal A(t)\biggr] \right \vert_{t=s} \,,\label{eq:pressioneInterp}
    \\ 
    \mathcal A(t) &=& \lim_{\bm N\to\infty} \frac{1}{L} \mathbb E  \ln \mathcal Z_{\bm{N}}(t) \,,\label{eq:pressioneLimite}
 \end{eqnarray}
where $\mathcal Z_{\bm{N}}(t)$ is the Guerra's interpolating partition function defined hereafter and we will indicate with $\mathcal{A}_{\bm{N}}(t)$ the statistical pressure in the finite limit.

Once introduced real valued auxiliary functions $\psi$ (that play as {\em ad hoc} fields to reproduce the neural interactions effectively) and the set of constants $C$ (that properly tune the variances of the distributions of the real-valued hidden grandmother neurons), the Guerra's interpolating partition function reads as

\begin{equation}\footnotesize
\label{Integralista}
    \begin{array}{lllll}
\mathcal{Z}_{\bm{N}}(t) \stackrel{.}{=} \mathcal{Z}^{\bm g}_{\bm N, \bm M, \bm r}(\beta,\bm J,t)=\SOMMA{\{\sigma\},\{\tau\},\{\phi\}}{2^{N_1},2^{N_2},2^{N_3}} \exp\biggr\{ a(t) 
         +b(t) + J(\bm{\sigma},\bm{\tau},\bm{\phi})\biggl\}\times
         \\\\
         \times\displaystyle\int\mathcal{D}(z\z s\s w\w)\exp\biggr\{ c(t) + d(t) + e(t)\biggl\},
\end{array}
\end{equation}

\noindent
where we have defined

\begin{equation}\footnotesize
    \begin{aligned}
    a(t) \doteq \beta t\biggl( & \a\sqrt{N_1N_2(1+\rho_1)(1+\rho_2)}n_{\xi^1}^\sigma n_{\eta^1}^\tau +\b\sqrt{N_1N_3(1+\rho_1)(1+\rho_3)}n_{\xi^1}^\sigma n_{\chi^1}^\phi+
    \\
     +&\c\sqrt{N_2N_3(1+\rho_2)(1+\rho_3) }n_{\chi^1}^\phi n_{\eta^1}^\tau\biggr),
    \end{aligned}
\end{equation}

\begin{equation}\footnotesize
    \begin{aligned}
      b(t) \doteq (1-t)\biggl[ & \sqrt{N_1N_2}(\psi^{(1)}_m n_{\xi^1}^\sigma + \psi^{(1)}_n n_{\eta^1}^\tau) + \sqrt{N_1N_3}(\psi^{(2)}_m n_{\xi^1}^\sigma + \psi^{(1)}_l n_{\chi^1}^\phi)+
    \\
         +& \sqrt{N_2N_3}(\psi^{(2)}_n n_{\eta^1}^\tau + \psi^{(2)}_l n_{\chi^1}^\phi)\biggr]  ,
    \end{aligned}
\end{equation}

\begin{equation}\footnotesize
    \begin{aligned}
J(\bm{\sigma},\bm{\tau},\bm{\phi}) \doteq & J_1 \sum_{i=1}^{N_1} \xi_i^1 \sigma_i + J_2 \sum_{i=1}^{N_2} \eta_i^1 \tau_i + J_3 \sum_{i=1}^{N_3} \chi_i^1 \phi_i,
\end{aligned}
\end{equation}

\begin{equation}\footnotesize
    \begin{aligned}
        c(t) \doteq &\sqrt{t} \left[ \sqrt{\dfrac{\beta}{2N_1}}\SOMMA{\mu>1}{K}\SOMMA{i=1}{N_1}J_i^\mu\sigma_i z_\mu+\sqrt{\dfrac{\beta}{2N_2}}\SOMMA{\mu>1}{K}\SOMMA{j=1}{N_2}Y_j^\mu\tau_j \s_\mu+ \right. \\
    +& \left. \sqrt{\dfrac{\beta}{2N_3}}\SOMMA{\mu>1}{K}\SOMMA{k=1}{N_3}V_k^\mu\phi_k \w_\mu \right] ,
    \end{aligned}
\end{equation}

\begin{equation}\footnotesize
    \begin{aligned}
    d(t) \doteq & \sqrt{1-t}\left(  A_{\sigma}\SOMMA{i=1}{N_1}X^{\sigma}_i\sigma_i + A_{\tau}\SOMMA{j=1}{N_2}X^{\tau}_j\tau_j + A_{\phi}\SOMMA{k=1}{N_3}X^{\phi}_k\phi_k\right)+
         \\ 
         + & \sqrt{1-t}\left(B_z\SOMMA{\mu>1}{K}D^{\mu}_z z_{\mu} + \Bar{B}_z\SOMMA{\mu>1}{K}\Bar{D}^{\mu}_z \z_{\mu} + B_s\SOMMA{\mu>1}{K}D^{\mu}_s s_{\mu}+ \right.
         \\
         &\left.\qquad\;\;+
         \Bar{B}_s\SOMMA{\mu>1}{K}\Bar{D}^{\mu}_s \s_{\mu} + B_w\SOMMA{\mu>1}{K}D^{\mu}_w w_{\mu} + \Bar{B}_w\SOMMA{\mu>1}{K}\Bar{D}^{\mu}_w \w_{\mu}\right),
    \end{aligned}
\end{equation}

\begin{equation}\footnotesize
    \begin{aligned}
        e(t) &\doteq \dfrac{(1-t)}{2}\left[\Bar{C}_z\SOMMA{\mu=1}{K}(\z_{\mu})^2 + C_z\SOMMA{\mu=1}{K}(z_{\mu})^2 + \Bar{C}_s\SOMMA{\mu=1}{K}(\s_{\mu})^2 + \right.
         \\
        & \left. + C_s\SOMMA{\mu=1}{K}(s_{\mu})^2 + \Bar{C}_w\SOMMA{\mu=1}{K}(\w_{\mu})^2 + C_w\SOMMA{\mu=1}{K}(w_{\mu})^2\right].
    \end{aligned}
\end{equation}

Note that, for simplicity, we drop all dependencies (except $\bm{N}$, to emphasize that we are working in the finite limit). Furthermore, $\mathcal Z_{\bm{N}}(t=1)$ does coincide with the partition function of the original TAM network, while $\mathcal Z_{\bm{N}}(t=0)$ is certainly analytically valuable as it is a linear combination of one-body models, whose underlying probabilistic measure is trivially factorized by definition.
\newline
In the following, if not otherwise specified, we refer to the generalized averages simply as $\langle \cdot \rangle$ in order to lighten the notation.\\
By glancing at \eqref{eq:pressioneInterp} we see that we have to evaluate the Cauchy condition  $\mathcal A_{\bm{N}}(t=0)$ (that is straightforward as in $t=0$ the Gibbs measure is factorized over the neural activities) and integrate the t-streaming, that is the derivative  $\partial_t\mathcal A_{\bm{N}}(t)$, over the interval $t \in (0,1)$. 
\\

The explicit calculation of the t-derivative of the interpolating free energy introduced in Eq.~\ref{eq:pressioneInterp}
is carried out by brute force; thus, we directly report our findings:

{
\begin{equation}\footnotesize
    \begin{array}{lllll}
    \partial_t \mathcal{A}_{\bm{N}}(t) &=\dfrac{1}{L}\Bigg[\beta\Big(\a\sqrt{N_1N_2(1+\rho_1)(1+\rho_2)}\l n_{\xi^1}^\sigma n_{\eta^1}^\tau\r+\b\sqrt{N_1N_3(1+\rho_1)(1+\rho_3)}\l n_{\xi^1}^\sigma n_{\chi^1}^\phi\r
    \\\\ \noalign{\vspace{-10pt}}
    &+\c\sqrt{N_2N_3(1+\rho_2)(1+\rho_3) }\l n_{\chi^1}^\phi n_{\eta^1}^\tau\r\Big)-\Big(\sqrt{N_1N_2}(\psi^{(1)}_m \l n_{\xi^1}^\sigma\r + \psi^{(1)}_n \l n_{\eta^1}^\tau\r)
    \\\\ \noalign{\vspace{-10pt}}
     &+ \sqrt{N_1N_3}(\psi^{(2)}_m \l n_{\xi^1}^\sigma\r + \psi^{(1)}_l \l n_{\chi^1}^\phi\r ) + \sqrt{N_2N_3}(\psi^{(2)}_n \l n_{\eta^1}^\tau\r + \psi^{(2)}_l \l n_{\chi^1}^\phi\r)\Big)
    \\\\ \noalign{\vspace{-10pt}}
        & +\dfrac{\beta}{4}K(\l p^z_{11}\r-\l \q^\sigma p^z_{12}\r)+\dfrac{\beta}{4}K(\l \ps_{11}\r-\l\q^\tau \ps_{12}\r)+\dfrac{\beta}{4}K(\l\pw_{11}\r-\l\q^\phi \pw_{12}\r)
    \\\\ \noalign{\vspace{-5pt}}
        & -\dfrac{K}{2} \left(\bar C_z \l\pz_{11}\r+C_z \l p^z_{11}\r+ \bar C_s \l\ps_{11}\r+ C_s \l p^s_{11}\r+ \bar C_w \l\pw_{11}\r+ C_w \l p_{11}^w\r\right)
    \\\\ \noalign{\vspace{-10pt}}
        & -\dfrac{K}{2}\Bigg( B_z^2 (\l p_{11}^z\r-\l p_{12}^z\r)+\bar B_z^2 (\l\pz_{11}\r-\l\pz_{12}\r)+ B_s^2 (\l p_{11}^s\r-\l p_{12}^s\r)
    \\\\ \noalign{\vspace{-10pt}}
        & +\bar B_s^2 (\l\ps_{11}\r-\l\ps_{12}\r)+ B_w^2 (\l p_{11}^w\r-\l p_{12}^w\r)+\bar B_w^2 (\l\pw_{11}\r-\l\pw_{12}\r)\Bigg)
  \\\\ \noalign{\vspace{-10pt}}
       & -\dfrac{1}{2}\left( N_1 A_\sigma^2(1-\l q^\sigma\r)+ N_2 A_\tau^2(1-\l q^\tau\r)+N_3 A_\phi^2(1-\l q^\phi\r)\right)\Bigg].
    \end{array}
\end{equation}
}
We see that, if we fix the values of the constants and the auxiliary field as follows
\begin{equation}
    \begin{array}{lllll}
         \bar C_z =-\bar B_z=0, &\quad C_s =-B_s=0 ,
         \\\\ \noalign{\vspace{-10pt}}
         C_w = -B_w=0

         , &\quad C_z = \dfrac{\beta}{2}(1-\qsigma) ,
         \\\\ \noalign{\vspace{-10pt}}
         \bar C_s = \dfrac{\beta}{2}(1-\qtau) , &\quad  \bar C_w = \dfrac{\beta}{2}(1-\qphi) ,
         \\\\ \noalign{\vspace{-10pt}}
    \end{array}
\end{equation}

\begin{equation}
    \begin{array}{lllll}
    B_z^2=\dfrac{\beta}{2}\qsigma, \quad & \bar B_s^2=\dfrac{\beta}{2}\qtau, \quad 
          
         \bar B_w^2=\dfrac{\beta}{2}\qphi,
    \end{array}
\end{equation}
         
\begin{equation}
    \begin{array}{lllll}
    A_\sigma^2=\dfrac{\beta}{2}\dfrac{K}{N_1}\bar p^z, \quad
    A_\tau^2=\dfrac{\beta}{2}\dfrac{K}{N_2}\pdags,     \quad A_\phi^2=\dfrac{\beta}{2}\dfrac{K}{N_3}\pdagw,
    \end{array} 
\end{equation}

\begin{equation}
    \begin{array}{lllll}  
         \psi_m^{(1)}=\a\beta \sqrt{(1+\rho_1)(1+\rho_2)}\bar{n}^\tau_{\eta^1} , &\quad  \psi_n^{(1)}=\a\beta \sqrt{(1+\rho_1)(1+\rho_2)}\bar{n}^\sigma_{\xi^1},
    \end{array}
\end{equation}
\begin{equation}
    \begin{array}{lllll}  
         \psi_l^{(1)}=\b\beta \sqrt{(1+\rho_1)(1+\rho_3)}\bar{n}^\sigma_{\xi^1} , &\quad 
         \psi_m^{(2)}=\b\beta \sqrt{(1+\rho_1)(1+\rho_3)}\bar{n}^\phi_{\chi^1} ,
    \end{array}
\end{equation}
\begin{equation}
    \begin{array}{lllll}  
         \psi_n^{(2)}=\c\beta\sqrt{(1+\rho_3)(1+\rho_2)} \bar{n}^\phi_{\chi^1} , &\quad  \psi_l^{(2)}=\c\beta\sqrt{(1+\rho_2)(1+\rho_3)} \bar{n}^\tau_{\eta^1},
    \end{array}
\end{equation}
and using the replica-symmetric ansatz, we get the final expression in the thermodynamic limit of the derivative of the interpolating free energy that reads as

\begin{equation}
\label{eq:straming_computed}
    \begin{array}{lllll}
    \partial_t \mathcal{A}(t)&=&-\beta\varsigma\left[\a\theta^{-1}\sqrt{(1 + \rho_1)(1 + \rho_2)}\bar{n}^\sigma_{\xi^1}\bar{n}^\tau_{\eta^1}+\b\alpha^{-1}\sqrt{(1 + \rho_1)(1 + \rho_3)}\bar{n}^\sigma_{\xi^1}\bar{n}^\phi_{\chi^1}\right.
    \\\\ \noalign{\vspace{-10pt}}
    &&+\left.\c(\alpha\theta)^{-1}\sqrt{(1 + \rho_2)(1 + \rho_3)}\bar{n}^\tau_{\eta^1}\bar{n}^\phi_{\chi^1}\right]
    \\\\ \noalign{\vspace{-10pt}}
    &&-\beta\dfrac{\varsigma}{4}\left[ \bar p^z(1-\qsigma)+ \pdags(1-\qtau)+\pdagw(1-\qphi)\right]
    \end{array}
\end{equation}
where we use $\varsigma = \dfrac{1}{3}(\alpha+\theta+\alpha\theta)$\,.

Now, in order to fulfill the sum rule stemmed by the fundamental theorem of calculus for the interpolating free energy —see Eq.~\ref{eq:pressioneInterp}— we have to evaluate the Cauchy condition $\mathcal A(t=0)$: by construction, as we are interpolating the original model with a linear combination of one-body contributions, such condition can be explicitly evaluated, again by brute force, and gives rise to
\begin{equation}
    \mathcal A(t=0) = \lim_{\bm N \to \infty} \frac{1}{L}\ln \mathbb{E}\mathcal{Z}_{\boldsymbol{N}}(t=0),
\end{equation}
    
\noindent
where the interpolating partition function, evaluated at $t=0$, reads as
\begin{equation}
    \begin{array}{lllll}
         \mathcal{Z}_{\boldsymbol{N}}(t=0)=\prod\limits_{i=1}^{N_1}2\cosh\left[\hat\xi_i^\mu\psi_m^{(1)}\sqrt{\dfrac{N_2}{N_1}}+\hat\xi_i^\mu\psi_m^{(2)}\sqrt{\dfrac{N_3}{N_1}}+A_\sigma X_i^\sigma \right]
         \\\\ \noalign{\vspace{-10pt}}
         \times\prod\limits_{k=1}^{N_2} 2\cosh\left[\hat\eta_k^\mu\psi_n^{(1)}\sqrt{\dfrac{N_1}{N_2}}+\hat\eta_k^\mu\psi_n^{(2)}\sqrt{\dfrac{N_3}{N_2}}+A_\tau X_k^\tau \right]\times
         \\\\ \noalign{\vspace{-10pt}}
         \times\prod\limits_{j=1}^{N_3}2\cosh\left[\hat\chi_j^\mu\psi_l^{(1)}\sqrt{\dfrac{N_1}{N_3}}+\hat\chi_j^\mu\psi_l^{(2)}\sqrt{\dfrac{N_2}{N_3}} +A_\phi X_j^\phi \right]\times
         \\\\ \noalign{\vspace{-10pt}}
         \times\prod\limits_{\mu>1}^{K}\displaystyle\int \dfrac{\beta^6\bm X_\mu }{(\sqrt{2\pi})^6}\exp\Bigg[-\dfrac{1}{2} \bm X_\mu^T \bm C \bm X_\mu + \bm X_\mu^T \bm B_\mu\Bigg],
    \end{array}
\end{equation}
where we have set
\begin{equation}
\begin{array}{llll}
     \bm X_\mu= \begin{pmatrix}
        Re(z_\mu)
        \\
       Im(z_\mu)
        \\ 
        Re(s_\mu)
        \\
        Im(s_\mu)
        \\ 
        Re(w_\mu)
        \\
        Im(w_\mu)
    \end{pmatrix}\,,\;&& \bm B_\mu= \begin{pmatrix}
        B_z J_\mu^z + \bar B_z \bar J_\mu^z
        \\
        i(B_z J_\mu^z - \bar B_z \bar J_\mu^z)
        \\ 
         B_s J_\mu^s + \bar B_s \bar J_\mu^s
        \\
        i(B_s J_\mu^s - \bar B_s \bar J_\mu^s)
        \\ 
         B_w J_\mu^w + \bar B_w \bar J_\mu^w
        \\
        i(B_w J_\mu^w - \bar B_w \bar J_\mu^w)
    \end{pmatrix},
\end{array}
\end{equation}
{
\begin{equation}\small \label{eq:A14}
    \bm C= \begin{pmatrix}
        1-\bar C_z -C_z & i(\bar C_z -C_z) & -\a/2&-i\a/2&-\b/2&-i\b/2
        \\
        i(\bar C_z -C_z) &1+\bar C_z +C_z & i\a/2&-\a/2&ib/2&-\b/2
        \\ 
        -\a/2&ia/2&1-\bar C_s -C_s & i(\bar C_z -C_z)&-\c/2&-i\c/2
        \\
        -i\a/2&-\a/2&i(\bar C_s -C_s) &1+\bar C_z +C_z&-i\c/2&\c/2
        \\ 
        -\b/2&ib/2&-\c/2&-i\c/2&1-\bar C_w -C_w & i(\bar C_w -C_w)
        \\
        -i\b/2&-\b/2&-i\c/2&\c/2&i(\bar C_w -C_w) &1+\bar C_w +C_w
    \end{pmatrix}.
\end{equation}
}
Finally, computing the Gaussian integrals, we get
\begin{equation}
    \begin{array}{lllll}
          \mathcal{Z}_{\boldsymbol{N}}(t=0)&&=\prod\limits_{i=1}^{N_1}2\cosh\left[\hat\xi_i^\mu\psi_m^{(1)}\sqrt{\dfrac{N_2}{N_1}}+\hat\xi_i^\mu\psi_m^{(2)}\sqrt{\dfrac{N_3}{N_1}}+A_\sigma X_i^\sigma \right]
         \\\\ \noalign{\vspace{-10pt}}
         &&\times\prod\limits_{k=1}^{N_2} 2\cosh\left[\hat\eta_k^\mu\psi_n^{(1)}\sqrt{\dfrac{N_1}{N_2}}+\hat\eta_k^\mu\psi_n^{(2)}\sqrt{\dfrac{N_3}{N_2}}+A_\tau X_k^\tau \right]\times
         \\\\ \noalign{\vspace{-10pt}}
         &&\times\prod\limits_{j=1}^{N_3}2\cosh\left[\hat\chi_j^\mu\psi_l^{(1)}\sqrt{\dfrac{N_1}{N_3}}+\hat\chi_j^\mu\psi_l^{(2)}\sqrt{\dfrac{N_2}{N_3}} +A_\phi X_j^\phi \right]\times
         \\\\ \noalign{\vspace{-10pt}}
         &&\times\left(\mathrm{det}\bm C\right)^{-K/2}\prod\limits_{\mu>1}^{K}\exp\Bigg[\dfrac{1}{2} \bm B_\mu^T \bm C^{-1} \bm B_\mu \Bigg],
    \end{array}
\end{equation}
thus the one-body contribution to the interpolating free energy, in the thermodynamic limit, is
\begin{equation}
\label{eq:one_body_computed}
    \begin{array}{lllll}
         \mathcal{A}(t=0)&=&\varsigma\,\mathbb E\left\{\log 2\cosh\left[\hat\xi^\mu\left(\psi_m^{(1)}\dfrac{1}{\theta}+\psi_m^{(2)}\dfrac{1}{\alpha}\right)+A_\sigma X^\sigma \right]\right\}
         \\\\ \noalign{\vspace{-10pt}}
         &&+\dfrac{\varsigma}{\theta^2}\mathbb E\left\{\log 2\cosh\left[\hat\eta^\mu\theta\left(\psi_n^{(1)}+\psi_n^{(2)}\dfrac{1}{\alpha}\right)+A_\tau X^\tau \right]\right\}
         \\\\ \noalign{\vspace{-10pt}}
         &&+\dfrac{\varsigma}{\alpha^2}\mathbb{E}\left\{\log 2\cosh\left[\hat\chi^\mu\alpha\left(\psi_l^{(1)}+\psi_l^{(2)}\dfrac{1}{\theta}\right) +A_\phi X^\phi \right]\right\}
         \\\\ \noalign{\vspace{-10pt}}
         &&-\dfrac{\gamma\varsigma}{2}\log\left[\mathrm{det}\bm C\right]+\dfrac{\gamma\varsigma}{2}\mathbb E\Bigg[ \bm B_\mu^T \bm C^{-1} \bm B_\mu \Bigg].
    \end{array}
\end{equation}

\vspace{3mm}
\noindent
Merging these two expressions, that is the integral of the $t$-streaming (Eq.~\eqref{eq:straming_computed}) and the Cauchy condition (Eq.~\eqref{eq:one_body_computed}), in Eq.\eqref{eq:pressioneInterp} under the replica symmetric assumption,  we get the explicit expression of the quenched free energy:
{
\begin{equation}\footnotesize\label{eq:fRS}
    \begin{array}{lllll}
&\mathcal{A}_{\alpha,\theta,\gamma}^{(\textit{sup})}(\beta)=
         \varsigma\,\mathbb E\left\{\log 2\cosh\left[\beta\hat\xi^1\left(\dfrac{\a}{\theta}\sqrt{(1 + \rho_1)(1 + \rho_2)} \bar{n}^\tau_{\eta^1}+  
          \dfrac{\b}{\alpha}\sqrt{(1+\rho_1)(1 + \rho_3)}\right)\bar{n}^\phi_{\chi^1}\right.\right.
          \\\\ \noalign{\vspace{-10pt}}
          
          &\left.\left.+X^\sigma \sqrt{\dfrac{\beta}{2}\gamma\bar p^z}\right]\right\}
         \\\\ \noalign{\vspace{-10pt}}
         &+\theta^{-2}\varsigma\,\mathbb E\left\{\log 2\cosh\left[\beta\hat\eta^1\theta\left( \a\sqrt{(1 + \rho_1)(1 + \rho_2)}\bar{n}^\sigma_{\xi^1}+ \dfrac{\c}{\alpha}\sqrt{(1 + \rho_2)(1 + \rho_3)}\bar{n}^\phi_{\chi^1}\right)\right.\right. 
         \\\\ \noalign{\vspace{-10pt}}
         &\left.\left.+X^\tau \sqrt{\dfrac{\beta}{2}\theta^2\gamma\pdags}\right]\right\}
         \\\\ \noalign{\vspace{-10pt}}
         &+\alpha^{-2}\varsigma\,\mathbb{E}\left\{\log 2\cosh\left[\beta\hat\chi^1\alpha\left( \b\sqrt{(1 + \rho_1)(1 + \rho_3)}\bar{n}^\sigma_{\xi^1}+ \dfrac{\c}{\theta}\sqrt{(1 + \rho_2)(1 + \rho_3)}\bar{n}^\tau_{\eta^1}\right)\right.\right. 
         \\\\ \noalign{\vspace{-10pt}}
         &\left.\left.+ X^\phi\sqrt{\dfrac{\beta}{2}\alpha^2\gamma\pdagw} \right]\right\} 
         \\\\ \noalign{\vspace{-10pt}}
         &-\dfrac{\gamma\varsigma}{2}\log\left[\mathrm{det}\bm C\right]
    \\\\ \noalign{\vspace{-10pt}}
    & +\gamma\varsigma\beta^3\a\b\c\dfrac{  \qsigma(1 - \qtau)(1 - \qphi)     +(1 - \qsigma) (1 - \qtau)\qphi    + (1 - \qsigma) \qtau   (1 - \qphi) }{\mathrm{det} \bm C}
    \\\\ \noalign{\vspace{-10pt}}
    &
    +\dfrac{\gamma\varsigma}{2}\dfrac{\beta^2 \a^2  (1 - \qsigma) \qtau + \beta^2 \a^2 \qsigma(1 - \qtau)  + 
 \beta^2 \c^2 \qtau(1 - \qphi )  }{\mathrm{det} \bm C}
 \\\\ \noalign{\vspace{-10pt}}
 & +\dfrac{\gamma\varsigma}{2}\dfrac{\beta^2 \c^2 (1 - \qtau) \qphi + 
 \beta^2 \b^2 (1 - \qphi) \qsigma + \beta^2 \b^2(1 - \qsigma) \qphi}{\mathrm{det} \bm C}
    \\\\ \noalign{\vspace{-10pt}}
        & -\beta\varsigma\left[\a\theta^{-1}\sqrt{(1 + \rho_1)(1 + \rho_2)}\bar{n}^\sigma_{\xi^1}\bar{n}^\tau_{\eta^1}+\b\alpha^{-1}\sqrt{(1 + \rho_1)(1 + \rho_3)}\bar{n}^\sigma_{\xi^1}\bar{n}^\phi_{\chi^1}\right.
        \\\\ \noalign{\vspace{-10pt}}
        &\left.+\c(\alpha\theta)^{-1}\sqrt{(1 + \rho_2)(1 + \rho_3)}\bar{n}^\tau_{\eta^1}\bar{n}^\phi_{\chi^1}\right]
    \\\\ \noalign{\vspace{-10pt}}
        &-\dfrac{\gamma\varsigma}{4}\beta\left[ \bar p^z(1-\qsigma)+ \pdags(1-\qtau)+\pdagw(1-\qphi)\right]
    \end{array}
\end{equation}}
where:
{\begin{equation} 
\footnotesize
\begin{aligned} 
  \det\mathbf{C} &= 1 - \big[\beta^{2} a^{2} (1-\bar{q}^{\tau}) (1-\bar{q}^{\sigma})+\beta^{2} b^{2} (1-\bar{q}^{\phi}) (1-\bar{q}^{\sigma}) +\beta^{2} c^{2} (1-\bar{q} ^{\phi}) (1-\bar{q}^{\tau}) \big] \\
  &\quad - 2 a b c \beta^{3} (1-\bar{q}^{\phi}) (1-\bar{q}^{ \sigma}) (1-\bar{q}^{\tau}).
\end{aligned}
\end{equation}}
and 
{\begin{equation} \small
    \varsigma = \dfrac{1}{3}(\alpha + \theta + \alpha\theta).
\end{equation}}

\noindent
The values of the order parameters at fixed control parameters can be obtained by evaluating
the saddle points of \eqref{eq:fRS}. The set of self-consistent equations to be fulfilled by
these values is shown below:

{\footnotesize\begin{equation} 
\label{eq:self_eq_sup_low_M}
\bar{n}^\sigma_{\xi^1} =\mathbb{E}\left\{\hat{\xi}^1\tanh\left[\beta\Hat{\xi}^1\biggl(\dfrac{\a}{\theta}\sqrt{(1+\rho_1)(1+\rho_2)}\bar{n}^\tau_{\eta^1} + \dfrac{\b}{\alpha}\sqrt{(1+\rho_1)(1+\rho_3)}\bar{n}^\phi_{\chi^1}\biggr)+X^{\sigma}\sqrt{\dfrac{\beta\gamma \bar{p}^z}{2}}\right]\right\},
\end{equation}}

{\footnotesize \begin{equation}   
\bar{n}^\tau_{\eta^1} = \mathbb{E}\left\{\hat{\eta}^1\tanh\left[\beta\Hat{\eta}^1\theta\biggl(\a\sqrt{(1+\rho_1)(1+\rho_2)}\bar{n}^\sigma_{\xi^1}+ \dfrac{\c}{\alpha}\sqrt{(1+\rho_2)(1+\rho_3)}\bar{n}^\phi_{\chi^1}\biggr)+X^{\tau}\theta\sqrt{\dfrac{\beta\gamma \bar{p}^{\s}}{2}}\right]\right\},
\end{equation}}

{\footnotesize\begin{equation}
\bar{n}^\phi_{\chi^1} = \mathbb{E}\left\{\hat{\chi}^1\tanh\left[\beta\Hat{\chi}^1\alpha\biggl(\b\sqrt{(1+\rho_1)(1+\rho_3)}  \bar{n}^\sigma_{\xi^1}+ \dfrac{\c}{\theta}\sqrt{(1+\rho_2)(1+\rho_3)}\bar{n}^\tau_{\eta^1}\biggr) +X^{\phi}\alpha\sqrt{\dfrac{\beta\gamma \bar{p}^{\w}}{2}}\right]\right\},
\end{equation}}

and the overlaps
{
\begin{equation}\footnotesize
\label{eq:overlap_sup_low_Q}
\begin{array}{lll}
        \qsigma = \mathbb{E}\left\{\tanh^2\left[\beta\Hat{\xi}^1\biggl(\dfrac{\a}{\theta}\sqrt{(1+\rho_1)(1+\rho_2)}\bar{n}^\tau_{\eta^1} + \dfrac{\b}{\alpha}\sqrt{(1+\rho_1)(1+\rho_3)}\bar{n}^\phi_{\chi^1}\biggr) + X^{\sigma}\sqrt{\dfrac{\beta\gamma \bar{p}^z}{2}}\right]\right\},
    \\\\ \noalign{\vspace{-10pt}}
        \qtau = \mathbb{E}\left\{\tanh^2\left[\beta\Hat{\eta}^1\theta\biggl(\a\sqrt{(1+\rho_1)(1+\rho_2)}\bar{n}^\sigma_{\xi^1}+ \dfrac{\c}{\alpha}\sqrt{(1+\rho_2)(1+\rho_3)}\bar{n}^\phi_{\chi^1}\biggr) + X^{\tau}\theta\sqrt{\dfrac{\beta\gamma \bar{p}^{\s}}{2}}\right]\right\},
    \\\\ \noalign{\vspace{-10pt}}
        \qphi = \mathbb{E}\left\{\tanh^2\left[\beta\Hat{\chi}^1\alpha\biggl(\b\sqrt{(1+\rho_1)(1+\rho_3)}  \bar{n}^\sigma_{\xi^1}+ \dfrac{\c}{\theta}\sqrt{(1+\rho_2)(1+\rho_3)}\bar{n}^\tau_{\eta^1}\biggr) + X^{\phi}\alpha\sqrt{\dfrac{\beta\gamma \bar{p}^{\w}}{2}}\right]\right\},
\end{array}
\end{equation}
}
where
{\begin{equation}\small
\label{eq:p_equations}
\begin{array}{lll}
\dfrac{\beta}{2}\gamma \bar{p}^z(\a,\b,\c,\beta,\qsigma,\qtau,\qphi)&= 2\dfrac{\mathcal{D}(\qsigma,\qtau,\qphi) }{[\det\mathbf{C}]^2}\partial_{\qsigma}(\det\mathbf{C})+2\dfrac{\gamma}{2}\dfrac{\partial_{\qsigma}(\det\mathbf{C})}{\det\mathbf{C}}
\\\\ \noalign{\vspace{-10pt}}
&-2\dfrac{\partial_{\qsigma}(\mathcal{D}(\qsigma,\qtau,\qphi)) }{\det\mathbf{C}},
\end{array}
\end{equation}}
and 
{\begin{equation}\small
\begin{array}{lll}
     \mathcal{D}(\qsigma,\qtau,\qphi) &=& \gamma\beta^3\a\b\c\left[  \qsigma(1 - \qtau)(1 - \qphi)     +(1 - \qsigma) (1 - \qtau)\qphi \right.  
     \\\\ \noalign{\vspace{-10pt}}
     &&\left.+ (1 - \qsigma) \qtau   (1 - \qphi) \right]
    \\\\ \noalign{\vspace{-10pt}}
    &&
    +\dfrac{\gamma}{2}\left[\beta^2 \a^2  (1 - \qsigma) \qtau + \beta^2 \a^2 \qsigma(1 - \qtau)  + 
 \beta^2 \c^2 \qtau(1 - \qphi )  \right]
 \\\\ \noalign{\vspace{-10pt}}
 && +\dfrac{\gamma}{2}\left[\beta^2 \c^2 (1 - \qtau) \qphi + 
 \beta^2 \b^2 (1 - \qphi) \qsigma + \beta^2 \b^2(1 - \qsigma) \qphi\right].
\end{array}
\end{equation}}
\vspace{3mm}
We stress that by a cyclically switch of the variables we obtain the self for $\bar{p}^{s^\dag}$ and $\bar{p}^{w^\dag}$, namely
\begin{equation}
\begin{array}{lll}
     &\bar{p}^{s^\dag}(\a,\b,\c,\beta,\qtau,\qsigma,\qphi)= \bar{p}^z(\a,\b,\c,\beta,\qsigma,\qtau,\qphi),\\\\

     &\bar{p}^{w^\dag}(\a,\b,\c,\beta,\qphi,\qtau,\qsigma)= \bar{p}^z(\a,\b,\c,\beta,\qsigma,\qtau,\qphi).
\end{array}
\end{equation}

\vspace{3mm}
\noindent   
Then exploiting the generator function, we get the self consistency equations for the three archetypes-Mattis magnetization
{
\begin{equation}\hspace{-0.5cm}\footnotesize
\label{mattisSelf}
\begin{array}{lllll}
    \bar{m}^\sigma_{\xi^1} = \mathbb{E}\left\{\xi^1\tanh\left[\beta\Hat{\xi}^1\biggl(\dfrac{\a}{\theta}\sqrt{(1+\rho_1)(1+\rho_2)}\bar{n}^\tau_{\eta^1} + \dfrac{\b}{\alpha}\sqrt{(1+\rho_1)(1+\rho_3)}\bar{n}^\phi_{\chi^1}\biggr) +X^{\sigma}\sqrt{\dfrac{\beta\gamma \bar{p}^z}{2}}\right]\right\},
    \\\\ \noalign{\vspace{-10pt}}
    \bar{m}^\tau_{\eta^1} = \mathbb{E}\left\{\eta^1\tanh\left[\beta\Hat{\eta}^1\theta\biggl(\a\sqrt{(1+\rho_1)(1+\rho_2)}\bar{n}^\sigma_{\xi^1}+ \dfrac{\c}{\alpha}\sqrt{(1+\rho_2)(1+\rho_3)}\bar{n}^\phi_{\chi^1}\biggr)+ X^{\tau}\theta\sqrt{\dfrac{\beta\gamma \bar{p}^{\s}}{2}}\right]\right\},
    \\\\ \noalign{\vspace{-10pt}}
    \bar{m}^\phi_{\chi^1} = \mathbb{E}\left\{\chi^1\tanh\left[\beta\Hat{\chi}^1\alpha\biggl(\b\sqrt{(1+\rho_1)(1+\rho_3)}  \bar{n}^\sigma_{\xi^1}+ \dfrac{\c}{\theta}\sqrt{(1+\rho_2)(1+\rho_3)}\bar{n}^\tau_{\eta^1}\biggr)+ X^{\phi}\alpha\sqrt{\dfrac{\beta\gamma \bar{p}^{\w}}{2}}\right]\right\}.
\end{array}
\end{equation}
}

\subsubsection*{Proof for unsupervised} \label{appsubsec:proofGuerraUnsup}
In this appendix, using Guerra's interpolation scheme, we are going to get an explicit expression for the quenched free energy, in the Unsupervised case.

\noindent
Guerra's interpolating partition function, in the unsupervised setting reads:

\begin{equation}\label{Integralista}\footnotesize
\begin{aligned}
\mathcal{Z}_{\bm N, M, \bm r}(\beta, t, \bm J) & \doteq \sum_{\{\sigma\},\{\tau\},\{\phi\}}^{2^{N_1},2^{N_2},2^{N_3}} \exp\Bigg\{ a(t) + b(t) + c(t) + d(t) + J(\bm{\sigma},\bm{\tau},\bm{\phi}) \Bigg\},
\end{aligned}    
\end{equation}

\noindent
where we have defined
\begin{equation}\footnotesize
    \begin{aligned}
a(t) \doteq t \, \beta \left[ \vphantom{\sum_{a=1}^{M}} \right. &
    \a \sqrt{N_1 N_2} \sqrt{1+\rho_1} \sqrt{1+\rho_2} \cdot \frac{1}{M} \sum_{a=1}^{M} n_{\Xi^{1,a}}^\sigma n_{\Theta^{1,a}}^\tau \\
+ & \b \sqrt{N_1 N_3} \sqrt{1+\rho_1} \sqrt{1+\rho_3} \cdot \frac{1}{M} \sum_{a=1}^{M} n_{\Xi^{1,a}}^\sigma n_{\Upsilon^{\mu,a}}^\phi \\
+ & \c \sqrt{N_2 N_3} \sqrt{1+\rho_2} \sqrt{1+\rho_3} \cdot \frac{1}{M} \sum_{a=1}^{M} n_{\Theta^{1,a}}^\tau n_{\Upsilon^{\mu,a}}^\phi
\left. \vphantom{\sum_{a=1}^{M}} \right],
\end{aligned}
\end{equation}

\begin{equation}\footnotesize
    \begin{aligned}
b(t) \doteq (1 - t) \left[ \vphantom{\sum_{a=1}^{M}} \right. &
    \sqrt{N_1 N_2} \left( \psi_\sigma^{(1)} \sum_{a=1}^{M} n_{\Xi^{1,a}}^\sigma + \psi_\tau^{(1)} \sum_{a=1}^{M} n_{\Theta^{1,a}}^\tau \right) \\
+ & \sqrt{N_1 N_3} \left( \psi_\sigma^{(2)} \sum_{a=1}^{M} n_{\Xi^{1,a}}^\sigma + \psi_\phi^{(1)} \sum_{a=1}^{M} n_{\Upsilon^{\mu,a}}^\phi \right) \\
+ & \sqrt{N_2 N_3} \left( \psi_\tau^{(2)} \sum_{a=1}^{M} n_{\Theta^{1,a}}^\tau + \psi_\phi^{(2)} \sum_{a=1}^{M} n_{\Upsilon^{\mu,a}}^\phi \right)
\left. \vphantom{\sum_{a=1}^{M}} \right],
\end{aligned}
\end{equation}

\begin{equation}\footnotesize
    \begin{aligned}
c(t) \doteq \sqrt{t} \, \beta \left[ \vphantom{\sum_{\mu>1}} \right. & 
    \a \sqrt{\left( \frac{1+\rho_{12}}{N_1 N_2 (1+\rho_1)(1+\rho_2)} \right)} 
    \sum_{\mu>1} \sum_{i,j=1}^{N_1,N_2} J_{ij}^\mu \sigma_i \tau_j \\
+ & \b \sqrt{\left( \frac{1+\rho_{13}}{N_1 N_3 (1+\rho_1)(1+\rho_3)} \right)} 
    \sum_{\mu>1} \sum_{i,k=1}^{N_1,N_3} Y_{ik}^\mu \sigma_i \phi_k \\
+ & \c \sqrt{\left( \frac{1+\rho_{23}}{N_2 N_3 (1+\rho_2)(1+\rho_3)} \right)} 
    \sum_{\mu>1} \sum_{j,k=1}^{N_2,N_3} V_{jk}^\mu \tau_j \phi_k 
\left. \vphantom{\sum_{\mu>1}} \right],
\end{aligned}
\end{equation}

\begin{equation}\footnotesize
    \begin{aligned}
d(t) \doteq \sqrt{1 - t}  \left[ 
    \vphantom{\sum_i} 
    \right. & 
    A_\sigma^\tau \sum_i \varphi_i^{\sigma,\tau} \sigma_i 
    + A_\sigma^\phi \sum_i \varphi_i^{\sigma,\phi} \sigma_i 
    + A_\tau^\sigma \sum_j \varphi_j^{\tau,\sigma} \tau_j \\
+ & A_\tau^\phi \sum_j \varphi_j^{\tau,\phi} \tau_j 
    + A_\phi^\sigma \sum_k \varphi_k^{\phi,\sigma} \phi_k 
    + A_\phi^\tau \sum_k \varphi_k^{\phi,\tau} \phi_k 
\left. \vphantom{\sum_i} \right],
\end{aligned}
\end{equation}

\begin{equation}\footnotesize
    \begin{aligned}
J(\bm{\sigma},\bm{\tau},\bm{\phi}) \doteq & J_1 \sum_{i=1}^{N_1} \xi_i^1 \sigma_i + J_2 \sum_{i=1}^{N_2} \eta_i^1 \tau_i + J_3 \sum_{i=1}^{N_3} \chi_i^1 \phi_i,
\end{aligned}
\end{equation}

\noindent
and $\varphi_i^{\sigma,\tau},\varphi_i^{\sigma,\phi},\varphi_j^{\tau,\sigma},\varphi_j^{\tau,\phi},\varphi_k^{\phi,\sigma},\varphi_k^{\phi,\tau}  \sim\mathcal{N}(0,1)$.

\vspace{3mm}
\noindent
Following the same procedure as in Subsec.~\ref{appsubsec:proofGuerraSup}, we can compute the derivative $\partial_t\mathcal{A}(t)$.

{
\begin{equation}\footnotesize
	\begin{array}{lllll}
    \partial_t\mathcal{A}(t)= \left[  \a\beta \dfrac{\sqrt{N_1 N_2} }{L} \sqrt{1+\rho_1}\sqrt{1+\rho_2} \dfrac{1}{M} \SOMMA{a=1}{M}\langle n_{\Xi^{1,a}}^\sigma n_{\Theta^{1,a}}^\tau\rangle \right.+   
		\\\\ \noalign{\vspace{-10pt}}
		\qquad\qquad+\b\beta \dfrac{\sqrt{N_1 N_3} }{L} \sqrt{1+\rho_1}\sqrt{1+\rho_3} \dfrac{1}{M} \SOMMA{a=1}{M}\langle n_{\Xi^{1,a}}^\sigma n_{\Upsilon^{\mu,a}}^\phi\rangle +
		\\\\ \noalign{\vspace{-10pt}}
		\left.\qquad\qquad+ \c\beta \dfrac{\sqrt{N_2 N_3} }{L} \sqrt{1+\rho_2}\sqrt{1+\rho_3} \dfrac{1}{M} \SOMMA{a=1}{M}\langle n_{\Theta^{1,a}}^\tau n_{\Upsilon^{\mu,a}}^\phi\rangle  \right]+
		\\\\ \noalign{\vspace{-10pt}}
		\qquad\qquad-\dfrac{\sqrt{N_1 N_2} }{L}\left[ \psi_\sigma^{(1)}\SOMMA{a=1}{M}\langle n_{\Xi^{1,a}}^\sigma\rangle + \psi_\tau^{(1)}\SOMMA{a=1}{M}\langle n_{\Theta^{1,a}}^\tau\rangle    \right] +
		\\\\ \noalign{\vspace{-10pt}}
		\qquad\qquad- \dfrac{\sqrt{N_1 N_3} }{L}\left[ 
		\psi_\sigma^{(2)}\SOMMA{a=1}{M}\langle n_{\Xi^{1,a}}^\sigma\rangle + \psi_\phi^{(1)}\SOMMA{a=1}{M}\langle n_{\Upsilon^{\mu,a}}^\phi\rangle\right]+
		\\\\ \noalign{\vspace{-10pt}}
		\qquad\qquad-\dfrac{\sqrt{N_2 N_3}}{L} \left[ \psi_\tau^{(2)}\SOMMA{a=1}{M}\langle n_{\Theta^{1,a}}^\tau\rangle + \psi_\phi^{(2)}\SOMMA{a=1}{M}\langle n_{\Upsilon^{\mu,a}}^\phi\rangle \right]+
		\\\\ \noalign{\vspace{-10pt}}
		\qquad\qquad+\dfrac{K-1}{L}\dfrac{\beta^2}{2}\a^2\dfrac{1+\rho_{12}}{(1+\rho_{1})(1+\rho_2)}\Big( 1-\langle q_{ab}^\sigma q_{ab}^\tau  \rangle\Big)+
		\\\\ \noalign{\vspace{-10pt}}
		\qquad\qquad+\dfrac{K-1}{L}\dfrac{\beta^2}{2}\b^2\dfrac{1+\rho_{13}}{(1+\rho_{1})(1+\rho_3)}\left( 1-\langle q_{ab}^\sigma q_{ab}^\phi  \rangle\right)+
		\\\\ \noalign{\vspace{-10pt}}
		\qquad\qquad+\dfrac{K-1}{L}\dfrac{\beta^2}{2}\c^2\dfrac{1+\rho_{23}}{(1+\rho_{2})(1+\rho_3)}\left( 1-\langle q_{ab}^\tau q_{ab}^\phi  \rangle\right)+
		\\\\ \noalign{\vspace{-10pt}}
		\qquad\qquad-\dfrac{A_\sigma^2}{2}N_1 \Big( 1-\langle q_{ab}^\sigma\rangle \Big) -\dfrac{A_\tau^2}{2}N_2 \Big( 1-\langle q_{ab}^\tau\rangle \Big) -\dfrac{A_\phi^2}{2}N_3 \Big( 1-\langle q_{ab}^\phi\rangle \Big).

	\end{array}
\end{equation}}

\noindent
We can fix the parameters as follows: 
{
\begin{equation}\footnotesize
\begin{array}{ll}
	\psi_\sigma^{(1)}=\dfrac{\a\beta}{M}\sqrt{(1+\rho_{1})(1+\rho_2)} \bar n_{\Theta^{1,a}}^\tau, & \psi_\sigma^{(2)}=\dfrac{\b\beta}{M}\sqrt{(1+\rho_{1})(1+\rho_3)} \bar n_{\Upsilon^{1,a}}^\phi, \\
	\psi_\tau^{(1)}=\dfrac{\a\beta}{M}\sqrt{(1+\rho_{1})(1+\rho_2)} \bar n_{\Xi^{1,a}}^\sigma, & \psi_\tau^{(2)}=\dfrac{\c\beta}{M}\sqrt{(1+\rho_{2})(1+\rho_3)} \bar n_{\Upsilon^{1,a}}^\phi, \\
	\psi_\phi^{(1)}=\dfrac{\b\beta}{M}\sqrt{(1+\rho_{1})(1+\rho_3)} \bar n_{\Xi^{1,a}}^\sigma, & \psi_\phi^{(2)}=\dfrac{\c\beta}{M}\sqrt{(1+\rho_{2})(1+\rho_3)} \bar n_{\Theta^{1,a}}^\tau, \\
    \end{array}
\end{equation}
\begin{equation}\footnotesize
    \begin{array}{lllll}  (A_\sigma^{(\tau)})^2=\a^2\beta^2\dfrac{K-1}{N_1}\dfrac{1+\rho_{12}}{(1+\rho_{1})(1+\rho_2)} \qtau, & (A_\sigma^{(\phi)})^2=\b^2\beta^2\dfrac{K-1}{N_1}\dfrac{1+\rho_{13}}{(1+\rho_{1})(1+\rho_3)} \qphi, \\
	(A_\tau^{(\sigma)})^2=\a^2\beta^2\dfrac{K-1}{N_2}\dfrac{1+\rho_{12}}{(1+\rho_{1})(1+\rho_2)} \qsigma, & (A_\tau^{(\phi)})^2=\c^2\beta^2\dfrac{K-1}{N_2}\dfrac{1+\rho_{23}}{(1+\rho_{2})(1+\rho_3)} \qphi, \\
	(A_\phi^{(\sigma)})^2=\b^2\beta^2\dfrac{K-1}{N_3}\dfrac{1+\rho_{13}}{(1+\rho_{1})(1+\rho_3)} \qsigma, & (A_\phi^{(\tau)})^2=\c^2\beta^2\dfrac{K-1}{N_3}\dfrac{1+\rho_{23}}{(1+\rho_{2})(1+\rho_3)} \qtau. \\
\end{array}
\end{equation}}

\noindent
We obtain the final expression for the derivative of the interpolating free energy in the thermodynamic limit, which reads

{
\begin{equation}\footnotesize\label{eq:TAMstraming_computed}
	\begin{array}{lllll}
		\partial_t\mathcal{A}(t) = - \dfrac{\beta}{LM}\SOMMA{a=1}{M}\left[ \a \sqrt{N_1 N_2} \sqrt{(1+\rho_{1})(1+\rho_2)}  \bar n_{\xi^{1}}^\sigma  \bar n_{\eta^{1}}^\tau \right.+ 
		\\\\ \noalign{\vspace{-10pt}}
		\qquad\qquad\qquad\qquad\quad+\b \sqrt{N_1 N_3} \sqrt{(1+\rho_{1})(1+\rho_3)}  \bar n_{\xi^{1}}^\sigma  \bar n_{\chi^{1}}^\phi \, +
		\\\\ \noalign{\vspace{-5pt}}
		\qquad\qquad\qquad\qquad\quad\left. + \c \sqrt{N_2 N_3} \sqrt{(1+\rho_{2})(1+\rho_3)}  \bar n_{\eta^{1}}^\tau  \bar n_{\chi^{1}}^\phi \right] +
		\\\\ \noalign{\vspace{-10pt}}
		\qquad\qquad-\a^2 \beta^2 \dfrac{K-1}{2L} \dfrac{1+\rho_{12}}{(1+\rho_{1})(1+\rho_2)}  \qtau \left( 1- \qsigma\right) 
		-\b^2 \beta^2 \dfrac{K-1}{2L} \dfrac{1+\rho_{13}}{(1+\rho_{1})(1+\rho_3)}  \qphi \left( 1- \qsigma\right)\,+
		\\\\ \noalign{\vspace{-10pt}}
		\qquad\qquad-\a^2 \beta^2 \dfrac{K-1}{2L} \dfrac{1+\rho_{12}}{(1+\rho_{1})(1+\rho_2)}  \qsigma \left( 1- \qtau\right) 
		-\c^2 \beta^2 \dfrac{K-1}{2L} \dfrac{1+\rho_{23}}{(1+\rho_{2})(1+\rho_3)}  \qphi \left( 1- \qtau\right)\,+
		\\\\ \noalign{\vspace{-10pt}}
		\qquad\qquad-\b^2 \beta^2 \dfrac{K-1}{2L} \dfrac{1+\rho_{13}}{(1+\rho_{1})(1+\rho_3)}  \qsigma \left( 1- \qphi\right) 
		-\c^2 \beta^2 \dfrac{K-1}{2L} \dfrac{1+\rho_{23}}{(1+\rho_{2})(1+\rho_3)}  \qtau \left( 1- \qphi\right)+
		\\\\ \noalign{\vspace{-10pt}}
		\qquad\qquad+\dfrac{K-1}{L}\dfrac{\beta^2}{2}\Bigg[ \a^2 \dfrac{1+\rho_{12}}{(1+\rho_{1})(1+\rho_2)} + \b^2 \dfrac{1+\rho_{13}}{(1+\rho_{1})(1+\rho_3)}+  \c^2 \dfrac{1+\rho_{23}}{(1+\rho_{2})(1+\rho_3)} \Bigg].
	\end{array}
\end{equation}}

\noindent
We are now ready to calculate the Cauchy condition $\mathcal{A}(t=0)$ as follows:
{\begin{equation}\small
    \mathcal A(t=0) = \lim_{\bm N \to \infty} \frac{1}{L}\ln \mathbb{E}\mathcal{Z}_{\bm N}(t=0),
\end{equation}}

\noindent
where the interpolating partition function, evaluated at $t=0$, reads as
{
\begin{equation}\hspace{-0.5cm}\footnotesize
	\begin{array}{lllll}
		\mathcal{Z}_{\bm N}(t=0) =
		\\\\ \noalign{\vspace{-10pt}}
		= \SOMMA{\{\sigma\},\{\tau\},\{\phi\}}{2^{N_1},2^{N_2},2^{N_3}} \exp \Bigg\{ \left( \sqrt{N_1 N_2}\psi_\sigma^{(1)} + \sqrt{N_1 N_3}\psi_\sigma^{(2)}  \right)\left[ \dfrac{1}{N_1 r_1 (1+\rho_1)} \SOMMA{a=1}{M}\SOMMA{i=1}{N_1} \Xi_i^{\mu,a}\sigma_i \right] + 
		\\\\ \noalign{\vspace{-10pt}}
		\qquad\qquad\qquad\qquad\quad+ A_\sigma^{(\tau)}\sum_i\varphi_i^{\sigma,\tau} + A_\sigma^{(\phi)}\sum_i\varphi_i^{\sigma,\phi}\Bigg\}\times
		\\\\ \noalign{\vspace{-10pt}}
		\qquad\qquad\quad\times \exp \Bigg\{ \left( \sqrt{N_1 N_2}\psi_\tau^{(1)} + \sqrt{N_2 N_3}\psi_\tau^{(2)}  \right)\left[ \dfrac{1}{N_2 r_2 (1+\rho_2)} \SOMMA{a=1}{M}\SOMMA{j=1}{N_2} \Theta_j^{\mu,a}\tau_j \right] +
		\\\\ \noalign{\vspace{-10pt}}
		\qquad\qquad\qquad\qquad\quad+ A_\tau^{(\sigma)}\sum_j\varphi_j^{\tau,\sigma} + A_\tau^{(\phi)}\sum_j\varphi_j^{\tau,\phi}\Bigg\}\times
		\\\\ \noalign{\vspace{-10pt}}
		\qquad\qquad\quad\times \exp \Bigg\{ \left( \sqrt{N_1 N_3}\psi_\phi^{(1)} + \sqrt{N_2 N_3}\psi_\phi^{(2)}  \right)\left[ \dfrac{1}{N_3 r_3 (1+\rho_3)} \SOMMA{a=1}{M}\SOMMA{k=1}{N_3} \Upsilon_k^{\mu,a}\phi_k \right] +
		\\\\ \noalign{\vspace{-10pt}}
		\qquad\qquad\qquad\qquad\quad+ A_\phi^{(\sigma)}\sum_k\varphi_k^{\phi,\sigma} + A_\phi^{(\tau)}\sum_k\varphi_k^{\phi,\tau}\Bigg\}.
	\end{array}
\end{equation}}

\vspace{3mm}
\noindent
So the One-body statistical pressure at finite size reads as:

\begin{equation}\label{eq:TAMone_body_computed}
    \mathcal{A}(t=0,\beta) = U + V + W,
\end{equation}
where we have defined

\begin{equation}
    \begin{aligned}
U \doteq \dfrac{N_1}{L} \, \mathbb{E} \ln 2 \cosh \Bigg\{ 
\bigg( & \dfrac{\a\beta}{M} \sqrt{N_1 N_2} \sqrt{(1+\rho_1)(1+\rho_2)} \, \bar n_{\eta^1}^\tau + \\
 + & \dfrac{\b\beta}{M} \sqrt{N_1 N_3} \sqrt{(1+\rho_1)(1+\rho_3)} \, \bar n_{\chi^1}^\phi \bigg) \cdot \\
 \cdot & \left( \dfrac{1}{N_1 r_1 (1+\rho_1)} \sum_{a=1}^M \Xi^{1,a} \right) \\
 + & \a\beta \sqrt{ \dfrac{K-1}{N_1} \cdot \dfrac{1+\rho_{12}}{(1+\rho_1)(1+\rho_2)} \cdot \qtau } \cdot \varphi^{\sigma,\tau} \\
 + & \b\beta \sqrt{ \dfrac{K-1}{N_1} \cdot \dfrac{1+\rho_{13}}{(1+\rho_1)(1+\rho_3)} \cdot \qphi } \cdot \varphi^{\sigma,\phi}
\Bigg\},
\end{aligned}
\end{equation}

\begin{equation}
    \begin{aligned}
V \doteq \dfrac{N_2}{L} \, \mathbb{E} \ln 2 \cosh \Bigg\{ 
\bigg( & \dfrac{\a\beta}{M} \sqrt{N_1 N_2} \sqrt{(1+\rho_1)(1+\rho_2)} \, \bar n_{\xi^1}^\sigma + \\
+ & \dfrac{\c\beta}{M} \sqrt{N_2 N_3} \sqrt{(1+\rho_2)(1+\rho_3)} \, \bar n_{\chi^1}^\phi \bigg) \cdot \\
 \cdot & \left( \dfrac{1}{N_2 r_2 (1+\rho_2)} \sum_{a=1}^M \Theta^{1,a} \right) \\
 + & \a\beta \sqrt{ \dfrac{K-1}{N_2} \cdot \dfrac{1+\rho_{12}}{(1+\rho_1)(1+\rho_2)} \cdot \qsigma } \cdot \varphi^{\tau,\sigma} \\
 + & \c\beta \sqrt{ \dfrac{K-1}{N_2} \cdot \dfrac{1+\rho_{23}}{(1+\rho_2)(1+\rho_3)} \cdot \qphi } \cdot \varphi^{\tau,\phi}
\Bigg\},
\end{aligned}
\end{equation}

\begin{equation}
    \begin{aligned}
W \doteq \dfrac{N_3}{L} \, \mathbb{E} \ln 2 \cosh \Bigg\{ 
\bigg( & \dfrac{\b\beta}{M} \sqrt{N_1 N_3} \sqrt{(1+\rho_1)(1+\rho_3)} \, \bar n_{\xi^1}^\sigma + \\
+ &
        \dfrac{\c\beta}{M} \sqrt{N_2 N_3} \sqrt{(1+\rho_2)(1+\rho_3)} \, \bar n_{\eta^1}^\tau \bigg) \cdot \\
 \cdot & \left( \dfrac{1}{N_3 r_3 (1+\rho_3)} \sum_{a=1}^M \Upsilon^{1,a} \right) \\
 + & \b\beta \sqrt{ \dfrac{K-1}{N_3} \cdot \dfrac{1+\rho_{13}}{(1+\rho_1)(1+\rho_3)} \cdot \qsigma } \cdot \varphi^{\phi,\sigma} \\
 + & \c\beta \sqrt{ \dfrac{K-1}{N_3} \cdot \dfrac{1+\rho_{23}}{(1+\rho_2)(1+\rho_3)} \cdot \qsigma } \cdot \varphi^{\phi,\sigma}
\Bigg\} \,.
\end{aligned}
\end{equation}

\noindent
Now, in the thermodynamic limit, fixing $\varsigma = \dfrac{1}{3}(\alpha + \theta + \alpha\theta)$, and merging these two expressions, that is the integral of the $t$-streaming (Eq.~\eqref{eq:TAMstraming_computed}) and the Cauchy condition (Eq.~\eqref{eq:TAMone_body_computed}), in Eq.\eqref{eq:pressioneInterp} under the replica symmetric assumption,  we get the explicit expression of the quenched free energy in terms of control and order parameters:
{
\begin{equation}\footnotesize
\label{eq:TAMUnsupPressMain}
	\begin{array}{lllll}
			\mathcal{A}^{(unsup)}_{\alpha,\theta,\gamma}(\beta)= -\beta\varsigma \left( \a\theta^{-1}\sqrt{(1+\rho_{1})(1+\rho_2)} \bar n_{\xi^{1}}^\sigma  \bar n_{\eta^{1}}^\tau +  \b\alpha^{-1}\sqrt{(1+\rho_{1})(1+\rho_3)} \bar n_{\xi^{1}}^\sigma  \bar n_{\chi^{1}}^\phi + \right.
			\\\\ \noalign{\vspace{-10pt}}
			\left. + \c(\alpha\theta)^{-1}\sqrt{(1+\rho_{2})(1+\rho_3)} \bar n_{\eta^{1}}^\tau  \bar n_{\chi^{1}}^\phi \right)+
			\\\\ \noalign{\vspace{-10pt}}
			+\dfrac{\gamma\varsigma}{2}\beta^2\left[ \a^2  \dfrac{1+\rho_{12}}{(1+\rho_{1})(1+\rho_2)} \left( 1- \qtau\right) \left( 1- \qsigma\right) 
			+\b^2  \dfrac{1+\rho_{13}}{(1+\rho_{1})(1+\rho_3)} \left( 1- \qphi\right) \left( 1- \qsigma\right)  \right. +
			\\\\ \noalign{\vspace{-10pt}}
			\left. +\c^2\dfrac{1+\rho_{23}}{(1+\rho_{2})(1+\rho_3)} \left(1-  \qphi \right) \left( 1-  \qtau \right)\right] +
			\\\\ \noalign{\vspace{-10pt}}
			+\varsigma\mathbb{E}\ln 2 \cosh \Bigg\{ \beta\left( \a \theta^{-1} \sqrt{(1+\rho_{1})(1+\rho_2)}  \bar n_{\eta^{1}}^\tau + \b\alpha^{-1} \sqrt{(1+\rho_{1})(1+\rho_3)}  \bar n_{\chi^{1}}^\phi  \right)\hat\xi^1 +
			\\\\ \noalign{\vspace{-10pt}}
			 +\a\beta \sqrt{\gamma \dfrac{1+\rho_{12}}{(1+\rho_{1})(1+\rho_2)} \qtau}\varphi^{\sigma,\tau}+ \b\beta\sqrt{\gamma \dfrac{1+\rho_{13}}{(1+\rho_{1})(1+\rho_3)} \qphi} \varphi^{\sigma,\phi} \Bigg\}+
			\\\\ \noalign{\vspace{-10pt}}
			+\varsigma\theta^{-2}\mathbb{E}\ln 2 \cosh \Bigg\{ \theta\beta\left( \a \sqrt{(1+\rho_{1})(1+\rho_2)}  \bar n_{\xi^{1}}^\sigma + \c\alpha^{-1} \sqrt{(1+\rho_{2})(1+\rho_3)}  \bar n_{\chi^{1}}^\phi  \right)\hat\eta^1 +
			\\\\ \noalign{\vspace{-10pt}}
			  +\a\beta\theta \sqrt{\gamma \dfrac{1+\rho_{12}}{(1+\rho_{1})(1+\rho_2)} \qsigma}\varphi^{\tau,\sigma}+ \c\beta\theta\sqrt{\gamma \dfrac{1+\rho_{23}}{(1+\rho_{2})(1+\rho_3)} \qphi} \varphi^{\tau,\phi} \Bigg\}+
			\\\\ \noalign{\vspace{-10pt}}
			+\varsigma\alpha^{-2}\mathbb{E}\ln 2 \cosh \Bigg\{ \alpha\beta\left( \b \sqrt{(1+\rho_{1})(1+\rho_3)}  \bar n_{\xi^{1}}^\sigma + \c\theta^{-1} \sqrt{(1+\rho_{2})(1+\rho_3)}  \bar n_{\eta^{1}}^\tau  \right)\hat\chi^1 +
			\\\\ \noalign{\vspace{-10pt}}
			 +\b\beta\alpha \sqrt{\gamma \dfrac{1+\rho_{13}}{(1+\rho_{1})(1+\rho_3)} \qsigma}\varphi^{\phi,\sigma}+ \c\beta\alpha\sqrt{\gamma \dfrac{1+\rho_{23}}{(1+\rho_{2})(1+\rho_3)} \qsigma} \varphi^{\phi,\sigma} \Bigg\}.
		\end{array}
	\end{equation}}

The values of the order parameters at fixed control parameters can be obtained by evaluating
the saddle points of \eqref{eq:TAMUnsupPressMain}. The set of self-consistent equations to be fulfilled by
these values is shown below:
{
\begin{equation}\footnotesize
	\begin{array}{lllll}
		\bar n_{\xi^{1}}^\sigma = \dfrac{1}{\sqrt{1+\rho_1}}\mathbb{E}\left\{ \hat \xi^1\tanh \Bigg[ \beta\left( \a \theta^{-1} \sqrt{(1+\rho_{1})(1+\rho_2)}  \bar n_{\eta^{1}}^\tau + \b\alpha^{-1} \sqrt{(1+\rho_{1})(1+\rho_3)}  \bar n_{\chi^{1}}^\phi  \right)\hat\xi^1+  \right. 
		\\\\ \noalign{\vspace{-10pt}}
		\left.\left. +\a\beta \sqrt{\gamma \dfrac{1+\rho_{12}}{(1+\rho_{1})(1+\rho_2)} \qtau}\varphi^{\sigma,\tau}+ \b\beta\sqrt{\gamma \dfrac{1+\rho_{13}}{(1+\rho_{1})(1+\rho_3)} \qphi} \varphi^{\sigma,\phi} \right]\right\},
	\end{array}
\end{equation}

\begin{equation}\footnotesize
	\begin{array}{lllll}
		\bar n_{\eta^{1}}^\tau = \dfrac{1}{\sqrt{1+\rho_2}}\mathbb{E}\left\{ \hat \eta^1\tanh \Bigg[ \theta\beta\left( \a \sqrt{(1+\rho_{1})(1+\rho_2)}  \bar n_{\xi^{1}}^\sigma + \c\alpha^{-1} \sqrt{(1+\rho_{2})(1+\rho_3)}  \bar n_{\chi^{1}}^\phi  \right)\hat\eta^1  +\right. 
		\\\\ \noalign{\vspace{-10pt}}
		\left.\left.  +\a\beta\theta \sqrt{\gamma \dfrac{1+\rho_{12}}{(1+\rho_{1})(1+\rho_2)} \qsigma}\varphi^{\tau,\sigma}+ \c\beta\theta\sqrt{\gamma \dfrac{1+\rho_{23}}{(1+\rho_{2})(1+\rho_3)} \qphi} \varphi^{\tau,\phi} \right]\right\},
	\end{array}
\end{equation}

\begin{equation}\footnotesize
	\begin{array}{lllll}
		\bar n_{\chi^{1}}^\phi = \dfrac{1}{\sqrt{1+\rho_3}}\mathbb{E}\left\{ \hat \chi^1\tanh \Bigg[ \alpha\beta\left( \b \sqrt{(1+\rho_{1})(1+\rho_3)}  \bar n_{\xi^{1}}^\sigma + \c\theta^{-1} \sqrt{(1+\rho_{2})(1+\rho_3)}  \bar n_{\eta^{1}}^\tau  \right)\hat\chi^1  +\right. 
		\\\\ \noalign{\vspace{-10pt}}
		\left.\left.  +\b\beta\alpha \sqrt{\gamma \dfrac{1+\rho_{13}}{(1+\rho_{1})(1+\rho_3)} \qsigma}\varphi^{\phi,\sigma}+ \c\beta\alpha\sqrt{\gamma \dfrac{1+\rho_{23}}{(1+\rho_{2})(1+\rho_3)} \qsigma} \varphi^{\phi,\sigma} \right]\right\},
	\end{array}
\end{equation}}

\vspace{3mm}
\noindent
and the overlaps:
{
\begin{equation}\footnotesize
	\begin{array}{lllll}
		\qsigma   = \mathbb{E}\left\{ \tanh^2 \Bigg[ \beta\left( \a \theta^{-1} \sqrt{(1+\rho_{1})(1+\rho_2)}  \bar n_{\eta^{1}}^\tau + \b\alpha^{-1} \sqrt{(1+\rho_{1})(1+\rho_3)}  \bar n_{\chi^{1}}^\phi  \right)\hat{\xi}^1 + \right. 
		\\\\ \noalign{\vspace{-10pt}}
		\left.\left.+\a\beta \sqrt{\gamma \dfrac{1+\rho_{12}}{(1+\rho_{1})(1+\rho_2)} \qtau}\varphi^{\sigma,\tau}+ \b\beta\sqrt{\gamma \dfrac{1+\rho_{13}}{(1+\rho_{1})(1+\rho_3)} \qphi} \varphi^{\sigma,\phi} \right]\right\},
	\end{array}
\end{equation}
\begin{equation}\footnotesize
	\begin{array}{lllll}
		\qtau = \mathbb{E}\left\{ \tanh^2 \Bigg[ \theta\beta\left( \a \sqrt{(1+\rho_{1})(1+\rho_2)}  \bar n_{\xi^{1}}^\sigma + \c\alpha^{-1} \sqrt{(1+\rho_{2})(1+\rho_3)}  \bar n_{\chi^{1}}^\phi  \right)\hat{\eta}^1 +\right. 
		\\\\ \noalign{\vspace{-10pt}}
		\left.\left. +\a\beta\theta \sqrt{\gamma \dfrac{1+\rho_{12}}{(1+\rho_{1})(1+\rho_2)} \qsigma}\varphi^{\tau,\sigma}+ \c\beta\theta\sqrt{\gamma \dfrac{1+\rho_{23}}{(1+\rho_{2})(1+\rho_3)} \qphi} \varphi^{\tau,\phi} \right]\right\},
	\end{array}
\end{equation}
\begin{equation}\footnotesize
	\begin{array}{lllll}
		\qphi  = \mathbb{E}\left\{ \tanh^2 \Bigg[ \alpha\beta\left( \b \sqrt{(1+\rho_{1})(1+\rho_3)}  \bar n_{\xi^{1}}^\sigma + \c\theta^{-1} \sqrt{(1+\rho_{2})(1+\rho_3)}  \bar n_{\eta^{1}}^\tau  \right)\hat{\chi}^1 +\right. 
		\\\\ \noalign{\vspace{-10pt}}
		\left.\left. +\b\beta\alpha \sqrt{\gamma \dfrac{1+\rho_{13}}{(1+\rho_{1})(1+\rho_3)} \qsigma}\varphi^{\phi,\sigma}+ \c\beta\alpha\sqrt{\gamma \dfrac{1+\rho_{23}}{(1+\rho_{2})(1+\rho_3)} \qsigma} \varphi^{\phi,\sigma} \right]\right\}.
	\end{array}
\end{equation}}

\vspace{3mm}
\noindent
Then exploiting the generator function, we get the self-consistency equations for the three archetypes - Mattis magnetization.

{
\begin{equation}\footnotesize
	\begin{array}{lllll}
		\bar{m}^\sigma_{\xi^1} = \dfrac{1}{\sqrt{1+\rho_1}}\mathbb{E}\left\{  \xi^1 \tanh \Bigg[ \beta\left( \a \theta^{-1} \sqrt{(1+\rho_{1})(1+\rho_2)}  \bar n_{\eta^{1}}^\tau + \b\alpha^{-1} \sqrt{(1+\rho_{1})(1+\rho_3)}  \bar n_{\chi^{1}}^\phi  \right)\xi^1 +  \right. 
		\\\\ \noalign{\vspace{-10pt}}
		\left.\left.+\a\beta \sqrt{\gamma \dfrac{1+\rho_{12}}{(1+\rho_{1})(1+\rho_2)} \qtau}\varphi^{\sigma,\tau}+ \b\beta\sqrt{\gamma \dfrac{1+\rho_{13}}{(1+\rho_{1})(1+\rho_3)} \qphi} \varphi^{\sigma,\phi} \right]\right\};
	\end{array}
\end{equation}

\begin{equation}\footnotesize
	\begin{array}{lllll}
		\bar{m}^\tau_{\eta^1} = \dfrac{1}{\sqrt{1+\rho_2}}\mathbb{E}\left\{ \eta^1\tanh \Bigg[ \theta\beta\left( \a \sqrt{(1+\rho_{1})(1+\rho_2)}  \bar n_{\xi^{1}}^\sigma + \c\alpha^{-1} \sqrt{(1+\rho_{2})(1+\rho_3)}  \bar n_{\chi^{1}}^\phi  \right)\eta^1+ \right. 
		\\\\ \noalign{\vspace{-10pt}}
		\left.\left.   +\a\beta\theta \sqrt{\gamma \dfrac{1+\rho_{12}}{(1+\rho_{1})(1+\rho_2)} \qsigma}\varphi^{\tau,\sigma}+ \c\beta\theta\sqrt{\gamma \dfrac{1+\rho_{23}}{(1+\rho_{2})(1+\rho_3)} \qphi} \varphi^{\tau,\phi} \right]\right\};
	\end{array}
\end{equation}

\begin{equation}\footnotesize
	\begin{array}{lllll}
		\bar{m}^\phi_{\chi^1} = \dfrac{1}{\sqrt{1+\rho_3}}\mathbb{E}\left\{ \chi^1\tanh \Bigg[ \alpha\beta\left( \b \sqrt{(1+\rho_{1})(1+\rho_3)}  \bar n_{\xi^{1}}^\sigma + \c\theta^{-1} \sqrt{(1+\rho_{2})(1+\rho_3)}  \bar n_{\eta^{1}}^\tau  \right)\chi^1+\right. 
		\\\\ \noalign{\vspace{-10pt}}
		\left.\left. +\b\beta\alpha \sqrt{\gamma \dfrac{1+\rho_{13}}{(1+\rho_{1})(1+\rho_3)} \qsigma}\varphi^{\phi,\sigma}+ \c\beta\alpha\sqrt{\gamma \dfrac{1+\rho_{23}}{(1+\rho_{2})(1+\rho_3)} \qsigma} \varphi^{\phi,\sigma} \right]\right\}.
	\end{array}
\end{equation}}

\subsection{Large dataset limit \texorpdfstring{$M \gg 1$}{ }}\label{appsec:approx}
In this section we will explore the case of a large number of examples for each kind of dataset. Under this assumption we can introduce some approximation in the previous self consistency equations that allow us to recast them in a more compact and useful form.

\noindent
Both Supervised and Unsupervised settings will use the following results:

\begin{lemma}\label{lemmaLargeM}
Given the explicit expression of $\hat{\xi}^1$ \eqref{eq:rho_def} the following expectation
\begin{equation}
    \mathbb{E}_{(\bm{\Xi}|\bm{\xi})}[\hat\xi^1]= \mathbb{E}_{(\bm{\Xi}|\bm{\xi})}\left[\dfrac{1}{M_1 r_1 (1+\rho_1)}\SOMMA{a=1}{M_1}\Xi^{1,a}\right]
    \label{eq:gauss_approx_M_large}
\end{equation}
can be simply approximated using the CLT on the sum over $a$, under the assumption of $M_1\gg 1$, namely
\begin{equation}
      \mathbb{E}_{(\bm{\Xi}|\bm{\xi})}[\hat{\xi}^1]\simeq \dfrac{\xi^1}{1+\rho_1}\left(1+\mathcal{G}^1\sqrt{\rho_1}\right)\;\;\;\mathrm{with}\;\;\mathcal{G}^1\sim\mathcal{N}(0,1).
\end{equation}
\end{lemma}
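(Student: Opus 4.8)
The plan is to treat $\hat\xi^1$, conditioned on the archetype $\xi^1$, as a normalized sum of i.i.d.\ random variables and invoke the Central Limit Theorem, exactly in the spirit of the Gaussian approximations \eqref{GaussApprox} already used in the main text; the ``$\mathbb{E}_{(\bm{\Xi}|\bm{\xi})}$'' on the left is read as the distributional characterization of $\hat\xi^1$ over the example realizations at fixed $\xi^1$. First I would fix $\xi^1\in\{-1,+1\}$ and recall that each entry $\Xi^{1,a}$ is drawn independently across $a$ from the two-point law $\frac{1+r_1}{2}\delta(\Xi^{1,a}-\xi^1)+\frac{1-r_1}{2}\delta(\Xi^{1,a}+\xi^1)$. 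From this one reads off the first two conditional moments directly,
\begin{equation*}
\mathbb{E}_{(\bm{\Xi}|\bm{\xi})}[\Xi^{1,a}]=r_1\,\xi^1,\qquad \mathbb{E}_{(\bm{\Xi}|\bm{\xi})}[(\Xi^{1,a})^2]=(\xi^1)^2=1,
\end{equation*}
so that each summand has conditional variance $1-r_1^2$.

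Next I would apply the CLT to the normalized sum. Since the $M_1$ examples are i.i.d.\ given $\xi^1$, for $M_1\gg1$ one has
\begin{equation*}
\SOMMA{a=1}{M_1}\Xi^{1,a}\simeq M_1 r_1\,\xi^1+\mathcal{G}^1\sqrt{M_1(1-r_1^2)},\qquad \mathcal{G}^1\sim\mathcal{N}(0,1),
\end{equation*}
and dividing by the prefactor $M_1 r_1(1+\rho_1)$ gives
\begin{equation*}
\hat\xi^1\simeq\frac{\xi^1}{1+\rho_1}+\frac{\mathcal{G}^1}{1+\rho_1}\,\frac{\sqrt{1-r_1^2}}{\sqrt{M_1}\,r_1}.
\end{equation*}

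The final, and really the only non-mechanical, step is to recognize the entropy definition $\rho_1=\frac{1-r_1^2}{M_1 r_1^2}$ from \eqref{eq:rho_def}, whence $\sqrt{\rho_1}=\sqrt{1-r_1^2}/(\sqrt{M_1}\,r_1)$ collapses the noise coefficient into $\sqrt{\rho_1}/(1+\rho_1)$, leaving $\hat\xi^1\simeq\frac{1}{1+\rho_1}\bigl(\xi^1+\mathcal{G}^1\sqrt{\rho_1}\bigr)$. To reach the stated form with $\xi^1$ factored out, I would exploit that $\xi^1\in\{-1,+1\}$ and that $\mathcal{G}^1$ is a \emph{symmetric} standard Gaussian, so that $\xi^1\mathcal{G}^1\stackrel{d}{=}\mathcal{G}^1$; replacing $\mathcal{G}^1$ by $\xi^1\mathcal{G}^1$ changes nothing in law and yields $\frac{\xi^1}{1+\rho_1}\bigl(1+\mathcal{G}^1\sqrt{\rho_1}\bigr)$, as claimed.

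The main point to guard against is not the arithmetic but the legitimacy of the CLT replacement in context: $\hat\xi^1$ is ultimately inserted inside the quenched averages defining the free energy, so one must be explicit that what is claimed is a \emph{distributional} identification (matching the first two moments in the $M_1\to\infty$ regime), with the $O(M_1^{-1/2})$ corrections to higher cumulants discarded consistently with the thermodynamic and replica-symmetric bookkeeping already adopted. I would also stress that the symmetry argument in the last step is \emph{exact} rather than asymptotic, so pulling $\xi^1$ out of the noise term introduces no further error, and that the analogous statements for $\hat\eta^1$ and $\hat\chi^1$ follow verbatim by relabeling $(r_1,\rho_1,M_1)\to(r_2,\rho_2,M_2)$ and $(r_3,\rho_3,M_3)$.
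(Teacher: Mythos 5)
Your proposal is correct and follows essentially the same route as the paper: apply the CLT to the sum over examples, match the first two conditional moments, and recognize $\rho_1=(1-r_1^2)/(M_1 r_1^2)$ so that the noise amplitude collapses to $\sqrt{\rho_1}/(1+\rho_1)$. The only (welcome) addition is that you make explicit the final step of factoring out $\xi^1$ via $\xi^1\mathcal{G}^1\stackrel{d}{=}\mathcal{G}^1$, which the paper leaves implicit in its ``rearranging all together.''
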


\begin{proof}
Using the explicit expression of $\hat{\xi}^\mu$ \eqref{eq:rho_def} and exploiting the CLT on the sum over the examples, in the limit of a great number of examples (i.e. $M_1\gg 1$), we get 
\begin{equation}
    \hat{\xi}^\mu \sim \mu_1+\mathcal{G}^\mu \sqrt{\mu_2-\mu_1^2}
\end{equation}
where $\mathcal{G}^\mu\sim\mathcal{N}(0,1)$, and 
\begin{equation}
    \begin{array}{lll}
         \mu_1=\mathbb{E}_{(\bm\Xi|\bm\xi)}\left[\hat{\xi}^\mu\right]=\dfrac{\xi^1}{1+\rho_1},
         \\\\
         \mu_2=\mathbb{E}_{(\bm\Xi|\bm\xi)}\left[(\hat{\xi}^\mu)^2\right]=\dfrac{1}{1+\rho_1}.
    \end{array}
\end{equation}
rearranging all together, we complete the proof.
\end{proof}
\noindent
We have presented only the case for $\hat\xi^1$, however, the previous lemma also applies to the other two layers; we only need to replace $\rho_1\to \rho_2,\rho_3$.

\begin{lemma}
\label{lemmasumg}
    If $x_i$, $i=1,\dots,N$ are independent standard Gaussian variables, and $A$ and $B_i$, $i=1,\dots,N$ are constants, then for any sufficiently regular function $f$
    \begin{equation}
        \mathbb{E}_{x_1,\dots,x_N}\left[f\left(A+\SOMMA{i=1}{N}x_i B_i\right)\right]= \mathbb{E}_{z}\left[f\left(A+z\sqrt{\SOMMA{i=1}{N} B_i^2}\right)\right],
    \end{equation}
    with $z\sim \mathcal{N}(0,1).$
\end{lemma}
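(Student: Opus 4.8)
The plan is to recognize that this statement is nothing more than the stability of the Gaussian family under linear combinations: a weighted sum of independent standard Gaussians is again Gaussian, with variance equal to the sum of the squared weights. Once this distributional identity is established, the equality of the two expectations of $f$ follows at once, because the random variables appearing inside $f$ on the two sides share the same law.

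Concretely, I would first set $S := \sum_{i=1}^N x_i B_i$ and argue that $S$ is a centered Gaussian with variance $v := \sum_{i=1}^N B_i^2$. The cleanest route is via the characteristic function: using the independence of the $x_i$ together with $\mathbb{E}[e^{\mathrm{i} t x_i}] = e^{-t^2/2}$, one computes
\[
\mathbb{E}\!\left[e^{\mathrm{i} t S}\right] = \prod_{i=1}^N \mathbb{E}\!\left[e^{\mathrm{i} t B_i x_i}\right] = \prod_{i=1}^N e^{-t^2 B_i^2/2} = \exp\!\left(-\tfrac{t^2}{2}\sum_{i=1}^N B_i^2\right),
\]
which is precisely the characteristic function of $\mathcal{N}(0,v)$. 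By the uniqueness of characteristic functions, $S \sim \mathcal{N}(0,v)$, i.e. $S$ has the same distribution as $z\sqrt{v} = z\sqrt{\sum_{i=1}^N B_i^2}$ with $z\sim\mathcal{N}(0,1)$.

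Since $A$ is a deterministic constant, adding it shifts both laws identically, so $A+S$ and $A + z\sqrt{\sum_{i=1}^N B_i^2}$ are equal in distribution. The final step is then the elementary fact that if two random variables have the same law, then $\mathbb{E}[f(\cdot)]$ agrees for every measurable $f$ for which the expectation is defined; the ``sufficiently regular'' hypothesis on $f$ is exactly what guarantees this integrability, and the desired identity follows.

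There is essentially no hard part here: the only point requiring a word of care is the degenerate case $\sum_{i=1}^N B_i^2 = 0$ (all $B_i = 0$), in which $S \equiv 0$ and the right-hand side collapses to $f(A)$ as well, so the identity holds trivially. One could alternatively bypass characteristic functions and proceed by induction on $N$ using the convolution rule $\mathcal{N}(0,a^2)*\mathcal{N}(0,b^2)=\mathcal{N}(0,a^2+b^2)$, but the characteristic-function computation above is the most transparent and immediately yields the full sum at once.
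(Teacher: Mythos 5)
Your proof is correct and complete: the characteristic-function computation establishes that $\sum_i x_i B_i$ is distributed as $\mathcal{N}\bigl(0,\sum_i B_i^2\bigr)$, and equality in law immediately gives equality of the two expectations for any $f$ for which they are defined. The paper itself states this lemma without proof, treating it as the standard stability property of Gaussians under linear combinations, so there is no argument in the text to compare against; your write-up simply supplies the routine justification the authors omitted, and your remarks on the degenerate case $\sum_i B_i^2=0$ and on the alternative inductive convolution argument are both accurate.
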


\subsubsection*{Supervised setting} \label{appsubsec:largeSampleSup}
We will now work in the supervised setting, while the unsupervised setting will follow the same path.

\noindent
Setting
 \begin{align}
     A &= \dfrac{\a}{\theta}\sqrt{(1+\rho_1)(1+\rho_2)}\bar{n}^\tau_{\eta^1} + \dfrac{\b}{\alpha}\sqrt{(1+\rho_1)(1+\rho_3)}\bar{n}^\phi_{\chi^1}\,,
     \\
     B &= \sqrt{\dfrac{\gamma\beta \bar{p}^z}{2}}
 \end{align}
and exploiting \eqref{eq:gauss_approx_M_large} in the first equation of \eqref{eq:self_eq_sup_low_M} we get
\begin{equation}
    \begin{array}{lllll}
     
     \bar{n}^\sigma_{\xi^1} &=& \mathbb{E}_{X^{\sigma},\xi,\mathcal{G}}\left\{\dfrac{1}{1+\rho_1}\xi^1(1 + \mathcal{G}^1\sqrt{\rho_1})\tanh{\left[\dfrac{\beta}{1+\rho_1}\xi^1(1 + \mathcal{G}^1\sqrt{\rho_1})A + X^{\sigma} B\right]}\right\}
     \\\\ \noalign{\vspace{-10pt}}
     &=& \dfrac{1}{1 + \rho_1}\mathbb{E}_{X^{\sigma},\xi,\mathcal{G}}\left\{\xi^1\tanh{\left[\dfrac{\beta}{1+\rho_1}\xi^1(1 + \mathcal{G}^1\sqrt{\rho_1})A + X^{\sigma} B\right]}\right\} +
     \\\\ \noalign{\vspace{-10pt}}
     &&+\dfrac{\sqrt{\rho_1}}{1 + \rho_1}\mathbb{E}_{X^{\sigma},\xi,\mathcal{G}}\left\{\xi^1\mathcal{G}^1\tanh{\left[\dfrac{\beta}{1+\rho_1}\xi^1A + \dfrac{\xi^1\mathcal{G}^1\sqrt{\rho_1}A\beta}{1+\rho_1} + X^{\sigma} B\right]}\right\}
     \\\\ \noalign{\vspace{-10pt}}
     &=& \dfrac{\bar{m}^\sigma_{\xi^1}}{1 + \rho_1}+\dfrac{\sqrt{\rho_1}}{1 + \rho_1}\mathbb{E}_{X^{\sigma},\xi,\mathcal{G}}\left\{\xi^1\mathcal{G}^1\tanh{\left[\dfrac{\beta}{1+\rho_1}\xi^1A + \dfrac{\xi^1\mathcal{G}^1\sqrt{\rho_1}A\beta}{1+\rho_1} + X^{\sigma} B\right]}\right\}
     \end{array}
 \end{equation}
 where the first term is clearly $m^{\sigma}_{\xi^1}$ of Eq.~\eqref{mattisSelf}.
\\Now exploiting on the r.h.s. of the last line of the previous equation the Wick's Theorem over $\mathcal{G}^1$ we get
\begin{equation}
    \begin{array}{lllll}
     
     \bar{n}^\sigma_{\xi^1} &=&  \dfrac{\bar{m}^\sigma_{\xi^1}}{1 + \rho_1}+\dfrac{\beta A \rho_1}{(1 + \rho_1)^2}\mathbb{E}_{X^{\sigma}, \xi}\left\{1 - \tanh^2{\left[\dfrac{\beta}{1+\rho_1}\xi^1A + \dfrac{\xi^1\mathcal{G}^1\sqrt{\rho_1}A\beta}{1+\rho_1} + X^{\sigma} B\right]}\right\}.
     \end{array}
 \end{equation}
As $\mathcal{G}^1$ and $X^\sigma$ are both i.i.d. Normal Gaussian variable we can further simplify the previous expression using Lemma~\ref{lemmasumg} getting
\begin{equation}
    \begin{array}{lllll}
     
     \bar{n}^\sigma_{\xi^1} &=&  \dfrac{\bar{m}^\sigma_{\xi^1}}{1 + \rho_1}+\dfrac{\beta A \rho_1}{(1 + \rho_1)^2}\mathbb{E}_{z, \xi}\left\{1 - \tanh^2{\left[\dfrac{\beta}{1+\rho_1}\xi^1A +z\xi^1\sqrt{\dfrac{\rho_1\beta^2A^2}{(1+\rho_1)^2} +  B^2} \right]}\right\}.
     \end{array}
 \end{equation}
Now, by comparison with \eqref{eq:self_eq_sup_low_M} the self equation for $\bar{n}^\sigma_{\xi^1}$ can be rewritten as
{\small
\begin{equation}
    \begin{array}{lllll}
     \label{eq:nsigma} \bar{n}^\sigma_{\xi^1} &=&  \dfrac{\bar{m}^\sigma_{\xi^1}}{1 + \rho_1}+\dfrac{\beta  \rho_1}{(1 + \rho_1)^2}\left(\dfrac{\a}{\theta}\sqrt{(1+\rho_1)(1+\rho_2)}\bar{n}^\tau_{\eta^1} + \dfrac{\b}{\alpha}\sqrt{(1+\rho_1)(1+\rho_3)}\bar{n}^\phi_{\chi^1}\right)\times\left(1 - \qsigma\right).
     \end{array}
 \end{equation}
 }
Following the same steps it is possible to rewrite also the other two example-Mattis magnetization self equations
{\small
\begin{equation}
    \begin{array}{lllll}
 \label{eq:ntau}    
\bar{n}^\tau_{\eta^1} =  \dfrac{\bar{m}^\tau_{\eta^1}}{1 + \rho_2}+\dfrac{\beta  \rho_2}{(1 + \rho_2)^2}\theta\left(\a\sqrt{(1+\rho_1)(1+\rho_2)}\bar{n}^\sigma_{\xi^1} + \dfrac{\c}{\alpha}\sqrt{(1+\rho_2)(1+\rho_3)}\bar{n}^\phi_{\chi^1}\right)\times\left(1 - \qtau\right),
     \end{array}
 \end{equation}
\begin{equation}
    \begin{array}{lllll}
    \label{eq:nphi}
\bar{n}^\phi_{\chi^1} =  \dfrac{\bar{m}^\phi_{\chi^1}}{1 + \rho_3}+\dfrac{\beta  \rho_3}{(1 + \rho_3)^2}\alpha\left(\b\sqrt{(1+\rho_1)(1+\rho_3)}\bar{n}^\sigma_{\xi^1} + \dfrac{\c}{\theta}\sqrt{(1+\rho_3)(1+\rho_2)}\bar{n}^\tau_{\eta^1}\right)\times\left(1 - \qphi\right).
     \end{array}
\end{equation}
}
Resolve \eqref{eq:nsigma}, \eqref{eq:ntau} and \eqref{eq:nphi} with respect to $(\bar{n}^\sigma_{\xi^1},\bar{n}^\tau_{\eta^1},\bar{n}^\phi_{\chi^1})$, we obtain
\begin{equation}
\label{eq:n_large_M}
    \begin{array}{lll}
         \bar{n}^\sigma_{\xi^1}=\dfrac{A_1 + A_2 B_1 + A_3 C_1 + A_2 B_3 C_1 + A_3 B_1 C_3 - A_1 B_3 C_3}{1 - A_2 B_2 - A_3 C_2 - A_2 B_3 C_2 - A_3 B_2 C_3 - B_3 C_3},
         \\\\ \noalign{\vspace{-10pt}}
         \bar{n}^\tau_{\eta^1}=\dfrac{B_1 + A_1 B_2 + A_3 B_2 C_1 + B_3 C_1 - A_3 B_1 C_2 + A_1 B_3 C_2}{1 - A_2 B_2 - A_3 C_2 - A_2 B_3 C_2 - A_3 B_2 C_3 - B_3 C_3},
         \\\\ \noalign{\vspace{-10pt}}
         \bar{n}^\phi_{\chi^1}=\dfrac{C_1 - A_2 B_2 C_1 + A_1 C_2 + A_2 B_1 C_2 + B_1 C_3 + A_1 B_2 C_3}{1 - A_2 B_2 - A_3 C_2 - A_2 B_3 C_2 - A_3 B_2 C_3 - B_3 C_3},
    \end{array}
\end{equation}
where we define
\begin{equation}
    \begin{array}{lllll}
         A_1=\dfrac{\bar{m}_{\xi^1}^\sigma}{(1 + \rho_1)},&A_2=\dfrac{\beta \rho_1 }{(1 + \rho_1)^2} \tilde A_2 (1 - \qsigma),&A_3=\dfrac{\beta \rho_1}{(1 + \rho_1)^2} \tilde A_3 (1 - \qsigma),
         \\\\ \noalign{\vspace{-10pt}}
         B_1=\dfrac{\bar{m}_{\eta^1}^\tau}{(1 + \rho_3)},&B_2=\dfrac{\beta \rho_2 \theta}{(1 + \rho_2)^2} \tilde B_2 (1 - \qtau),&B_3=\dfrac{\beta \rho_2 \theta}{(1 + \rho_2)^2} \tilde B_3 (1 - \qtau),
         \\\\ \noalign{\vspace{-10pt}}
         C_1=\dfrac{\bar{m}_{\chi^1}^\phi}{(1 + \rho_3)},&C_2=\dfrac{\beta \rho_3 \alpha}{(1 + \rho_3)^2} \tilde C_2 (1 - \qphi),&C_3=\dfrac{\beta \rho_3 \alpha}{(1 + \rho_3)^2} \tilde C_3 (1 - \qphi),
    \end{array}
\end{equation}
where
\begin{equation}
    \begin{array}{lllll}
         \tilde A_2=\dfrac{\a}{\theta} \sqrt{(1 + \rho_1) (1 + \rho_2)} ,&\tilde A_3=\dfrac{\b}{\alpha} \sqrt{(1 + \rho_1) (1 + \rho_3)} ,
         \\\\ \noalign{\vspace{-10pt}}
         \tilde B_2=\a \sqrt{(1 + \rho_1) (1 + \rho_2)}  ,&\tilde B_3=\dfrac{ \c}{\alpha} \sqrt{(1 + \rho_2) (1 + \rho_3)} , 
         \\\\ \noalign{\vspace{-10pt}}
         \tilde C_2=\b \sqrt{(1 + \rho_1) (1 + \rho_3)}  ,&\tilde C_3=\dfrac{\c}{\theta} \sqrt{(1 + \rho_2) (1 + \rho_3)}.
    \end{array}
\end{equation}
This transformation ensure us to express every self consistency equations as a function of only $(\bar{m}^\sigma_{\xi^1},\bar{m}^\tau_{\eta^1},\bar{m}^\phi_{\chi^1}, \qsigma,\qtau,\qphi)$, see \eqref{seldefinitive} and \eqref{seldefinitive1}.

\subsubsection*{Unsupervised setting} \label{appsubsec:largeSampleUnsup}
As already mentioned, we follow the same conceptual steps as for the supervised setting.
For simplicity, we call:

\begin{align}
	A &= \a \theta^{-1} \sqrt{(1+\rho_{1})(1+\rho_2)}  \bar n_{\eta^{1}}^\tau + \b\alpha^{-1} \sqrt{(1+\rho_{1})(1+\rho_3)}  \bar n_{\chi^{1}}^\phi,
	\\
	B &= \sqrt{\gamma \dfrac{1+\rho_{12}}{(1+\rho_{1})(1+\rho_2)} \qtau},
	\\
	C &= \sqrt{\gamma \dfrac{1+\rho_{13}}{(1+\rho_{1})(1+\rho_3)} \qphi}.
\end{align}
\noindent
So, using Lemma~\ref{lemmaLargeM}

\begin{equation}
	\begin{array}{lllll}
		
		\bar n_{\xi^{1}}^\sigma  &=  \dfrac{1}{\sqrt{1+\rho_1}}\mathbb{E}_{\varphi^{\sigma,\tau},\varphi^{\sigma,\phi},\xi,\mathcal{G}}\left\{\dfrac{1}{1+\rho_1}\xi^1(1 + \mathcal{G}^1\sqrt{\rho_1})\cdot\right.
		\\\\ \noalign{\vspace{-10pt}}
		& \quad\left.\cdot\tanh{\left[\dfrac{\beta}{1+\rho_1}\xi^1(1 + \mathcal{G}^1\sqrt{\rho_1})A + \a\beta B \varphi^{\sigma,\tau} + \b\beta C \varphi^{\sigma,\phi} \right]}\right\} =
		\\\\ \noalign{\vspace{-10pt}}
		&= \dfrac{1}{(1 + \rho_1)\sqrt{1+\rho_1}}\mathbb{E}_{\varphi^{\sigma,\tau},\varphi^{\sigma,\phi},\xi,\mathcal{G}}\left\{\xi^1\cdot\right.
		\\\\		 \noalign{\vspace{-10pt}}
		& \quad\left.\cdot\tanh{\left[\dfrac{\beta}{1+\rho_1}\xi^1(1 + \mathcal{G}^1\sqrt{\rho_1})A + \a\beta B \varphi^{\sigma,\tau} + \b\beta C \varphi^{\sigma,\phi}\right]}\right\} +
		\\\\ \noalign{\vspace{-10pt}}
		&+\dfrac{\sqrt{\rho_1}}{(1 + \rho_1)\sqrt{1+\rho_1}}\mathbb{E}_{\varphi^{\sigma,\tau},\varphi^{\sigma,\phi},\xi,\mathcal{G}}\left\{\xi^1\mathcal{G}^1\cdot\right.
		\\\\ \noalign{\vspace{-10pt}}
		& \quad\left.\cdot\tanh{\left[\dfrac{\beta}{1+\rho_1}\xi^1A + \dfrac{\xi^1\mathcal{G}^1\sqrt{\rho_1}A\beta}{1+\rho_1} + \a\beta B \varphi^{\sigma,\tau} + \b\beta C \varphi^{\sigma,\phi}\right]}\right\}.
	\end{array}
\end{equation}

\noindent
Regarding the second term of the equation, we can use Wick's theorem on the Gaussian variable $\mathcal{G}$.
\begin{equation}
	\begin{array}{lllll}
		\dfrac{\sqrt{\rho_1}}{(1 + \rho_1)\sqrt{1+\rho_1}}\mathbb{E}_{\varphi^{\sigma,\tau},\varphi^{\sigma,\phi},\xi,\mathcal{G}}\Bigg\{\xi^1\mathcal{G}^1\tanh\Bigg[\dfrac{\beta}{1+\rho_1}\xi^1A + \dfrac{\xi^1\mathcal{G}^1\sqrt{\rho_1}A\beta}{1+\rho_1} +
		\\\\ \noalign{\vspace{-10pt}}
		\qquad\qquad+ \a\beta B \varphi^{\sigma,\tau} + \b\beta C \varphi^{\sigma,\phi}\Bigg]\Bigg\}= 
		\\\\ \noalign{\vspace{-10pt}}
		=\dfrac{\rho_1}{(1 + \rho_1)^2\sqrt{1+\rho_1}}\beta A \left( 1- \qsigma\right).
	\end{array}    
\end{equation}

\noindent
Thus we can re-write the self equation for $ \bar n_{\Xi^{1,a}}^\sigma$ as follows:
\begin{equation}\label{eq:nsigma} 
		\begin{array}{lllll} 
	\bar n_{\xi^{1}}^\sigma = \dfrac{\bar{m}^\sigma_{\xi^1}}{1 + \rho_1}  + \dfrac{\beta\rho_1}{(1 + \rho_1)^2\sqrt{1+\rho_1}} \left( \a \theta^{-1} \sqrt{(1+\rho_{1})(1+\rho_2)}  \bar n_{\eta^{1}}^\tau \right. + 
	\\\\ \noalign{\vspace{-10pt}}
	\qquad\left.+\b\alpha^{-1} \sqrt{(1+\rho_{1})(1+\rho_3)}  \bar n_{\chi^{1}}^\phi\right) \left( 1- \qsigma\right).
	\end{array}
\end{equation}
\noindent
Following the same steps, it is also possible to rewrite the other two example-Mattis magnetization self-equations
\begin{equation}
	\begin{array}{lllll}
		\label{eq:ntau}    
		\bar{n}^\tau_{\eta^{1}} =  \dfrac{\bar{m}^\tau_{\eta^{1}}}{1 + \rho_2}+\dfrac{\beta  \rho_2}{(1 + \rho_2)^2\sqrt{1+\rho_2}}\theta\left(\a\sqrt{(1+\rho_1)(1+\rho_2)}\bar{n}^\sigma_{\Xi^{1}}  \right. + 
		\\\\ \noalign{\vspace{-10pt}}
		\qquad\left.+ \c \alpha^{-1} \sqrt{(1+\rho_2)(1+\rho_3)}\bar{n}^\phi_{\chi^{1}}\right)\left(1 - \qtau\right),
	\end{array}
\end{equation}

\begin{equation}
	\begin{array}{lllll}
		\label{eq:nphi}
		\bar{n}^\phi_{\chi^{1}} = \dfrac{\bar{m}^\phi_{\chi^1}}{1 + \rho_3}+\dfrac{\beta  \rho_3}{(1 + \rho_3)^2\sqrt{1+\rho_3}}\alpha\left(\b\sqrt{(1+\rho_1)(1+\rho_3)}\bar{n}^\sigma_{\Xi^{1}}  \right. + 
		\\\\ \noalign{\vspace{-10pt}}
		\qquad\left.+ \c\theta^{-1} \sqrt{(1+\rho_2)(1+\rho_3)}\bar{n}^\tau_{\eta^{1}}\right)\left(1 - \qphi\right).
	\end{array}
\end{equation}

\vspace{3mm}
\noindent
In accordance with the Lemma~\ref{lemmasumg} and using the oddness of the hyperbolic tangent and the evenness of the Gaussians, we can write the self-equation of $\bar{m}^\sigma_{\xi^1}$ as follows:
{\small
\begin{equation}
	\bar{m}^\sigma_{\xi^1} = \dfrac{1}{\sqrt{1+\rho_1}}\mathbb{E}_{w}\left\{\tanh{\left[   \dfrac{\beta}{1+\rho_1}A +\beta w \sqrt{ \dfrac{\rho_1}{(1+\rho_1)^2}A^2 + \a^2 B^2 + \b^2 C^2  } \right]}\right\} \,.
\end{equation}}

Resolve \eqref{eq:nsigma}, \eqref{eq:ntau} and \eqref{eq:nphi} with respect to $(\bar n_{\xi^{1}}^\sigma,\bar{n}^\tau_{\eta^1},\bar{n}^\phi_{\chi^1})$ we get
\begin{equation}
	\label{eq:n_large_M}
	\begin{array}{lll}
		\bar n_{\xi^{1}}^\sigma=\dfrac{A_1 + A_2 B_1 + A_3 C_1 + A_2 B_3 C_1 + A_3 B_1 C_3 - A_1 B_3 C_3}{1 - A_2 B_2 - A_3 C_2 - A_2 B_3 C_2 - A_3 B_2 C_3 - B_3 C_3},
		\\\\ \noalign{\vspace{-10pt}}
		\bar{n}^\tau_{\eta^1}=\dfrac{B_1 + A_1 B_2 + A_3 B_2 C_1 + B_3 C_1 - A_3 B_1 C_2 + A_1 B_3 C_2}{1 - A_2 B_2 - A_3 C_2 - A_2 B_3 C_2 - A_3 B_2 C_3 - B_3 C_3},
		\\\\ \noalign{\vspace{-10pt}}
		\bar{n}^\phi_{\chi^1}=\dfrac{C_1 - A_2 B_2 C_1 + A_1 C_2 + A_2 B_1 C_2 + B_1 C_3 + A_1 B_2 C_3}{1 - A_2 B_2 - A_3 C_2 - A_2 B_3 C_2 - A_3 B_2 C_3 - B_3 C_3},
	\end{array}
\end{equation}

where we use
{\small
\begin{equation}
	\begin{array}{lllll}
		A_1=\dfrac{\bar{m}_{\xi^1}^\sigma}{1 + \rho_1},&A_2=\dfrac{\beta \rho_1 }{(1 + \rho_1)^2\sqrt{1+\rho_1}} \tilde A_2 (1 - \qsigma),&A_3=\dfrac{\beta \rho_1}{(1 + \rho_1)^2\sqrt{1+\rho_1}} \tilde A_3 (1 - \qsigma),
		\\\\ \noalign{\vspace{-10pt}}
		B_1=\dfrac{\bar{m}_{\eta^1}^\tau}{1 + \rho_2},&B_2=\dfrac{\beta \rho_2 \theta}{(1 + \rho_2)^2\sqrt{1+\rho_2}} \tilde B_2 (1 - \qtau),&B_3=\dfrac{\beta \rho_2 \theta}{(1 + \rho_2)^2\sqrt{1+\rho_2}} \tilde B_3 (1 - \qtau),
		\\\\ \noalign{\vspace{-10pt}}
		C_1=\dfrac{\bar{m}_{\chi^1}^\phi}{1 + \rho_3},&C_2=\dfrac{\beta \rho_3 \alpha}{(1 + \rho_3)^2\sqrt{1+\rho_3}} \tilde C_2 (1 - \qphi),&C_3=\dfrac{\beta \rho_3 \alpha}{(1 + \rho_3)^2\sqrt{1+\rho_3}} \tilde C_3 (1 - \qphi),
	\end{array}
\end{equation}}

where
\begin{equation}
	\begin{array}{lllll}
		\tilde A_2=\dfrac{\a}{\theta} \sqrt{(1 + \rho_1) (1 + \rho_2)} ,&\tilde A_3=\dfrac{\b}{\alpha} \sqrt{(1 + \rho_1) (1 + \rho_3)} ,
		\\\\ \noalign{\vspace{-10pt}}
		\tilde B_2=\a \sqrt{(1 + \rho_1) (1 + \rho_2)}  ,&\tilde B_3=\dfrac{ \c}{\alpha} \sqrt{(1 + \rho_2) (1 + \rho_3)} , 
		\\\\ \noalign{\vspace{-10pt}}
		\tilde C_2=\b \sqrt{(1 + \rho_1) (1 + \rho_3)}  ,&\tilde C_3=\dfrac{\c}{\theta} \sqrt{(1 + \rho_2) (1 + \rho_3)} ,
	\end{array}
\end{equation}

and
\begin{equation}
	\begin{array}{lllll}
		D_1=\sqrt{\gamma\dfrac{1+\rho_{12}}{(1+\rho_1)(1+\rho_2)}},\\\\ \noalign{\vspace{-10pt}}
		D_2=\sqrt{\gamma\dfrac{1+\rho_{13}}{(1+\rho_1)(1+\rho_3)}},\\\\ \noalign{\vspace{-10pt}}
		D_3=\sqrt{\gamma\dfrac{1+\rho_{23}}{(1+\rho_2)  (1+\rho_3)}}.
		
	\end{array}
\end{equation}
This transformation ensures us to express every self-consistency equation as a function of only $(\bar{m}^\sigma_{\xi^1},\bar{m}^\tau_{\eta^1},\bar{m}^\phi_{\chi^1}, \qsigma,\qtau,\qphi)$. Putting all together, the set of 6 self-equations becomes Eqs.~\eqref{eq:selfMagnSigma}-\eqref{eq:selfQPhi}.

\newpage
\section{Evaluation of Momenta of the Effective Post-synaptic Potential} \label{appsec:evaluation}
In this appendix we calculate the first and second moments of Subsec.~\ref{subsec:montecarlo}, with respect to the expectation for the general pattern $\bm{x}$ where we define $\mathbb{E}=\mathbb{E}_{\bm x}\mathbb{E}_{\tilde{\bm x}|\bm x}$.

\subsection*{Supervised setting} \label{appsubsec:evaluationSupSett}

Let's begin the analysis for the Supervised scheme\footnote{
For the sake of brevity, we will present the calculations for the first field only.}, setting:
\begin{equation}
\begin{array}{lll}
    C_{12}= \dfrac{\a}{\sqrt{N_1 N_2}M_1 M_2 r_1 r_2\sqrt{(1+\rho_1)(1+\rho_2)}},
    \\\\ \noalign{\vspace{-10pt}} 
    C_{13}= \dfrac{\a}{\sqrt{N_1 N_3}M_1 M_3 r_1 r_3\sqrt{(1+\rho_1)(1+\rho_3)}}.
    \end{array}
\end{equation}
Using the relation:
\begin{equation}
    \mathbb{E}\Bigg[\Xi_i^{1,a_1} \xi^1_i\Bigg]= \mathbb{E}_{\bm\xi}\left\{\mathbb{E}_{\bm\Xi|\bm \xi}\Bigg[\Xi_i^{1,a_1}\Bigg] \xi^1_i\right\}= \mathbb{E}_{\bm\xi}\left\{r\xi_i^{1}\xi^1_i\right\} = r,
\end{equation}
we obtain
\begin{equation}
\label{eq:m1sup}
    \begin{array}{lll}
         \mu_1=\mathbb{E}\Big[h^{\xi}_{i}\xi^1\Big] &=& C_{12}\SOMMA{j=1}{N_2}\SOMMA{a_1,a_2=1}{M_1,M_2}\mathbb{E}\Bigg[\Theta_j^{1,a_2} \eta_j^1 \Xi_i^{1,a_1} \xi^1_i\Bigg]
         \\\\ \noalign{\vspace{-10pt}}
         &&+C_{13}\SOMMA{a_1,a_3=1}{M_1,M_3}\SOMMA{j=1}{N_3}\mathbb{E}\Bigg[\Upsilon_j^{1,a_3}\chi_j^1\Xi_i^{1,a_1} \xi^1_i \Bigg]
         \\\\  \noalign{\vspace{-10pt}}
         &=& 
         C_{12}M_1 M_2 r_1 r_2 N_2+C_{13}M_1 M_3 r_1 r_3 N_3
         \\\\ \noalign{\vspace{-10pt}}
         &=&
         \dfrac{\a\theta^{-1}}{\sqrt{(1 + \rho_1)(1 + \rho_2)}} + \dfrac{\a\alpha^{-1}}{\sqrt{(1 + \rho_1)(1 + \rho_3)}}.
    \end{array}
\end{equation}
Regarding the second moment, we obtain:
\begin{equation}
    \begin{array}{lll}
         \mu_2=\mathbb{E}\Big[(h^{\xi}_{i})^2\Big] &=& C_{12}^2\SOMMA{\mu=1}{K}\SOMMA{j,k=1}{N_2}\SOMMA{a_1^1,a_2^1=1}{M_1,M_2}\SOMMA{a_1^2,a_2^2=1}{M_1,M_2}\mathbb{E}\Bigg[\Theta_j^{\mu,a_2^1}\Theta_k^{\mu,a_2^2} \eta_j^1\eta_k^1 \Xi_i^{\mu,a_1^1}\Xi_i^{\mu,a_1^2} \Bigg]+
         \\\\ \noalign{\vspace{-10pt}}
         &&+C_{13}^2\SOMMA{\mu=1}{K}\SOMMA{a_1^1,a_3^1=1}{M_1,M_3}\SOMMA{a_1^2,a_3^2=1}{M_1,M_3}\SOMMA{j,k=1}{N_3}\mathbb{E}\Bigg[\Upsilon_j^{\mu,a_3^1}\Upsilon_k^{\mu,a_3^2}\chi_k^1\chi_j^1\Xi_i^{\mu,a_1^1} \Xi_i^{\mu,a_1^2} \Bigg]
         \\\\ \noalign{\vspace{-10pt}}
         &&+2 C_{12}C_{13}\SOMMA{\mu=1}{K}\SOMMA{j,k=1}{N_2, N_3}\SOMMA{a_1^1,a_2=1}{M_1,M_2}\SOMMA{a_1^2,a_3=1}{M_1,M_3}\mathbb{E}\Bigg[\Upsilon_j^{\mu,a_2} \eta_j^1 \Upsilon_k^{\mu,a_3}\chi_k^1\Xi_i^{\mu,a_1^1} \Xi_i^{\mu,a_1^2} \Bigg].
    \end{array}
\end{equation}
Let us start studying the case $\mu=1$:
\begin{equation}
    \begin{array}{lll}
         \mu_2=\mathbb{E}\Big[(h^{\xi}_{i})^2\Big] &=& C_{12}^2\SOMMA{j,k=1}{N_2}\SOMMA{a_1^1,a_2^1=1}{M_1,M_2}\SOMMA{a_1^2,a_2^2=1}{M_1,M_2}\mathbb{E}\Bigg[\Theta_j^{1,a_2^1}\Theta_k^{1,a_2^3} \eta_j^1\eta_k^1 \Xi_i^{1,a_1^1}\Xi_i^{1,a_1^2} \Bigg]
         \\\\ \noalign{\vspace{-10pt}}
         &&+C_{13}^2\SOMMA{a_1^1,a_3^1=1}{M_1,M_3}\SOMMA{a_1^2,a_3^2=1}{M_1,M_3}\SOMMA{j,k=1}{N_3}\mathbb{E}\Bigg[\Upsilon_j^{1,a_3^1}\Upsilon_k^{1,a_3^2}\chi_k^1\chi_j^1\Xi_i^{1,a_1^1} \tilde\xi_i^{1,a_1^2} \Bigg]
         \\\\ \noalign{\vspace{-10pt}}
         && +2 C_{12}C_{13}\SOMMA{j,k=1}{N_2, N_3}\SOMMA{a_1^1,a_2=1}{M_1,M_2}\SOMMA{a_1^2,a_3=1}{M_1,M_3}\mathbb{E}\Bigg[\Theta_j^{1,a_2} \eta_j^1 \Upsilon_k^{1,a_3}\chi_k^1\Xi_i^{1,a_1^1} \Xi_i^{1,a_1^2} \Bigg].
    \end{array}
\end{equation}
Now, noting that the following holds
\begin{equation}
\begin{array}{lll}
     \SOMMA{a_1^1=1}{M_1}\SOMMA{a_1^2=1}{M_1}\mathbb{E}\Bigg[ \Xi_i^{1,a_1^1}\Xi_i^{1,a_1^2} \Bigg]&=&\SOMMA{a_1^1=1}{M_1}\mathbb{E}\Bigg[ (\Xi_i^{1,a_1^1})^2 \Bigg]+\SOMMA{a_1^1=1}{M_1}\SOMMA{a_1^2=1, (a_1^2\neq a_1^1)}{M_1}\mathbb{E}\Bigg[ \Xi_i^{1,a_1^1}\Xi_i^{1,a_1^2} \Bigg]
     \\\\ \noalign{\vspace{-10pt}}
     &=&M_1+M_1(M_1-1)r_1^2 = M_1^2 r_1^2\left(1+\rho_1\right),
\end{array}
\end{equation}
\noindent
and
\vspace{3mm}

\begin{equation}
    \SOMMA{a_2^1=1}{M_2}\SOMMA{a_2^2=1}{M_2}\mathbb{E}\Bigg[ \Xi_j^{1,a_2^1}\Xi_k^{1,a_2^2} \Biggl]=\SOMMA{a_2^1=1}{M_2}\mathbb{E}\Bigg[ \Xi_j^{1,a_2^1} \Bigg]\SOMMA{a_2^2=1}{M_2}\mathbb{E}\Bigg[\Xi_k^{1,a_2^2} \Bigg] = M_2^2 r_2^2
\end{equation}
we can compute
\begin{equation}
    \begin{array}{lll}
         \mu_2=\mathbb{E}\Big[(h^{\xi}_{i})^2\Big] &=& C_{12}^2 N_2^2 r_1^2M_2^2r_2^2(1+\rho_1)
         \\\\ \noalign{\vspace{-10pt}}
         &&+C_{13}^2 N_3^2 r_1^2M_3^2r_3^2(1+\rho_1)
         \\\\ \noalign{\vspace{-10pt}}
         && +2 C_{12}C_{13}N_2 N_3 r_1^2 M_1^2 r_2 r_3 M_2 M_3 (1+\rho_1)
         \\\\ \noalign{\vspace{-10pt}}
         &=& (1+\rho_1) M_1^2 r_1^2 (C_{12}r_2 M_2+C_{13} r_3 M_3)^2 = (1+\rho_1) \mu_1^2.
    \end{array}
\end{equation}

\vspace{3mm}
\noindent
For the case $\mu>1$, we need to check $j=k$ (otherwise, the result is zero due to orthogonality):
\begin{equation}
\label{eq:m2sup}
    \begin{array}{lll}
         \mu_2=\mathbb{E}\Big[(h^{\xi}_{i})^2\Big] &=& C_{12}^2\SOMMA{\mu>1}{K}\SOMMA{j=1}{N_2}\SOMMA{a_1^1,a_2^1=1}{M_1,M_2}\SOMMA{a_1^2,a_2^2=1}{M_1,M_2}\mathbb{E}\Bigg[\Theta_j^{\mu,a_2^1}\Theta_j^{\mu,a_2^3} \eta_j^1\eta_j^1 \Xi_i^{\mu,a_1^1}\Xi_i^{\mu,a_1^2} \Bigg]
         \\\\ \noalign{\vspace{-10pt}}
         &&+C_{13}^2 \SOMMA{\mu>1}{K}\SOMMA{a_1^1,a_3^1=1}{M_1,M_3}\SOMMA{a_1^2,a_3^2=1}{M_1,M_3}\SOMMA{j=1}{N_3}\mathbb{E}\Bigg[\Upsilon_j^{\mu,a_3^1}\Upsilon_j^{\mu,a_3^2}\chi_j^1\chi_j^1\Upsilon_i^{\mu,a_1^1} \Xi_i^{\mu,a_1^2} \Bigg]
         \\\\ \noalign{\vspace{-10pt}}
         &=&M_1^2 r_1^2(1+\rho_1)\left[ C_{12}^2 K N_2  M_2^2  r_2^2(1+\rho_2)+C_{13}^2 K N_3  M_3^2  r_3^2(1+\rho_3)\right]
    \end{array}
\end{equation}
Finally, using the definitions of $C_{12}$ and $C_{13}$, we obtain:
\begin{equation}
    \mu_2-\mu_1^2 = \rho_1 \mu_1^2 +\dfrac{K}{N_1}\left[ \a^2 +\b^2\right].
\end{equation}

\subsection*{Unsupervised setting} \label{appsubsec:evaluationUnsupSett}
We begin by calculating the first two moments of the external field, which will then be useful to approximate the magnetization of the $\bm\sigma$ layer.
We start by computing $\mu_1$:
\begin{align}
\mu_1= \mathbb{E}\Bigg[\SOMMA{\mu=1}{K}\Bigg(\dfrac{\a}{r_1r_2\sqrt{N_1 N_2(1+\rho_1)(1+\rho_2)}\,M}\SOMMA{j, a=1}{N_2,M}\Xi_i^{\mu,a}\Theta_j^{\mu,a}\eta_j^1\xi^1+ \nonumber
\\ \noalign{\vspace{-0pt}}
\qquad+\dfrac{\b}{r_1r_3\sqrt{N_1 N_3(1+\rho_1)(1+\rho_3)}\,M}\SOMMA{k,a=1}{N_3,M}\Xi_i^{\mu,a}\Upsilon_k^{\mu,a}\chi_k^1\xi^1\Bigg)\Bigg]
\end{align}

\noindent
We can define
\begin{align}
	&C_{12}= \dfrac{\a}{\sqrt{N_1 N_2}M r_1 r_2\sqrt{(1+\rho_1)(1+\rho_2)}},\\ 
	&C_{13}= \dfrac{\b}{\sqrt{N_1 N_3}M r_1 r_3\sqrt{(1+\rho_1)(1+\rho_3)}}.
\end{align}

\noindent
The expression of $\mu_1$ can then be divided into four terms, the first two for $\mu = 1$ and the last two for $\mu > 1$:
\begin{align}
	\mu_1 &= \mathbb{E}\left[C_{12}\SOMMA{j, a=1}{N_2,M}\Xi_i^{1,a}\Theta_j^{1,a}\eta_j^1\xi^1\right] 
	+ \mathbb{E}\left[ C_{13}\SOMMA{k,a=1}{N_3,M}\Xi_i^{1,a}\Upsilon_k^{1,a}\chi_k^1\xi^1 \right]
	\\ 
	&+\mathbb{E}\left[\SOMMA{\mu>1}{K}\left(C_{12}\SOMMA{j, a=1}{N_2,M}\Xi_i^{\mu,a}\Theta_j^{\mu,a}\eta_j^1\xi^1\right)\right]
	+ \mathbb{E}\left[\SOMMA{\mu>1}{K}\left(  C_{13}\SOMMA{k,a=1}{N_3,M}\Xi_i^{\mu,a}\Upsilon_k^{\mu,a}\chi_k^1\xi^1 \right)\right]
\end{align}
where the last two term are zero given that $\mathbb{E}\left[{{\xi}}_{k}^{\mu}\eta_{k}^{\nu}\right]=\delta^{\mu\nu}$ and $\mathbb{E}\left[{{\xi}}_{k}^{\mu}\chi_{k}^{\nu}\right]=\delta^{\mu\nu}$.

\noindent
$\mu_1$ then reads as:
\begin{equation}\label{eq:mom1Campo}
	\mu_1 = \dfrac{\a\theta^{-1}}{\sqrt{(1+\rho_1)(1+\rho_2)}}+\dfrac{\b\alpha^{-1}}{\sqrt{(1+\rho_1)(1+\rho_3)}}
\end{equation}

\vspace{6mm}
\noindent
The following four relationships will be useful in calculating the second moment:
\begin{equation}
	\mathbb{E}\Bigg[\Xi_i^{1,a_1} \xi^1_i\Bigg]= \mathbb{E}_{\bm\xi}\left\{\mathbb{E}_{\tilde{\bm\xi}|\bm \xi}\Bigg[\Xi_i^{1,a_1}\Bigg] \xi^1_i\right\}= \mathbb{E}_{\bm\xi}\left\{r\xi_i^{1}\xi^1_i\right\} = r;
\end{equation}

\begin{equation}
	\begin{array}{lll}
		\SOMMA{a_1=1}{M}\SOMMA{a_2=1}{M}\mathbb{E}\Bigg[ \Xi_i^{1,a_1}\Xi_i^{1,a_2} \Bigg]&=&\SOMMA{a_1=1}{M}\mathbb{E}\Bigg[ (\Xi_i^{1,a_1})^2 \Bigg]+\SOMMA{a_1=1}{M}\SOMMA{a_2=1, (a_1\neq a_2)}{M}\mathbb{E}\Bigg[ \Xi_i^{1,a_1}\Xi_i^{1,a_2} \Bigg]
		\\\\ \noalign{\vspace{-10pt}}
		&=&M+M(M-1)r_1^2 = M^2 r_1^2\left(1+\rho_1\right);
	\end{array}
\end{equation}
\begin{equation}
    \SOMMA{a_1=1}{M}\SOMMA{a_2=1}{M}\mathbb{E}\Bigg[ \Xi_j^{1,a_1}\Xi_k^{1,a_2} \Bigg]=\SOMMA{a_1=1}{M}\mathbb{E}\Bigg[ \Xi_j^{1,a_1} \Bigg]\SOMMA{a_2=1}{M}\mathbb{E}\Bigg[\Xi_k^{1,a_2} \Bigg] = M^2 r_2^2;
\end{equation}

\begin{equation}
	\mathbb{E}\left[ \left( \SOMMA{a_1,a_2=1}{M,M} \Xi_i^{\mu,a_1} \Theta_j^{\mu,a_2} \right)^2\right] = M^2r_1^2r_2^2(1+\rho_{12}).
\end{equation}

\noindent
So, we can now calculate the second moment, obtaining:

\begin{equation}
	\begin{array}{lll}
		\mu_2=\mathbb{E}\Big[(h^{\sigma}_{i})^2\Big] &= C_{12}^2\SOMMA{\mu=1}{K}\SOMMA{j,k=1}{N_2,N_2}\SOMMA{a_1,a_2=1}{M,M}\mathbb{E}\Bigg[\Theta_j^{\mu,a_1}\Theta_k^{\mu,a_2} \eta_j^1\eta_k^1 \Xi_i^{\mu,a_1}\Xi_i^{\mu,a_2} \Bigg]
		\\\\ \noalign{\vspace{-10pt}} 
		&\quad+C_{13}^2\SOMMA{\mu=1}{K}\SOMMA{j,k=1}{N_3,N_3}\SOMMA{a_1,a_3=1}{M,M}\mathbb{E}\Bigg[\Upsilon_j^{\mu,a_1}\Upsilon_k^{\mu,a_3}\chi_k^1\chi_j^1\Xi_i^{\mu,a_1} \Xi_i^{\mu,a_3} \Bigg]
		\\\\ \noalign{\vspace{-10pt}}
		&\quad+2 C_{12}C_{13}\SOMMA{\mu=1}{K}\SOMMA{j,k=1}{N_2, N_3}\SOMMA{a_2,a_3=1}{M,M}\mathbb{E}\Bigg[\Theta_j^{\mu,a_2} \eta_j^1 \Upsilon_k^{\mu,a_3}\chi_k^1\Xi_i^{\mu,a_2} \Xi_i^{\mu,a_3} \Bigg] .
	\end{array}
\end{equation}

\noindent
Let's start studying the case $\mu=1$:
\begin{equation}
	\begin{array}{lll}
		\mu_2=\mathbb{E}\Big[(h^{\sigma}_{i})^2\Big] &= C_{12}^2\SOMMA{j,k=1}{N_2,N_2}\SOMMA{a_1,a_2=1}{M,M}\mathbb{E}\Bigg[\Theta_j^{1,a_1}\Theta_k^{1,a_2} \eta_j^1\eta_k^1 \Xi_i^{1,a_1}\Xi_i^{1,a_2} \Bigg]
		\\\\ \noalign{\vspace{-10pt}}
		&\quad+C_{13}^2\SOMMA{j,k=1}{N_3,N_3}\SOMMA{a_1,a_3=1}{M,M}\mathbb{E}\Bigg[\Upsilon_j^{1,a_1}\Upsilon_k^{1,a_3}\chi_k^1\chi_j^1\Xi_i^{1,a_1} \Xi_i^{1,a_3} \Bigg]
		\\\\ \noalign{\vspace{-10pt}}
		&\quad+2 C_{12}C_{13}\SOMMA{j,k=1}{N_2, N_3}\SOMMA{a_2,a_3=1}{M,M}\mathbb{E}\Bigg[\Theta_j^{1,a_2} \eta_j^1 \Upsilon_k^{1,a_3}\chi_k^1\Xi_i^{1,a_2} \Xi_i^{1,a_3} \Bigg]
        \\\\ \noalign{\vspace{-10pt}}
        &= C_{12}^2 N_2^2 r_2^2M^2r_1^2(1+\rho_1)
		\\\\ \noalign{\vspace{-10pt}}
		&\quad+C_{13}^2 N_3^2 r_3^2 M^2r_1^2(1+\rho_1)
		\\\\ \noalign{\vspace{-10pt}}
		&\quad +2 C_{12}C_{13}N_2 N_3 r_1^2 M^4 r_2 r_3  (1+\rho_1)
		\\\\ \noalign{\vspace{-10pt}}
		&= (1+\rho_1) M^2 r_1^2 (C_{12}r_2 M+C_{13} r_3 M)^2 = (1+\rho_1) \mu_1^2,
        
	\end{array}
\end{equation}
while for $\mu>1$ it must be $j=k$ (in other cases, we would obtain zero due to orthogonality):
\begin{equation}\label{eq:mom2Campo}
	\begin{array}{lll}
		\mu_2=\mathbb{E}\Big[(h^{\sigma}_{i})^2\Big] &= C_{12}^2\SOMMA{\mu>1}{K}\SOMMA{j=1}{N_2}\SOMMA{a_1,a_2=1}{M,M}\mathbb{E}\Bigg[\Theta_j^{\mu,a_1}\Theta_j^{\mu,a_2} \eta_j^1\eta_j^1 \Xi_i^{\mu,a_1}\Xi_i^{\mu,a_2} \Bigg]
		\\\\ \noalign{\vspace{-10pt}}
		&+C_{13}^2\SOMMA{\mu>1}{K}\SOMMA{j=1}{N_3}\SOMMA{a_1,a_3=1}{M,M}\mathbb{E}\Bigg[\Upsilon_j^{\mu,a_1}\Upsilon_j^{\mu,a_3}\chi_j^1\chi_j^1\Xi_i^{\mu,a_1} \Xi_i^{\mu,a_3} \Bigg]
		\\\\ \noalign{\vspace{-10pt}}
		&=C_{12}^2 K N_2 M^2r_1^2 r_2^2(1-\rho_{12})+C_{13}^2 K N_3 M^2r_1^2 r_3^2(1-\rho_{13})
		\\\\ \noalign{\vspace{-10pt}}
		&= K M^2 r_1^2 \left[ C_{12}^2N_2r_2^2(1-\rho_{12}) + C_{13}^2N_3r_3^2(1-\rho_{13}) \right].
	\end{array}
\end{equation}

\noindent
Putting it all together and using the definitions of $C_{12}$ and $C_{13}$
\begin{equation}
	\mu_2-\mu_1^2 = \rho_1 \mu_1^2 +\dfrac{K}{N_1(1+\rho_1)}\left[ \a^2\dfrac{1+\rho_{12}}{1+\rho_2} + \b^2\dfrac{1+\rho_{13}}{1+\rho_3}\right].
\end{equation}

\end{document}